\newcommand{\commentout}[1]{}
\newcommand{\nwc}{\newcommand}
\nwc{\nn}{\nonumber}
\nwc{\nwt}{\newtheorem}
\nwc{\FF}{\mathcal{F}}
\nwc{\PP}{\mathcal{P}}
\nwc{\xx}{\mathbf{x}}
\nwc{\CC}{\mathbb{C}}
\nwc{\ZZ}{\mathbb{Z}}
\nwc{\RR}{\mathbb{R}}
\nwc{\bk}{\mathbf{k}}
\nwc{\bz}{\mathbf{z}}
\nwc{\bt}{\mathbf{t}}
\nwc{\bom}{\boldsymbol\omega}
\nwc{\bn}{\mathbf{n}}
\nwc{\bN}{\mathbf{N}}
\nwc{\PO}{\mathcal{P}_o}
\nwc{\PF}{\mathcal{P}_f}
\nwc{\QO}{\mathcal{Q}_o}
\nwc{\QF}{\mathcal{Q}_f}
\nwc{\PT}{\mathcal{T}}
\nwc{\real}{\text{re}}
\nwc{\imag}{\text{im}}
\nwc{\ep}{\epsilon}
\nwc{\vep}{\varepsilon}
\nwc{\tvep}{\tilde{\vep}}
\nwc{\mf}{\mathbf}
\nwc{\mb}{\mathbf}
\nwc{\ml}{\mathcal}
\nwc{\bj}{{\mb j}}
\nwc{\bA}{{\mb \Phi}}
\nwc{\IA}{\mathbb{A}} %algebraic
\nwc{\bi}{\mathbf i}
\nwc{\bo}{\mathbf o}
\nwc{\IS}{\mathbb{S}}
\nwc{\IC}{\mathbb{C}} %complex
\nwc{\ID}{\mathbb{D}} %Dedekind
\nwc{\IM}{\mathbb{M}} %Dedekind
\nwc{\IP}{\mathbb{P}} %Dedekind
\nwc{\bI}{\mathbf{I}} %Dedekind
\nwc{\IE}{\mathbb{E}} %Euklides
\nwc{\IF}{\mathbb{F}} %finite field
\nwc{\IG}{\mathbb{G}} %Gauss
\nwc{\IN}{\mathbb{N}} %natural
\nwc{\IQ}{\mathbb{Q}} %rational
\nwc{\IR}{\mathbb{R}} %real
\nwc{\IT}{\mathbb{T}} %torus
\nwc{\IZ}{\mathbb{Z}} %integers
\nwc{\IV}{\mathbb{V}}
\nwc{\IX}{\mathbb{X}}
\nwc{\IY}{\mathbb{Y}}
\nwc{\cE}{{\ml E}}
\nwc{\cP}{{\ml P}}
\nwc{\cQ}{{\ml Q}}
\nwc{\cL}{{\ml L}}
\nwc{\cX}{{\ml X}}
\nwc{\cW}{{\ml W}}
\nwc{\cZ}{{\ml Z}}
\nwc{\cR}{{\ml R}}
\nwc{\cV}{{\ml V}}
\nwc{\cT}{{\ml T}}
\nwc{\crV}{{\ml L}_{(\delta,\rho)}}
\nwc{\cC}{{\ml C}}
\nwc{\cA}{{\ml A}}
\nwc{\cK}{{\ml K}}
\nwc{\cB}{{\ml B}}
\nwc{\cD}{{\ml D}}
\nwc{\cF}{{\ml F}}
\nwc{\cS}{{\ml S}}
\nwc{\cM}{{\ml M}}
\nwc{\cG}{{\ml G}}
\nwc{\cH}{{\ml H}}
\nwc{\bT}{{\mb T}}
\nwc{\bM}{{\mb M}}
\nwc{\cbz}{\overline{\cB}_z}
\nwc{\supp}{{\hbox{\rm supp}}}
\nwc{\fR}{\mathfrak{R}}
\nwc{\bY}{\mathbf Y}
\nwc{\pft}{\cF^{-1}_2}
\nwc{\bU}{{\mb U}}
\nwc{\bPhi}{{\mb \Phi}}
\nwc{\bPsi}{{\mb \Psi}}
\nwc{\im}{{\rm i}}
\nwc{\bw}{{\mathbf w}}
\nwc{\mbm}{{\mathbf m}}
\nwc{\lbr}{\textlbrackdbl}
\nwc{\rbr}{\textrbrackdbl}
\nwc{\vzero}{{\mathbf 0}}
\nwc{\cN}{{\mathcal N}}
\nwc{\rbra}{\textrbrackdbl}
\nwc{\lbra}{\textlbrackdbl}
\nwc{\conv}{\hbox{conv}}
\nwc{\rank}{\hbox{rank}}
\nwc{\beq}{\begin{eqnarray}}
\nwc{\beqn}{\begin{eqnarray*}}
\nwc{\eeqn}{\end{eqnarray*}}
\nwc{\eeq}{\end{eqnarray}}
\begin{document}

\title{Phase Retrieval with Random Phase Illumination}

%% For REVTeX it is possible to automate superscript and e-mail callouts with the superscriptaddress option; see REVTeX4 documentation.

\author{Albert Fannjiang$^{1,*}$ and Wenjing Liao$^{1}$}
\address{$^1$ Department of Mathematics, University of California, Davis, CA 95616}
\address{$^*$Corresponding author: fannjiang@math.ucdavis.edu}

\begin{abstract}
This paper presents a detailed, numerical study on the performance of the standard phasing algorithms  with random phase illumination (RPI). Phasing with high resolution RPI and the oversampling ratio $\sigma=4$ determines
a unique phasing solution up to a global phase factor.
Under this condition, the standard phasing algorithms converge rapidly to the true solution without stagnation. Excellent
approximation is achieved after a small number of iterations,
  not just with high resolution but also low resolution RPI
  in the presence of  additive as well multiplicative noises.  It
  is shown that RPI with $\sigma=2$ is sufficient 
for phasing complex-valued images under a sector condition
and $\sigma=1$ for phasing nonnegative images. 
The Error Reduction
algorithm with RPI is proved to converge to  the true solution under  proper conditions. 

\end{abstract}

\ocis{100.5070, 170.1630, 340.7430.}% REPLACE WITH CORRECT OCIS CODES FOR YOUR ARTICLE
                          % NOTE: \ocis{} IS ALIASED TO \pacs{} BUT MUST
                          % FORMAT THE TERMS CORRECTLY FOR EACH JOURNAL

\maketitle %% null function with osajnl.sty

\section{Introduction}
Fourier phase retrieval is the problem of reconstructing
an unknown image from  its Fourier magnitude data.
Phase retrieval is fundamental in many applications such as X-ray crystallography \cite{Millane}, astronomy \cite{FienupAstronomy}, coherent light microscopy \cite{MiaoNature}, quantum state tomography and remote sensing. 

 Due to the absence of the phase information, phase retrieval does not have a unique solution. Phase retrieval 
 literature has long settled with the notion
 of uniqueness modulo the trivial ambiguities of  spatial shift, conjugate inversion and
 global phase  \cite{Hayes} \cite{HayesReducible} and 
 focused on  circumventing the stagnation problem associated with the standard phasing algorithms. The numerical  stagnation problem 
 is often attributed to the nonconvex constraint 
 imposed by the Fourier magnitude data  \cite{Stagnation} \cite{FienupComplex}  \cite{Convex} \cite{Marchesini}. 

In this paper, we explore a phasing method based
on random phase modulator  which randomly
modifies the phases of the original image by
a  mask.
As proved  in \cite{UniqueRI} phasing with
 random (phase or amplitude)  illumination often
 leads to a unique solution up to a global phase factor
 (here dubbed {\em absolute uniqueness}).  In what follows we
 show  that phasing with random phase illumination (RPI) also 
leads to superior numerical performances, including rapid convergence, much reduced data and noise stability 
 of the standard algorithms.
 We show that under  proper conditions the Error-Reduction
 (ER) 
 algorithm with RPI converges to the true solution (Theorem \ref{FixedPoint1}).

Consider the discrete version of the phase retrieval problem: Let $\bn = (n_1,\ldots,n_d) \in \ZZ^d$ and $\bz = (z_1,\ldots,z_d) \in \CC^d$. Define the multi-index notation $\bz^\bn = z_1^{n_1}z_2^{n_2}\hdots z_d^{n_d}$. Let $\mathcal{C}(\cN )$ denote the set of finite complex-valued functions on $\ZZ^d$ vanishing outside
 \begin{equation*}
 \mathcal{N} = \{ \mathbf{0} \le \bn \le \bN\}, \quad \bN = (N_1,N_2,\ldots,N_d). 
 \end{equation*}
Here  $\mathbf{m} \le \bn$ if $m_j \le n_j, \forall j$. Denote $|\cN|  =\displaystyle \prod_{j=1}^d N_j$. 
 
The $z$-transform of a $d$ dimensional finite array   $f(\bn) \in \mathcal{C}(\cN )$ is given by
\begin{equation*}
F(\bz) =  \sum_{\bn} f(\bn) \bz^{-\bn}.
\end{equation*}
The Fourier transform can be obtained from the $z$-transform as
\begin{equation}
 F(e^{i 2\pi \bom}) = \sum_{\bn} f(\bn) e^{-2 \pi i \bom \cdot \bn}
\label{DFT}
\end{equation} 
for $\bom = (\omega_1,\omega_2 , \hdots , \omega_d),  0 \le \omega_j < 1$.

From the calculation 
 \beq
  |F(e^{i2\pi\bw})|^2&=& \sum_{\bn =-\bN}^{\bN}\sum_{\mbm+\bn\in \cN} f(\mbm+\bn)\overline{f(\mbm)}
   e^{-\im 2\pi \bn\cdot \bw}\nn
   \eeq
   we see that the Fourier magnitude measurement
   is equivalent to the standard discrete Fourier measurement of
   the correlation function 
          \beq
	 \label{aut}
	  \cC_f(\bn)=\sum_{\mbm\in \cN} f(\mbm+\bn)\overline{f(\mbm)}
	  \label{autof}
	  \eeq
if sampled at the lattice 
 \begin{equation}
\mathcal{L} = \Big\{\bom=(\omega_1,...,\omega_d)\ | \ \omega_j = 0,\frac{1}{2 N_j + 1},\frac{2}{2N_j + 1},...,\frac{2N_j}{2N_j + 1}\Big\}
\label{latice}
\end{equation}
which is $2^d$ times of the grid of the original image. 
The standard phasing problem  is to recover the array   $f(\bn)$ from its Fourier intensity measurement $Y(\bom)=|F(e^{i 2\pi \bom})|$ for $\bom \in \cL$ or smaller sampling sets. 

Clearly the correlation function $\cC_f$ and the Fourier magnitude data are invariant under   spatial translation 
$$f(\cdot) \rightarrow f(\cdot+\mathbf{t}) \text{ for some } \mathbf{t} \in \IZ^d,$$ 
conjugate inversion  
$$f(\cdot) \rightarrow \overline{f(\bN -\cdot)}$$ 
and constant global phase change
$$f(\cdot) \rightarrow e^{i\theta}f(\cdot).$$ 
%Conjugate reflection causes the so-called twin image. 
%We use $f  \sim g$ to denote $g$ is a trivial associate of $f$. 
These trivial associates all share the same global geometric information as the original object. 
 The classical results of uniqueness given in \cite{Hayes} \cite{HayesReducible} \cite{ComplexUniqueness} say that for almost
all objects in dimension two or higher the trivial associates
are the only ambiguities there are with phase retrieval. 

On the other hand, by dimension counting Miao {\em et al.}  \cite{MSC} have argued that overall $2$ times oversampling, independent of the dimension $d$,  uniquely determines
a unique  phasing solution up to spatial shift,
conjugate inversion and global phase factor. To measure
the degree of oversampling  we use  the oversampling ratio (OR)
\beqn
\sigma &=& \frac{\text{Fourier magnitude data number}}{\text{unknown-valued image pixel number}}
\eeqn
introduced in \cite{MSC}. As we demonstrate
below, Miao {\em et al.}'s conjecture can be
realized by using RPI, but
not uniform illumination.

As shown in \cite{UniqueRI} random illumination (RI) can
help remove the phasing ambiguities of spatial shift and
conjugate inversion. 
An illumination amounts to replacing the original image $f(\bn)$ by 
\begin{equation*}
g(\bn) = \lambda(\bn)f(\bn),
\end{equation*}
where $\lambda(\bn)$ is a known array representing the incident wave. In the case of  uniform illumination, $\lambda(\bn) = 1$. In the case of random phase illumination (RPI) \cite{RandomPhaseModulator}, 
\begin{equation}
\lambda(\bn) = e^{i\phi(\bn)}
\label{randomphase}
\end{equation}
where $\phi(\bn)$ are  random variables on $[0,2\pi]$, and in the case of random amplitude illumination \cite{RandomAmplitudeModulator,Candes1}, $\lambda(\bn)$ is an array of real random variables. RI can be facilitated by random phase/amplitude modulators or random masks.

The paper is organized as follows. We review the absolute uniqueness of phasing with RPI in Section \ref{sec2} and standard phasing algorithms in Section \ref{sec3} where
convergence of the Error Reduction (ER) iteration to the true solution is presented (Theorem \ref{FixedPoint1}). We present the numerical phasing results in Section \ref{sec4}. We conclude in Section \ref{sec5}.

For the rest of the paper  we use the following notation: the vector space $\mathcal{C}(\cN)$ is endowed  with the inner product $<f,g> = \sum_{\bn} \overline{f(\bn)}g(\bn)$. For a complex number $z$, $\Re(z)$ and $\Im(z)$ denote the real and imaginary part of $z$. $\measuredangle{z} \in [0,2\pi)$ denotes the phase (angle)  of $z$. When $z=0$, $\measuredangle{z}$ is taken to be $0$ unless specified otherwise. $[\alpha] =  \alpha(\text{mod}(2\pi))$. 

\commentout{Let $\Delta(\alpha,\beta)$ be the sector in $\CC$ containing all complex numbers whose phases are between $\alpha$ and $\beta$. The angular area of $\Delta(\alpha,\beta)$ is $\delta(\alpha,\beta)  = \left\{ \begin{array}{ll} \beta-\alpha &\text{ if } \alpha \le \beta \\
\beta-\alpha+2\pi  & \text{ else } \end{array}\right. .$ $\alpha \prec \theta \prec \beta$ denotes that $\theta$ lies between $\alpha \in [0,2\pi)$ and $\beta \in [0,2\pi)$ such that
$$\left\{ \begin{array}{ll} \alpha \le \theta \le \beta &\text{ if } \alpha \le \beta \\
\alpha \le \theta < 2\pi \text{ or } 0 \le \theta \le \beta & \text{ if } \alpha > \beta\end{array}\right. .$$}

\section{Uniqueness}
\label{sec2}
In the following we recall several uniqueness results from \cite{UniqueRI} relevant to phasing with RPI. 

First we define the {\em rank} of an array.
The support of the array consists of the set of
nonzero pixels.  The rank of the array 
 is the dimension of its support's convex hull in $\IR^d$.

\begin{theorem}Let $\lambda(\bn)$ be independent, continuous random variables on $\mathbb{S}^1$. 
Let $f(\bn)\in \mathcal{C}(\cN )$ be a real-valued array 
of rank $\geq$   2.  Then, with probability one, $f$ is determined absolutely uniquely up to $\pm$ sign by the Fourier magnitude measurement on 
$\cL$. 
\label{UniqueReal}
\end{theorem}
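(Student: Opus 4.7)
The strategy is to use the randomness of $\lambda$ twice: first to pin down $g(\bn) = \lambda(\bn)f(\bn)$ up to the trivial Hayes ambiguities, and second to eliminate all of those ambiguities except the $\pm 1$ sign at the level of $f$. For Stage 1, I would invoke the standard $z$-transform characterization of multidimensional Fourier phase retrieval: the measurements $|G|$ on $\cL$ determine $g$ up to spatial shift, conjugate inversion, and constant global phase whenever the $z$-transform $G(\bz) = \sum_\bn \lambda(\bn)f(\bn)\bz^{-\bn}$ is irreducible modulo monomial factors. Since $G$ is a polynomial whose coefficients carry independent continuous phases, any hypothetical factorization $G = G_1 G_2$ with both factors non-monomial imposes algebraic relations on the $\lambda(\bn)$'s; the rank $\ge 2$ hypothesis is precisely what guarantees that these relations are nontrivial, and hence the offending phase configurations form a measure-zero set in the product torus. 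A union over the finitely many possible factorization patterns yields an almost-sure event on which $G$ is irreducible and so $g$ is determined up to the three trivial ambiguities.

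For Stage 2, since $\lambda$ is known by design, each trivial ambiguity of $g$ produces a candidate $f' = g'/\lambda$, and I would use reality of $f$ to eliminate all but $f' = \pm f$. A pure global phase ambiguity $g'(\bn) = e^{i\theta}g(\bn)$ gives $f'(\bn) = e^{i\theta}f(\bn)$, which is real only for $e^{i\theta} = \pm 1$; this accounts for the permissible sign. A nontrivial shift ambiguity $g'(\bn) = e^{i\theta}g(\bn+\bt)$ would force $e^{i\theta}\lambda(\bn+\bt)/\lambda(\bn) \in \{\pm 1\}$ at every $\bn$ with $f(\bn+\bt) \ne 0$; picking two such indices $\bn_1, \bn_2$ (rank $\ge 2$ ensures that enough independent indices are available) and eliminating $\theta$ yields an algebraic identity $\lambda(\bn_1+\bt)\lambda(\bn_2)/(\lambda(\bn_1)\lambda(\bn_2+\bt)) \in \{\pm 1\}$ on independent continuous phases, a null event. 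Conjugate-inversion ambiguities combined with arbitrary shift and phase are handled identically, producing analogous null conditions of the form $\overline{\lambda(\bN-\bn_1)}\lambda(\bn_2)/(\lambda(\bn_1)\overline{\lambda(\bN-\bn_2)}) \in \{\pm 1\}$. A union bound over the finitely many admissible shifts and the two inversion types removes all remaining ambiguities simultaneously.

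The main obstacle is the almost-sure irreducibility of $G(\bz)$ in Stage 1; this is exactly where the rank $\ge 2$ hypothesis is indispensable, because a rank-one support reduces the problem to a one-dimensional phase retrieval whose $z$-polynomial generically factors into linear pieces, each of which can be independently conjugate-flipped. Turning the standard \emph{generic} Hayes statement into an \emph{almost-sure} statement for this particular random polynomial requires exhibiting a nontrivial algebraic obstruction to factorization and verifying that the rank hypothesis does not let the obstruction collapse to zero. Stage 2 is then comparatively routine bookkeeping: a finite list of algebraic equations on finitely many i.i.d.\ continuous phase variables each carves out a null set, and their union remains null.
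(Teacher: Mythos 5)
Your two-stage strategy is exactly the route the paper relies on: Stage~1 is the content of Lemma~\ref{NonIR} (almost-sure irreducibility of the $z$-transform of $\lambda f$, imported from Theorem~2 of \cite{UniqueRI}) combined with Proposition~\ref{HayesMagnitude}, and Stage~2 is the reality-based elimination of shift, conjugate inversion and phase that the paper invokes as ``the proof of Theorem~\ref{UniqueReal} (see Corollary 1 of \cite{UniqueRI})'' inside the proof of Theorem~\ref{FixedPoint1}(a). So there is no divergence of approach. The problem is that you have not actually closed the step you yourself identify as the main obstacle. Asserting that a factorization $G=G_1G_2$ ``imposes algebraic relations on the $\lambda(\bn)$'s'' only shows that the reducible locus is an algebraic subvariety of coefficient space; to conclude a null event you must show that this subvariety does not contain the \emph{entire} constraint set available here, namely the torus $\{(\lambda(\bn)f(\bn))\}$ with the moduli $|f(\bn)|$ frozen and only the phases free. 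That requires exhibiting, for each factorization pattern compatible with the support, at least one admissible phase configuration giving an irreducible polynomial, and showing how rank $\geq 2$ rules out the collinear-support case where every phase choice yields a reducible (essentially univariate) polynomial. This is precisely the nontrivial content of Lemma~\ref{NonIR}, and your sketch leaves it as a declared gap rather than a proof.

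There is also a concrete flaw in your Stage~2 treatment of conjugate inversion. On $\mathbb{S}^1$ one has $\overline{\lambda}=\lambda^{-1}$, so your proposed relation
\[
\frac{\overline{\lambda(\bN-\bn_1)}\,\lambda(\bn_2)}{\lambda(\bn_1)\,\overline{\lambda(\bN-\bn_2)}}
=\frac{\lambda(\bn_2)\lambda(\bN-\bn_2)}{\lambda(\bn_1)\lambda(\bN-\bn_1)}
\]
collapses identically to $1$ whenever $\bn_2=\bN-\bn_1$: the constraints at an index and at its inversion partner are equivalent, so a pair related by the inversion yields no information and the ``null event'' is vacuous for that choice. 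The argument survives because rank $\geq 2$ forces at least three support points, so one can always pick $\bn_2\notin\{\bn_1,\bN-\bn_1\}$ among the constrained indices, after which the monomial in the independent $\lambda$'s is genuinely nonconstant and the event is null; but this degeneracy must be checked explicitly, and your shift-case verification (where no such collapse occurs for $\bt\neq\mathbf{0}$, $\bn_1\neq\bn_2$) does not transfer ``identically'' as claimed. With Lemma~\ref{NonIR} granted and the index selection repaired, the rest of your bookkeeping is sound.
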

 
A more general, practical constraint is to restrict the image values within a certain sector of the complex plane. For instance, when the incident X-rays are low energy photons(soft X-rays), the electron density is complex. The real part represents the effective number of electrons that diffract the X-rays in phase and is usually positive but
becomes negative only when the energy of the incident X-rays is near an absorption edge. The imaginary part represents the absorption of the X-rays by the specimen and thus is always positive.

\begin{theorem}Let $\lambda(\bn)$ be independent, continuous random variables on $\mathbb{S}^1$. 
Let  $f$ be a complex-valued array of rank $\geq$ 2 such that $\measuredangle f(\bn)  \in [\alpha,\beta], \forall \bn$. Let $S$ denote the sparsity of the image and let $\llfloor S/2\rrfloor$ be the greatest integer less than or equal to $S/2$.

%(i) Consider RPI \eqref{randomphase}. 
Suppose that the phases $\phi(\bn)$ of RPI are independent,  uniform random variables on $[0,2\pi]$. Then with probability no less than $1-|\cN|  (\beta-\alpha)^{\llfloor S/2\rrfloor}(2\pi)^{-\llfloor S/2\rrfloor}$, the object $f$ is uniquely determined, up to a global phase, by the Fourier magnitude measurement on $\mathcal{L}$. 
\commentout{
(ii) When $\beta - \alpha < \pi$, consider RAI with i.i.d. random variables $\lambda(\bn) \in \RR$ that are equally likely negative or positive, i.e. $\mathbb{P}\{ \lambda(\bn)>0\} = \mathbb{P} \{ \lambda(\bn) < 0\} = 1/2, \ \forall \bn$. Then with probability no less than $1-2^{-\llfloor(S-1)/2\rrfloor}|\cN| $, the image $f$ is uniquely determined, up to a global phase,  by the Fourier magnitude measurements on $\mathcal{L}$.
}

 %In both cases, 
 The global phase is uniquely determined  if the angular sector $[\alpha,\beta]$ is tight
 in the sense that no proper subset of $[a,b]$ contains
 all the phases of the object.
\label{UniqueComplexPositive}
\end{theorem}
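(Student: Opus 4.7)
The plan is to reduce to the known uniqueness of phase retrieval for a generic (irreducible) polynomial and then invoke the sector condition to rule out the remaining non-trivial ambiguities. By a Hayes-type theorem valid in $d\ge 2$, if the $z$-transform $G(\bz)$ of the illuminated image $g=\lambda f$ is irreducible up to a monomial factor, then $g$ is determined by its Fourier magnitudes on $\cL$ up to the three trivial ambiguities (spatial shift, conjugate inversion, and global phase). Since $f$ has rank $\ge 2$ and the $\lambda(\bn)$ are independent continuous random phases, $G(\bz)$ is almost surely irreducible; hence with probability one $g$ is recovered up to those trivial ambiguities, and the task reduces to showing the non-global-phase ones violate the sector constraint on $f=\lambda^{-1}g$ with high probability.

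For a pure shift $\bt\ne\vzero$, the candidate reconstruction is $f'(\bn)=\lambda^{-1}(\bn)\lambda(\bn+\bt)f(\bn+\bt)$, so $\measuredangle f'(\bn)=\phi(\bn+\bt)-\phi(\bn)+\measuredangle f(\bn+\bt)\pmod{2\pi}$. I would encode the involved indices as a directed graph with edges $\bn\mapsto\bn+\bt$: in $\ZZ^d$ it has no cycles, hence is a disjoint union of directed paths; conditioning on one $\phi$ per path renders the $S$ consecutive-difference constraints independent and uniform. Thus the probability that $f'$ lies in the sector is at most $\bigl((\beta-\alpha)/(2\pi)\bigr)^S\le\bigl((\beta-\alpha)/(2\pi)\bigr)^{\llfloor S/2\rrfloor}$.

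The dominant contribution comes from conjugate inversion composed with a shift $\bt$, where $\measuredangle f'(\bn)=-\phi(\bn)-\phi(\bN-\bn+\bt)-\measuredangle f(\bN-\bn+\bt)\pmod{2\pi}$. The key observation is that the two sector constraints arising at a partner pair $\{\bn,\bN-\bn+\bt\}$ both involve the \emph{same} random sum $\phi(\bn)+\phi(\bN-\bn+\bt)$, so such a pair contributes only one independent sector-event. Partitioning the $S$ support points into $p$ partner pairs and $q$ singletons with $2p+q=S$ yields $p+q=S-p\ge\lceil S/2\rceil\ge \llfloor S/2\rrfloor$ independent constraints, each satisfied with probability at most $(\beta-\alpha)/(2\pi)$. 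A union bound over the $O(|\cN|)$ admissible translates $\bt$ then gives the stated lower bound $1-|\cN|(\beta-\alpha)^{\llfloor S/2\rrfloor}(2\pi)^{-\llfloor S/2\rrfloor}$ on the success probability.

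For the tightness assertion, if $[\alpha,\beta]$ is tight for $f$ then $e^{i\theta}f$ has phases spanning $[\alpha+\theta,\beta+\theta]$, which is contained in $[\alpha,\beta]$ only when $\theta\equiv 0\pmod{2\pi}$; so the global phase is pinned down as well. The main technical obstacle I anticipate is justifying the almost-sure irreducibility of $G(\bz)$ under RPI for rank-$\ge 2$ arrays in arbitrary dimension $d\ge 2$; once that algebraic fact is in hand, the rest is careful combinatorial bookkeeping about i.i.d.\ uniform phases falling in a fixed-width sector.
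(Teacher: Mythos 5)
You should first note that the paper does not prove this theorem at all: Section 2 explicitly ``recalls'' it from \cite{UniqueRI}, and the only place it is used later (part (b) of Theorem \ref{FixedPoint1}) again defers to ``Theorem 4(i) of \cite{UniqueRI}.'' So the comparison is with the cited source, whose architecture your sketch does reproduce correctly: Hayes-type reduction of all ambiguities to shift/conjugate-inversion/global-phase once the $z$-transform of $\lambda f$ is irreducible up to a monomial (the almost-sure irreducibility you flag as the main obstacle is exactly Lemma \ref{NonIR} of this paper, i.e.\ Theorem 2 of \cite{UniqueRI}, so it is safe to import), followed by a probabilistic exclusion of the non-global-phase associates via the sector condition, with the partner-pair observation supplying the $\llfloor S/2\rrfloor$ exponent in the conjugate-inversion case.

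There is, however, a genuine gap in your probability estimates: each trivial associate carries a \emph{free} global phase, so the bad event for a given $\bt$ is ``there exists $\theta$ such that $e^{i\theta}\lambda^{-1}(\lambda f)_{\bt+}$ (resp.\ $e^{i\theta}\lambda^{-1}\overline{(\lambda f)_{\bt-}}$) satisfies the sector condition,'' whereas you bound only the $\theta=0$ event. These are not comparable: quantifying over $\theta$ converts ``all $m$ residual phases land in $[\alpha,\beta]$'' into ``all $m$ independent uniform points lie in \emph{some} arc of length $\beta-\alpha$,'' whose probability is of order $m\bigl((\beta-\alpha)/(2\pi)\bigr)^{m-1}$, i.e.\ you lose one unit of the exponent plus a prefactor. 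In the conjugate-inversion case your own count gives only $m=p+q\geq\lceil S/2\rceil$ independent constraints, so after accounting for $\theta$ you are left with exponent $\lceil S/2\rceil-1$, which equals $\llfloor S/2\rrfloor-1$ for even $S$ and falls short of the stated bound; the danger is concrete, since for a single partner pair the two constraints can \emph{always} be satisfied by a suitable choice of $\theta$ (they only pin $\theta-\phi(\bn)-\phi(\bN+\bt-\bn)$ to a nonempty arc), so pairs genuinely contribute less than one ``full'' constraint each once $\theta$ is free. A correct argument must either extract additional $\theta$-free constraints (e.g.\ by differencing constraints across independent pairs and tracking the resulting arc lengths) or restructure the counting as in \cite{UniqueRI}; as written, your bookkeeping does not deliver the claimed $1-|\cN|(\beta-\alpha)^{\llfloor S/2\rrfloor}(2\pi)^{-\llfloor S/2\rrfloor}$. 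A secondary, smaller issue is the union bound: shifts and conjugate inversions each range over roughly $|\cN|$ translates, so you need to justify that the total prefactor is $|\cN|$ rather than a constant multiple of it. Your treatment of the tightness assertion for pinning down the global phase is fine.
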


\commentout{
\begin{theorem} Suppose that the object support
has rank $\geq$   2. 
Suppose that $K$ pixels of the complex-valued object $f$ satisfy the magnitude constraint $0<a\leq |f(\bn)|\leq b$ and that $\{\lambda(\bn)\}$ are i.i.d. continuous r.v.s on real algebraic varieties in $\IC$ with $\IP\{|\lambda(\bn)/\lambda(\bn')|> b/a\,\,\hbox{or}\,\,  |\lambda(\bn)/\lambda(\bn')|<a/b\}=1-p>0$ for $\bn\neq\bn'$. Then the object $f$ is determined uniquely, up
to a global phase, by the Fourier magnitude data on $\cM$, with probability at least $1-|\cN| p^{[K/2]}$. 
\label{thm6}
\end{theorem}

\begin{theorem}\label{cor1} Suppose the
 object support has rank at least 2.
Suppose either 
 of the following cases holds:

 (i) The phases of the object $\{f(\bn)\}$ at  two points, where $f$ does not vanish,  belong to
 a known countable subset of $[0,2\pi]$ and the
 angles of $\{\lamb(\bn)\}$ are independent continuous r.v.s   on $[0,2\pi]$.

 (ii) The amplitudes of the object $\{f(\bn)\}$ at two points, where $f$ does not vanish, 
 belong to a known measure zero  subset of $\IR$  and the magnitudes of  $\{\lamb(\bn)\}$ are independent continuous r.v.s  on $(0,\infty)$. 
 
Then $f$ is determined uniquely,   up to a global phase,   by the
  Fourier magnitude measurement  on the lattice $\cL$
     with probability one.
 \end{theorem}
}

For general complex-valued images without any constraint, we use two independent RPIs to collect data.

\begin{theorem}
 Let $\lambda_1(\bn)$ and $\lambda_2(\bn)$ be two independent arrays of continuous random variables on $\mathbb{S}^1$. Let $f(\bn) \in \mathcal{C}(\cN )$ be any complex-valued array of rank $\geq 2$. Then almost surely $f(\bn)$ is uniquely determined, up to a constant phase factor, by the Fourier magnitude measurement on $\mathcal{L}$ with two illuminations $\lambda_1$ and $\lambda_2$. If the second illumination $\lambda_2(\bn)$ is deterministic while $\lambda_1(\bn)$ is random as above, then the same conclusion holds.
\label{UniqueComplex}
\end{theorem}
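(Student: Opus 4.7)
The plan is to reduce the two-illumination problem to a single-illumination uniqueness statement up to trivial ambiguities, and then kill off the residual shift and conjugate-inversion ambiguities using the independence (or mere non-degeneracy) of the two masks.

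\textbf{Step 1.} Suppose $h \in \mathcal{C}(\cN)$ produces the same two Fourier magnitude records as $f$. For each $j = 1, 2$ the arrays $\lambda_j h$ and $\lambda_j f$ share the same autocorrelation on $\cL$, so the phase retrieval problem for the modulated signal $\lambda_j f$ has $\lambda_j h$ as another solution. A Hayes-type irreducibility criterion ensures that the $z$-transform of $\lambda_j f$ is almost surely irreducible (modulo monomial factors) as a Laurent polynomial in $d \ge 2$ variables, because reducibility carves out a proper algebraic subvariety of the coefficient space, which the continuous random phases avoid with probability one. The rank $\ge 2$ hypothesis is essential here; in rank $\le 1$ the Laurent polynomial is effectively one-dimensional and always factors. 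Hence, a.s., $\lambda_j h$ coincides with $\lambda_j f$ up to one of the trivial ambiguities: either
\begin{equation*}
\lambda_j(\bn) h(\bn) = e^{i\alpha_j}\, \lambda_j(\bn + \bt_j)\, f(\bn + \bt_j)
\end{equation*}
for some shift $\bt_j \in \ZZ^d$ and phase $\alpha_j$, or
\begin{equation*}
\lambda_j(\bn) h(\bn) = e^{i\alpha_j}\, \overline{\lambda_j(\bN - \bn)\, f(\bN - \bn)}.
\end{equation*}

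\textbf{Step 2.} I would eliminate $h$ by dividing the $j = 1$ relation by the $j = 2$ relation, producing an identity that the pair $(\lambda_1, \lambda_2)$ must satisfy at every pixel of the support. There are four combinatorial cases (shift--shift, shift--conjugate, conjugate--shift, conjugate--conjugate) parametrized by the finitely many shifts $\bt_j \in \{-\bN, \ldots, \bN\}$. In the single trivial case $\bt_1 = \bt_2 = 0$ with no conjugation on either side, both equations reduce to $h = e^{i\theta} f$ with $\theta = \alpha_1 = \alpha_2$, which is the desired conclusion. Every other case forces some unit-modulus $\lambda$-value at a specific index to equal a measurable function of the remaining (independent) $\lambda$-values; by continuity and independence each such event has measure zero, and a union bound over the finitely many combinations completes the almost-sure argument.

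\textbf{Step 3 and main obstacle.} For the variant with deterministic $\lambda_2$, Step 1 with $j=1$ alone still supplies a finite candidate list for $h$ (global phase times a shift or conjugate inversion of $f$), and for each non-identity candidate the Fourier-magnitude match against the fixed $\lambda_2$ imposes a codimension-one algebraic constraint on $\lambda_1$ to which the continuous distribution on $\mathbb{S}^1$ assigns zero probability; a union bound finishes the argument. I expect the main obstacle to be the careful bookkeeping of the conjugate-inversion cases in Step 2, where the indices $\bn$ and $\bN - \bn$ become entangled and one has to rule out the possibility that the derived identity degenerates into a tautology owing to some hidden symmetry of $f$; the rank $\ge 2$ hypothesis, together with the continuity and independence of the random phases, is what ultimately prevents such degeneracy.
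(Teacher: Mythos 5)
You should first be aware that the paper does not prove Theorem \ref{UniqueComplex} at all: Section \ref{sec2} explicitly recalls it, together with Theorems \ref{UniqueReal} and \ref{UniqueComplexPositive}, from \cite{UniqueRI}, so there is no in-paper proof to compare against. Judged on its own terms, your strategy is the natural one and lines up with the machinery the paper assembles for its Theorem \ref{FixedPoint1}: almost-sure irreducibility of the $z$-transform of $\lambda f$ for rank $\geq 2$ (Lemma \ref{NonIR}, itself imported from Theorem 2 of \cite{UniqueRI} rather than proved by a one-line ``proper subvariety'' remark --- the subtlety is that the coefficients range over the torus orbit $\{\lambda(\bn)f(\bn)\}$ of a fixed $f$, not over all of coefficient space), the Hayes-type classification of Fourier-magnitude associates (Proposition \ref{HayesMagnitude}), and probabilistic exclusion of the nontrivial associates.

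Two points in your Step 2 need repair. First, in Proposition \ref{HayesMagnitude} the conjugate-inversion associate carries its own shift parameter, $g=e^{i\theta}\overline{f_{\mathbf{m}-}}$; your conjugate case pins the reflection point at $\bN$ and so silently drops part of the case analysis (much of it can be disposed of by comparing supports, since both relations force $\mathrm{supp}(h)$ to equal the corresponding translate or reflection of $\mathrm{supp}(f)$, but that step should appear). Second, and more substantively, the measure-zero claim is too quick: for a fixed admissible pair of shifts the identity you derive at a \emph{single} support pixel, e.g. $\lambda_1(\bn)\lambda_2(\bn+\bt)\,\overline{\lambda_2(\bn)\lambda_1(\bn+\bt)}=e^{i(\alpha_1-\alpha_2)}$, does \emph{not} define a null event, because $\alpha_1$ and $\alpha_2$ are existentially quantified and any one such equation can be satisfied by choosing the phase difference. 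You must first eliminate $e^{i(\alpha_1-\alpha_2)}$ by equating the left-hand sides at two distinct support pixels; only then do you get a genuine nontrivial algebraic relation among independent continuous variables on $\mathbb{S}^1$, to which the union bound over the finitely many cases applies. The rank $\geq 2$ hypothesis guarantees the support has at least three points, so the needed pair of pixels exists, but this means the rank hypothesis is doing work in Step 2 (and in the deterministic-$\lambda_2$ variant of Step 3) and not only in the irreducibility argument of Step 1, contrary to where you locate it.
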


\section{Phasing Algorithms}
\label{sec3}
 To  find the true object satisfying both the object-domain constraint,  which is usually convex, and the frequency-domain constraint, 
 which is non-convex,  most phasing algorithms are  based on the idea of alternating  projections 
 from the convexity literature \cite{Convex}.
 
\subsection{Projections}

\begin{definition} Let $\mathcal{D}$ be a subset of $\cC(\cN)$, the orthogonal projection of $f \in \cC(\cN)$ on $\mathcal{D}$ is $ \displaystyle \text{argmin}_{g\in \mathcal{D}} \|g-f\|$
\end{definition}

If the minimizer is not unique, one of them is arbitrarily selected. When $\mathcal{D}$ is a closed convex subset of $\mathcal{C}(\cN)$, the minimizer is unique.
\begin{proposition}
Let $\mathcal{D}$ denote any closed convex subset of $\cC(\cN)$ and let $f$ be any element in $\cC(\cN)$. Then there exits a unique $h \in \mathcal{D}$ such that 
$$\text{inf}_{g\in \mathcal{D}} \|g-f\| = \|h-f\|.$$
\label{prop1}
\end{proposition}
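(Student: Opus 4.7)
The plan is to prove existence and uniqueness separately, using the Hilbert space structure of $\mathcal{C}(\cN)$ and, crucially, the parallelogram law together with the convexity of $\mathcal{D}$.

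For existence, I would first observe that $\mathcal{C}(\cN)$ is finite-dimensional under the given inner product, hence complete. Let $d = \inf_{g \in \mathcal{D}} \|g - f\|$, which is finite since $\mathcal{D}$ is nonempty (implicitly, else the statement is vacuous). Pick a minimizing sequence $\{g_n\} \subset \mathcal{D}$ with $\|g_n - f\| \to d$. In finite dimensions one could extract a convergent subsequence directly from boundedness and closedness; however, the cleaner Hilbert-space argument, which avoids compactness, is to show $\{g_n\}$ is Cauchy. Applying the parallelogram identity to $g_n - f$ and $g_m - f$ gives
$$\|g_n - g_m\|^2 = 2\|g_n - f\|^2 + 2\|g_m - f\|^2 - 4\left\|\frac{g_n + g_m}{2} - f\right\|^2.$$
By convexity, $(g_n + g_m)/2 \in \mathcal{D}$, so the last term is at least $4d^2$. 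Since $\|g_n - f\|^2, \|g_m - f\|^2 \to d^2$, the right side tends to $0$, whence $\{g_n\}$ is Cauchy. Let $h$ be its limit; closedness of $\mathcal{D}$ gives $h \in \mathcal{D}$, and continuity of the norm gives $\|h - f\| = d$.

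For uniqueness, suppose $h_1, h_2 \in \mathcal{D}$ both satisfy $\|h_i - f\| = d$. Convexity gives $(h_1 + h_2)/2 \in \mathcal{D}$, so $\|(h_1 + h_2)/2 - f\| \geq d$. The parallelogram identity applied to $h_1 - f$ and $h_2 - f$ yields
$$\|h_1 - h_2\|^2 = 2\|h_1 - f\|^2 + 2\|h_2 - f\|^2 - 4\left\|\frac{h_1 + h_2}{2} - f\right\|^2 \leq 4d^2 + 4d^2 - 4(2d^2) = 0,$$
so $h_1 = h_2$.

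There is no real obstacle here: the result is the textbook projection theorem for closed convex sets in a Hilbert space, and the only structural ingredients are the inner product (giving the parallelogram identity), completeness (automatic in finite dimension), convexity of $\mathcal{D}$ (to keep midpoints in $\mathcal{D}$), and closedness of $\mathcal{D}$ (to retain the limit). The one minor point to mention is that the proof makes no use of the finite dimensionality of $\mathcal{C}(\cN)$, so the same argument delivers the analogous statement in any Hilbert space, which will be convenient if later sections need projections onto other convex constraint sets used in the phasing algorithms.
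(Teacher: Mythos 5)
Your argument is correct: it is the standard parallelogram-law proof of the projection theorem for closed convex sets in a Hilbert space, and every step (the Cauchy estimate for the minimizing sequence via convexity of the midpoint, closedness to keep the limit in $\mathcal{D}$, and the same identity for uniqueness) is sound. There is nothing in the paper to compare it against: Proposition \ref{prop1} is stated without proof, as a known fact imported from the convex-optimization literature (cf.\ the reference to \cite{Convex}), so your write-up simply supplies details the authors chose to omit. Your closing observations are also apt --- the statement does implicitly require $\mathcal{D}$ to be nonempty, and the argument indeed uses only the inner-product structure, completeness, convexity, and closedness, never the finite dimensionality of $\cC(\cN)$.
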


Let $\Gamma$ be the set of functions satisfying the object-domain constraint, such as a known support  or positivity, and $\Omega$ be the set of functions satisfying the frequency-domain constraint imposed by
the known Fourier magnitude data.  A solution of phase retrieval is a function belonging to $\Gamma \cap \Omega$. Let $\PO$ and $\PF$ be the orthogonal projection on $\Gamma$ and $\Omega$ respectively. 

Let $\Lambda$ be the diagonal matrix with diagonal elements $\lambda(\bn)$, and set
$g = \Lambda f$. Let $\Phi$ be the discrete Fourier transform and set  
$
Y = |\Phi f| $. 

Given the Fourier intensity data  $Y$, we define the intensity fitting operator $\mathcal{T}$ as 
\begin{equation}
G'(\bom) = \PT\{G\}(\bom) = \left \{ \begin{array}{ll} Y(\bom)e^{i \measuredangle{G(\bom)}} & \text{ if } |G(\bom)| > 0 \\
Y(\bom) & \text{ if } |G(\bom)| = 0
\end{array}\right. .
\label{IntensityFitting}
\end{equation}
When $G(\bom) = 0$, $\measuredangle{G(\bom)}$ is not uniquely defined and  $\measuredangle{G(\bom)}$ is set $0$ in \eqref{IntensityFitting}.
In this case,
\begin{equation*}
\PF = \Lambda^{-1} \Phi^{-1}\PT \Phi \Lambda.
\end{equation*}
Indeed   $\measuredangle{G(\bom)}$ can be arbitrarily chosen at the zero set of  $G$, and we define
 \beq
 \mathcal{P}_f^{\theta} =\Lambda^{-1}\Phi^{-1} \PT^{\theta}\Phi\Lambda\label{5}
 \eeq
where  
\beq
\PT^{\theta}\{G\}(\bom) = \left \{ \begin{array}{ll} Y(\bom)e^{i \measuredangle{G(\bom)}} & \text{ if } |G(\bom)| > 0 \\
Y(\bom)e^{i \theta(\bom)} & \text{ if } |G(\bom)| = 0
\end{array}\right.\label{6}
\eeq

The object domain projection $\PO$ can take
a varied form depending on the problem. 

\begin{itemize}
\item When $\Gamma$ is the set of images with a given phase $\alpha$,
$$\PO \{h\}(\bn) = \mathcal{P}_\alpha\{h(\bn)\}= \max\{\Im(h(\bn)) \sin{\alpha} + \Re(h(\bn)) \cos{\alpha},0\} e^{i\alpha}.$$ 

\item When $\Gamma$ is the set of images with phases in $[\alpha,\beta]$ for $0 \le \alpha < \beta \le 2\pi$,
\begin{itemize}
\item if $\beta -\alpha \le \pi$, $\PO\{h\}(\bn) = \left \{ \begin{array}{ll} h(\bn) & \text{ if } \alpha \prec \measuredangle{h(\bn)} \prec \beta \\
\mathcal{P}_\beta \{h(\bn)\} & \text{ if } \beta \prec \measuredangle{h(\bn)} \prec [\beta+\pi/2] \\\Re
\big (\mathcal{P}_\alpha \{h(\bn)\}\big )& \text{ if } [\alpha - \pi/2] \prec \measuredangle{h(\bn)} \prec \beta\\ 
0 & \text{ else}
\end{array} \right. ,$

\item if $\beta - \alpha > \pi$, $\PO\{h\}(\bn) =  \left \{ \begin{array}{ll} 
h(\bn) & \text{ if } \alpha \prec \measuredangle{h(\bn)} \prec \beta \\
\mathcal{P}_\beta \{h(\bn)\} & \text{ if } \beta \prec \measuredangle{h(\bn)} \prec [(\alpha+\beta)/2+\pi] \\
\mathcal{P}_\alpha \{h(\bn)\} & \text{ if } [(\alpha+\beta)/2+\pi] \prec \measuredangle{h(\bn)} \prec \alpha
\end{array} \right. ,$
\end{itemize}
where $a \prec \theta \prec b$ means  $\theta$ is between $a$ and $b$ such that
$$\left\{ \begin{array}{ll} a \le \theta \le b &\text{ if } a \le b \\
a \le \theta < 2\pi \,\,\text{ or } \,\, 0 \le \theta \le b & \text{ if } a > b \end{array}\right. .$$

\item When $\Gamma$ is the set of real valued images,
$$\PO\{h\}(\bn) = \Re(h(\bn)).$$

\item When $\Gamma$ is the set of nonnegative real-valued images,
$$\PO\{h\}(\bn) = \max\{\Re(h(\bn)),0\}.$$

\item When $\Gamma$ is the set of complex valued images with nonnegative real and imaginary parts,
$$\Re(\PO\{h\}(\bn)) = \max(\Re(h(\bn)),0)$$
$$\Im(\PO\{h\}(\bn)) = \max(\Im(h(\bn)),0).$$

\item When $\Gamma$ is the set of images with support $S$,
$$\PO\{h\}(\bn) = \left \{ \begin{array}{ll} h(\bn) & \text{ if } \bn \in S \\ 0 & \text{ else}  \end{array}\right. .$$

\end{itemize}

Two error metrics $\vep_o$ and $\vep_f$ defined by
$$\vep_o(h) = \|\PO\{h\}-h\|,$$
$$\vep_f(h) = \|\PF\{h\}-h\|$$
play an important role of our studies. 
\commentout{
If $h$ satisfies the object domain constraint, i.e. $h \in \Gamma$,   $\vep_f$ is the effective error measuring how far $h$ is away from $\Omega$. On the contrary, if $h$ satisfies the Fourier intensity measurement, i.e. $h \in \Omega$, such as the HIO output, $\vep_o$ is the effective error measuring how far $h$ is from $\Gamma$.
}
\commentout{
The normalized errors are
\begin{equation}
\tvep_o(h) = ||\PO\{h\}-h||/||\PO\{h\}||,
\label{tvepo}
\end{equation}
\begin{equation}
\tvep_f(h) = ||\PF\{h\}-h||/||\PF\{h\}||.
\label{tvepf}
\end{equation}
}
When $\Phi\Lambda$ is unitary, as in the case of
RPI, 
$$\vep_f(h) = \|\PF\{h\}-h\| = \|\PT\Phi\Lambda h - \Phi \Lambda h\| = \|\ Y - |\Phi \Lambda h| \ \|.$$

\subsection{Oversampling}
The oversampling method has proven to be an effective, flexible  way of implementing various phasing algorithms  by converting
Fourier magnitude data more finely sampled than demanded by the original image grid into  zero padding which
then acts like  a support constraint of the original image
 \cite{Hayes, Miao2000, MiaoKS, MiaoNonperiodic}.   
 In this set-up, the oversampling ratio is given by
 \beqn
\sigma%&=& \frac{\text{total pixel number}}{\text{unknown-valued pixel number}}\\
&=& \frac{\text{image pixel number + zero-padding pixel number}}{\text{image pixel number}}.
\eeqn

\subsection{Error reduction (ER)}
ER algorithm \cite{Fienup}  is based on the Gerchberg-Saxton algorithm \cite{GS} and  is the most basic  phasing algorithm. ER is the plain version of  the alternated  projection method:
\begin{equation}
f_{k+1} = \PO \mathcal{P}_f f_k
\label{er}
\end{equation}
which can be conveniently represented by the
following diagram
\begin{figure}
\begin{center}
{
$$\xymatrix{
f_{k+1} /f_k\ar[rr]^{\Lambda} && g_k \ar[rr]^{\Phi} && G_k \ar[d]^{\PT} \\
f'_k \ar[u]^{\PO} && g'_k\ar[ll]^{\Lambda^{-1}} && G'_k \ar[ll]^{\Phi^{-1}}
}$$}
\caption{}
\label{fig1}
\end{center}
\end{figure}

ER enjoys the error-decreasing property
following the same argument in \cite{Fienup}.
\begin{proposition}
Let $\Gamma$ be a closed convex subset of $\cC(\cN)$. Let $\Phi$ and $\Lambda$ be unitary matrices. Then the array $\{f_k\}$ produced in \eqref{er} satisfies
\begin{equation}
\vep_f(f_{k+1}) \le \vep_f(f_k).
\end{equation}
The equality holds if and only if $f_{k+1} = f_k$.
\label{lemma1}
\end{proposition}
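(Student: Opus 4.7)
The plan is to prove this by chaining two inequalities that come directly from the minimization definition of the orthogonal projections $\PF$ and $\PO$. Write $f'_k = \PF f_k$, so that $\vep_f(f_k) = \|f'_k - f_k\|$ and $f_{k+1} = \PO f'_k$. The key observation is that, while $\Omega$ is nonconvex, we never need to invoke nonexpansiveness of $\PF$; we only need the tautological statement that $\PF h$ achieves the minimum distance from $h$ to $\Omega$, which holds as long as the minimizer exists (guaranteed here by the explicit formula \eqref{IntensityFitting} together with unitarity of $\Phi\Lambda$).

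First I would argue $\vep_f(f_{k+1}) = \|\PF f_{k+1} - f_{k+1}\| \le \|f'_k - f_{k+1}\|$, using that $f'_k \in \Omega$ and $\PF f_{k+1}$ is by definition a closest point in $\Omega$ to $f_{k+1}$. Next I would argue $\|f'_k - f_{k+1}\| = \|f'_k - \PO f'_k\| \le \|f'_k - f_k\|$, using that $f_k \in \Gamma$ (as the output of $\PO$ at the previous step, or as an initial guess taken in $\Gamma$) and $\PO f'_k$ is the closest point in $\Gamma$ to $f'_k$. Chaining these gives $\vep_f(f_{k+1}) \le \|f'_k - f_k\| = \vep_f(f_k)$.

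For the equality clause, the direction $f_{k+1}=f_k \Rightarrow \vep_f(f_{k+1}) = \vep_f(f_k)$ is immediate. For the converse, if equality holds in the chain then in particular equality holds in the second step: $\|f'_k - f_k\| = \|f'_k - \PO f'_k\|$. Since $f_k \in \Gamma$ and $\Gamma$ is closed convex, Proposition \ref{prop1} gives uniqueness of the minimizer of $\|f'_k - \cdot\|$ over $\Gamma$, so $f_k = \PO f'_k = f_{k+1}$.

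The only subtle point, and the one I would flag as the main thing to check, is the well-definedness of $\PF$ at frequencies where $G_k$ vanishes. This is handled by the paper's convention in \eqref{IntensityFitting}-\eqref{6}: any choice of phase on the zero set of $G$ yields an element of $\Omega$ at the same minimal distance from $h$, so the inequalities above are unaffected by this ambiguity. With that caveat, the argument above reproduces Fienup's classical error-reduction monotonicity in the present RPI setting, where unitarity of $\Phi\Lambda$ is what lets us transport the projection geometry between object and frequency domains without distortion.
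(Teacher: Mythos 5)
Your proof is correct and is essentially the paper's own argument: both chain the same two minimal-distance inequalities, one from $f_{k+1}=\PO f'_k$ being the closest point of $\Gamma$ to $f'_k$ (with $f_k\in\Gamma$), and one from $\PF f_{k+1}$ being the closest point of $\Omega$ to $f_{k+1}$ (with $f'_k\in\Omega$, justified via unitarity of $\Phi\Lambda$ and the pointwise optimality of $\PT$), and both settle the equality case by the uniqueness of projection onto the closed convex set $\Gamma$ from Proposition \ref{prop1}. The only difference is cosmetic: the paper runs the chain in the Fourier domain starting from $\vep_f(f_k)$, while you run it in the object domain starting from $\vep_f(f_{k+1})$.
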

\begin{proof}
\beqn
\vep_f(f_k)  & = &\|f_k - f'_k\|\\
& \ge& \|f_{k+1} - f'_k\| \nonumber \\
                      & = & \|G_{k+1} - G'_{k}\| \\
                      & \ge&  \|G_{k+1} - G'_{k+1}\| \nonumber \\
                      & =&  \|f_{k+1} - f'_{k+1}\| \\
                      & =& \vep_f(f_{k+1}). \nonumber 
\eeqn
The equality holds only if $\|f_k - f'_k\| = \|f_{k+1} - f'_k\| $, where $f_{k+1} = \PO\{f'_k\}$. Since $\Gamma$ is a closed convex subset, $f_{k+1} = f_k$ according to Proposition \ref{prop1}.
\end{proof}

%ER is essentially a gradient-descent method.

\begin{remark}
 Proposition \ref{lemma1} holds  for the $f_{k+1} = \PO\PF^{\theta} f_{k}$ with arbitrary $\theta(\bom)$. 
\end{remark}

Proposition \ref{lemma1} shows that the error $\vep_f(f_k)$ decreases strictly until it reaches a fixed point of $\PO\PF$, implying  that the ER iteration converges to a fixed point.
\begin{proposition}
Let $f_{k+1} = \PO\PF f_k.$ Let $\Gamma$ be a closed convex subset of $\cC(\cN)$ and $\Phi$,$\Lambda$ be unitary matrices. Then every convergent subsequence of $\{f_k\}$ converges to some $h$ such that 
\begin{enumerate}
\item if $\Phi \Lambda h(\bom) \neq 0 , \forall \bom\in \cL$, $h$ is a fixed point of $\PO\PF$.
\item if $\Phi \Lambda h(\bom) = 0 $ for some $\bom\in \cL$, $h$ is a fixed point of $\PO\PF^{\theta}$ for some $\theta$.
\end{enumerate}
\label{accumulation}
\end{proposition}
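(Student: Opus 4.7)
The plan is to (i) establish that consecutive iterates drift apart arbitrarily little, so any subsequence limit $h$ is simultaneously the limit of the shifted subsequence, and then (ii) pass to the limit in the relation $f_{k+1} = \PO\PF f_k$, using continuity of the object-domain projection $\PO$ and of the unitary maps $\Phi,\Lambda$, while handling the (only possible) discontinuity of $\PT$ at frequencies where $\Phi\Lambda h$ vanishes.

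For step (i) I would sharpen the estimate already used in the proof of Proposition \ref{lemma1}. Since $\Gamma$ is closed and convex and $f_k\in\Gamma$ for $k\ge 1$, the variational characterization of $\PO$ yields the Pythagorean refinement $\|f_k-f'_k\|^2 \ge \|f_k-f_{k+1}\|^2+\|f_{k+1}-f'_k\|^2$. Combined with the definition of $\PT$, which gives $\|f_{k+1}-f'_k\| = \|G_{k+1}-G'_k\| \ge \|G_{k+1}-G'_{k+1}\| = \vep_f(f_{k+1})$, this upgrades Proposition \ref{lemma1} to
$$\|f_{k+1}-f_k\|^2 \le \vep_f(f_k)^2 - \vep_f(f_{k+1})^2.$$
Since $\{\vep_f(f_k)\}$ is nonnegative and non-increasing, the right-hand side is summable, so $\|f_{k+1}-f_k\|\to 0$. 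Hence $f_{k_j}\to h$ implies $f_{k_j+1}\to h$ as well.

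For step (ii), set $G_{k_j} := \Phi\Lambda f_{k_j} \to \Phi\Lambda h =: G$. In case (1), where $G(\bom)\neq 0$ for every $\bom\in\cL$, the phase map $\measuredangle(\cdot)$ is continuous at $G(\bom)$ for each such $\bom$, so $\PT$ is continuous at $G$, hence $\PF = \Lambda^{-1}\Phi^{-1}\PT\Phi\Lambda$ is continuous at $h$. Continuity of $\PO$ (non-expansiveness of orthogonal projection onto a closed convex set) then gives $f_{k_j+1} = \PO\PF f_{k_j} \to \PO\PF h$, and comparison with $f_{k_j+1}\to h$ produces the fixed-point equation $h = \PO\PF h$.

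The main obstacle is case (2): at any $\bom_0\in\cL$ with $G(\bom_0)=0$, the phase $\measuredangle G_{k_j}(\bom_0)$ can oscillate without converging, so $\PT$ is genuinely discontinuous at $G$. The remedy exploits that $\cL$ is finite and $\IS^1$ is compact: passing to a further subsequence (one refinement per zero frequency), we may assume $\measuredangle G_{k_j}(\bom_0) \to \theta(\bom_0)$ for every $\bom_0$ with $G(\bom_0)=0$. Defining $\theta(\bom) := \measuredangle G(\bom)$ at the remaining frequencies, the limiting values $G'_{k_j}(\bom)\to Y(\bom)e^{i\theta(\bom)}$ match exactly the definition \eqref{6} of $\PT^\theta$, so $\PF f_{k_j} \to \PF^\theta h$ by \eqref{5}. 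Continuity of $\PO$ then yields $f_{k_j+1} \to \PO\PF^\theta h$, and combining with $f_{k_j+1}\to h$ gives $h = \PO\PF^\theta h$, completing the proposition.
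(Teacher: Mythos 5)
Your proof is correct, and it differs from the paper's in the one step that matters: how the subsequential limit is identified as a fixed point. The paper never shows that $f_{k_j+1}\to h$; instead it observes that $\vep_f(f_k)$ converges to some $\eta$, that $f_{k_j+1}\to \PO\PF^{\theta}h$ by (sub)continuity, and hence that $\vep_f(\PO\PF^{\theta}h)=\eta=\vep_f(h)$, at which point the equality case of Proposition \ref{lemma1} (and the remark extending it to $\PF^{\theta}$) forces $\PO\PF^{\theta}h=h$. You instead upgrade the chain of inequalities in Proposition \ref{lemma1} to the Fej\'er-type estimate $\|f_{k+1}-f_k\|^2\le \vep_f(f_k)^2-\vep_f(f_{k+1})^2$ via the obtuse-angle property of projection onto the closed convex set $\Gamma$ (valid here with $\Re\langle\cdot,\cdot\rangle$, and using that $f_k\in\Gamma$ for $k\ge 1$), telescope to get asymptotic regularity $\|f_{k+1}-f_k\|\to 0$, and then simply equate the two limits of $f_{k_j+1}$. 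Your route buys a genuinely stronger quantitative conclusion, $\sum_k\|f_{k+1}-f_k\|^2\le\vep_f(f_1)^2$, which is the standard device in the alternating-projections literature and makes the fixed-point identification immediate; the paper's route is more economical in that it reuses only what Proposition \ref{lemma1} already provides, with no appeal to the Pythagorean refinement. The treatment of the delicate case, extracting a further subsequence so that the phases $\measuredangle G_{k_j}(\bom_0)$ converge at the finitely many frequencies where $\Phi\Lambda h$ vanishes and packaging the limits into $\PT^{\theta}$, is essentially identical in both arguments.
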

The proof of Proposition \ref{accumulation} is given in
the Appendix. 
The question is, Is a fixed point of ER necessarily a
phasing solution? With the uniform illumination, however, this is generally
not true \cite{GP}. When a fixed point fails to be
a phasing solution, it is called a trap and
can plague the reconstruction procedure (cf. Figure \ref{Cameraman}(a), \ref{Phantom}(a) and \ref{Phantom}(c)).  

Below, we answer this question in
the affirmative
under certain assumptions  for the case of RPI. 
The difficulty is
ER may converge to a  fixed point of $\PO\PF^{\theta}$ which fails to satisfy the Fourier magnitude data. In other words, the limiting point $h$ may not be a fixed point of $\PF^{\theta}$.

In the following main theoretical result of the paper, 
we prove that if $\PF^{\theta} h$ satisfies the zero-padding condition, then it must be the phasing solution.

\commentout{
If $\Phi\Lambda h$ vanishes somewhere, the orthogonal projection onto $\Omega$ is multivalued due to the ambiguity of $\measuredangle{0}$. $\PF$ is only one of the orthogonal projections. Theoretically, some convergent subsequence of $\{f_k\}$ might converge to a fixed point of $\PO\PF^{\theta}$ for some $\theta \neq \mathbf{0}$. Numerically, $\|f_{k+1} - f_k\| \rightarrow 0$, which implies that $f_k$ gets closer and closer to a fixed point of $\PO\PF$.}

 \begin{theorem}

Let $f(\bn)\in \mathcal{C}(\cN )$ be an array with $f(\mathbf{0}) \neq 0$ and of rank $\ge 2$. Let $\lambda(\bn)$ be $i.i.d.$ continuous random variables on $\mathbb{S}^1$. Let the Fourier magnitude be sampled on $\cL$.  Let
 $h$ be  a fixed point of $\PO\PF^{\theta}$ {such that $\PF^{\theta}h$ satisfies the zero-padding condition}.
 
  \begin{description}
\item[\rm (a)] If $f$ is real-valued, $h = \pm f$ with probability one,
\item[\rm (b)] If  $f$ satisfies the sector condition of Theorem \ref{UniqueComplexPositive}, then $h = e^{i \nu} f$, for some $\nu$,
and satisfies the same sector constraint with probability at least  $1- |\mathcal{N}| (\beta-\alpha)^{\llfloor S/2 \rrfloor} (2\pi)^{-\llfloor S/2\rrfloor}$.
\end{description}
\label{FixedPoint1}
\end{theorem}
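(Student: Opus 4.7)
The strategy is to reduce the theorem to the absolute uniqueness results of Section \ref{sec2} by examining the auxiliary array $g := \mathcal{P}_f^\theta h$. The fixed-point identity $h = \mathcal{P}_o\mathcal{P}_f^\theta h$ splits into the two relations $\mathcal{P}_f^\theta h = g$ and $\mathcal{P}_o g = h$. From formula \eqref{6}, $\mathcal{T}^\theta$ resets every Fourier magnitude to the measured value $Y$, and since $\Phi\Lambda$ is unitary this gives $|\Phi\Lambda g| = Y = |\Phi\Lambda f|$. The explicit hypothesis that $\mathcal{P}_f^\theta h$ satisfies the zero-padding condition places $g \in \mathcal{C}(\mathcal{N})$, so $g$ is a legitimate phasing candidate on the original grid carrying exactly the Fourier magnitude data of $f$.

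With $g$ thus pinned inside the ambient class of the uniqueness theorems, I apply them directly. In part (a), Theorem \ref{UniqueReal} gives $g = \pm f$ with probability one; since $\pm f$ is real it lies in the convex set $\Gamma$ of real arrays, so the orthogonal projection $\mathcal{P}_o$ fixes $g$ and the second relation collapses to $h = g = \pm f$, finishing the case. In part (b), Theorem \ref{UniqueComplexPositive} gives $g = e^{i\nu} f$ for some $\nu$ with probability at least $1 - |\mathcal{N}|(\beta-\alpha)^{\llfloor S/2 \rrfloor}(2\pi)^{-\llfloor S/2 \rrfloor}$. What remains is to transfer this information from $g$ back to $h$ via $h = \mathcal{P}_o g$.

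The main obstacle is precisely this last transfer in (b). The sector projection $\mathcal{P}_o$ acts pointwise and in general does not commute with a global phase rotation, so one cannot conclude $h = g$ for free. My plan is to exploit the other half of the fixed-point identity: if $e^{i\nu} f$ were not already in $\Gamma$, then $\mathcal{P}_o$ would strictly modify $g$ on a non-empty set of pixels, and the resulting $h \neq g$ fed back through $\mathcal{P}_f^\theta$ would have to reproduce exactly the same $g = e^{i\nu}f$. This imposes a rigid phase-alignment condition on $\Phi\Lambda h$ in Fourier space which, combined with the uniqueness already invoked (no other element of $\mathcal{C}(\mathcal{N})$ of Fourier magnitude $Y$ is available), forces $e^{i\nu} f \in \Gamma$ in the first place. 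Once $e^{i\nu} f \in \Gamma$ is established, $\mathcal{P}_o g = g$ and $h = e^{i\nu} f$ automatically satisfies the same sector constraint, as claimed.
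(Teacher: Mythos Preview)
Your reduction has a genuine gap at the point where you invoke Theorems \ref{UniqueReal} and \ref{UniqueComplexPositive} for the auxiliary array $g=\mathcal{P}_f^\theta h$. Those theorems assert uniqueness only among arrays that lie in the object-domain constraint set $\Gamma$ (real-valued in (a), sector-valued in (b)); they do \emph{not} assert that every complex array in $\mathcal{C}(\mathcal{N})$ matching the Fourier magnitude must equal $\pm f$ or $e^{i\nu}f$. Indeed this stronger claim is false: for any $\nu$ the array $e^{i\nu}f$ has the same Fourier magnitude data and lies in $\mathcal{C}(\mathcal{N})$, and more generally $\Lambda^{-1}(\Lambda f)_{\mathbf m+}$ does too for each admissible shift $\mathbf m$. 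The zero-padding hypothesis only gives $g\in\mathcal{C}(\mathcal{N})$; it says nothing about $g\in\Gamma$. It is $h=\mathcal{P}_o g$ that belongs to $\Gamma$, not $g$ itself, so the uniqueness theorems cannot be applied to $g$ directly, and your subsequent ``transfer'' argument in (b) is addressing the wrong obstacle.

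The paper's route avoids this by never assuming $g\in\Gamma$. From $|\Phi\Lambda g|=|\Phi\Lambda f|$ and almost-sure irreducibility of $\Lambda f$ it first uses the Hayes magnitude result (Proposition \ref{HayesMagnitude}, valid for arbitrary complex arrays) to obtain $g=e^{i\nu}\Lambda^{-1}(\Lambda f)_{\mathbf m\pm}$ for some shift $\mathbf m$. It then exploits the \emph{third} relation you did not use, namely the phase equality $\measuredangle\Phi\Lambda g=\measuredangle\Phi\Lambda h$, together with the phase-only uniqueness of Proposition \ref{HayesPhase} (which needs the non-conjugate-symmetry Lemma \ref{NonCS}), to express $h$ itself as a randomly twisted shift of $f$. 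Only at this point does the membership $h\in\Gamma$ enter, via the arguments behind Theorems \ref{UniqueReal} and \ref{UniqueComplexPositive}, to force $\mathbf m=\mathbf 0$ and pin down the global phase. The missing ingredients in your proposal are precisely this use of the phase relation between $h$ and $g$ and the phase-only uniqueness proposition.
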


\subsection{HIO}
The hybrid input-output (HIO) algorithm is a 
widely used, better-performing phasing method 
than ER's \cite{Fienup}. HIO differs from ER in how to update the image in the object domain
in order to avoid the trapping and stagnation.

Below we present a modified version of  Fienup's HIO 
which performs better than the original version.
We refer to Figure \ref{fig1} for  the notation.  
In HIO, the last step $\cP_o$ of ER iteration is replaced
by the following. 

\commentout{
\begin{itemize}
\item When $\PO\{f'_k(\bn)\} = 0$, $f_{k+1}(\bn) = f_k(\bn) - \beta f'_k(\bn)$.\\
\item When $\PO\{f'_k(\bn)\} \neq 0$, let $\vec{e}_o = \frac{\PO\{f'_k(\bn)\}}{\|\PO\{f'_k(\bn)\}\|}$ be a unit vector in the projection direction and $\vec{e}_\perp$ be a unit normal vector orthogonal to $\vec{e}_o$. Decomposing $f_k(\bn)$ and $f'_k(\bn)$ onto $\vec{e}_o$ and $\vec{e_\perp}$ yields
$$f_k(\bn) = [f_k(\bn)]_o \ \vec{e}_o + [f_k(\bn)]_\perp \ \vec{e}_\perp,$$
$$f'_k(\bn) = [f'_k(\bn)]_o \ \vec{e}_o + [f'_k(\bn)]_\perp \ \vec{e}_\perp.$$
Then
\begin{align}
[f_{k+1}(\bn)]_o & = \left\{ \begin{array}{ll} [f'_k(\bn)]_o & \text{ if } [f'_k(\bn)]_o \ge 0 \\ 
{[f_k(\bn)]}_o - \beta {[f'_k(\bn)]}_o & \text{ else} 
\end{array}\right.  
\label{hio1}\\
[f_{k+1}(\bn)]_\perp & = [f_k(\bn)]_\perp - \beta [f'_k(\bn)]_\perp, 
\label{hio2}
\end{align}
and 
\begin{equation}
f_{k+1}(\bn) = [f_{k+1}(\bn)]_o \ \vec{e}_o + [f_{k+1}(\bn)]_\perp \ \vec{e}_\perp.
\label{hio3}
\end{equation}
\end{itemize}

Overall, we consider the components of $f_k(\bn)$ and $f'_k(\bn)$ in the projection  and normal directions, and then update $f_{k+1}(\bn)$ in these two directions respectively. In the projection direction, \eqref{hio1} amounts to push $[f'_k(\bn)]_o$ towards $0$ if it's negative. \eqref{hio2} always pushes the normal component of $f'_k(\bn)$ towards $0$. The idea of our modification is to push $f'_k$ towards $\PO\{f'_k\}$ by updating $f_{k+1}$ in the projection and normal directions.
}

%Let $\cS$ be the support of the object, the pixel values outside $S$ are updated as
%$$f_{k+1}(\bn) = f_k(\bn) - \beta f'_k(\bn), \ \bn \notin \cS .$$

\begin{itemize}
\item 
When $\Gamma$ is the set of real-valued images,
\beq
\label{8}
\Re(f_{k+1}(\bn)) &=& \Re(f'_k(\bn))\\
\Im(f_{k+1}(\bn)) &= &\Im(f_{k}(\bn)) - \beta  \cdot \Im(f'_{k}(\bn)),
\eeq
If, in addition,  the nonnegativity constraint is assumed,
then 
\beq
\Re(f_{k+1}(\bn)) &=& \left\{ \begin{array}{ll}
\Re(f'_k(\bn)) & \text{if } \Re(f'_k(\bn)) \ge 0 \\
\Re(f_k(\bn)) - \beta \cdot \Re(f'_k(\bn))   & \text{if } \Re(f'_k(\bn)) < 0\end{array}, \right.\\
\Im(f_{k+1}(\bn)) &= &\Im(f_{k}(\bn)) - \beta  \cdot \Im(f'_{k}(\bn)).
\eeq

\item When $\Gamma$ is the set of complex-valued images with nonnegative real and imaginary parts,
\beq
\Re(f_{k+1}(\bn)) = \left\{ \begin{array}{ll}
\Re(f'_k(\bn)) & \text{if } \Re(f'_k(\bn)) \ge 0 \\
\Re(f_k(\bn)) - \beta \cdot \Re(f'_k(\bn))   & \text{if } \Re(f'_k(\bn)) < 0\end{array}. \right.\\
\Im(f_{k+1}(\bn)) = \left\{ \begin{array}{ll}
\Im(f'_k(\bn)) & \text{if } \Im(f'_k(\bn)) \ge 0 \\
\Im(f_k(\bn)) - \beta \cdot \Im(f'_k(\bn))   & \text{if } \Im(f'_k(\bn)) < 0\end{array}. \right.\label{13}
\eeq

\end{itemize}

\subsection{Algorithms with two illuminations}
Let $\lambda_1(\bn)$ and $\lambda_2(\bn)$ be two arrays representing two illuminating fields. Two sets of Fourier magnitude data
$Y_1 = | \Phi \Lambda_1 f|$
and
$Y_1 = | \Phi \Lambda_2 f|$ are collected, each
with an OR $\sigma$. 
Let $\mathcal{T}_1$ and $\mathcal{T}_2$ be  the intensity fitting operators corresponding to $Y_1$ and $Y_2$, respectively,   as in  \eqref{IntensityFitting}.  Thus the projections onto the set of images satisfying the Fourier magnitude data $Y_1$ and $Y_2$ are, respectively, 
$$\mathcal{P}_1= \Lambda_1^{-1} \Phi^{-1} {\mathcal{T}}_1 \Phi \Lambda_1$$
and 
$$\mathcal{P}_2 = \Lambda_2^{-1} \Phi^{-1} {\mathcal{T}}_2 \Phi \Lambda_2.$$

The corresponding ER algorithm with two sets of Fourier magnitude data $Y_1$ and $Y_2$ is given by 
\beq
\label{14}
f_{k+1} = \PO \mathcal{P}_2  \mathcal{P}_1 f_k.
\eeq
The corresponding HIO is obtained by replacing $\PO$ in (\ref{14}) by 
(\ref{8})-(\ref{13}).

\section{Numerical Simulations}
\label{sec4}
%\commentout{
\begin{figure}[h]
\begin{center}
         \subfigure[]{
         \includegraphics[width = 6cm]{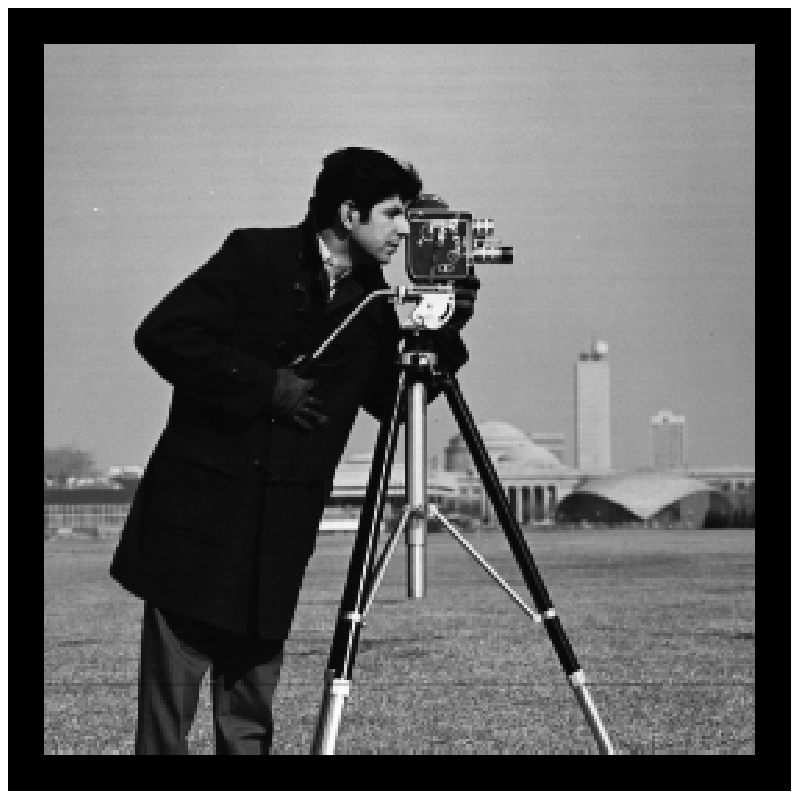}}
                 \subfigure[]{
         \includegraphics[width = 6cm]{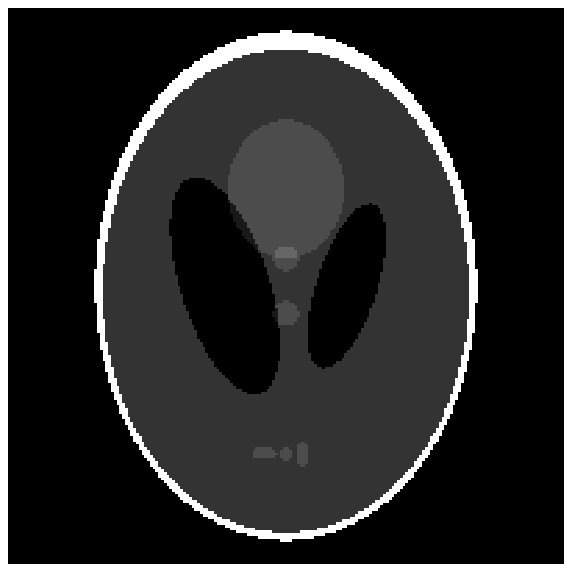}}
 \caption{Test images of loose support: (a) $269\times 269$ Cameraman (b) $200\times 200$ Phantom where
 the dark borders represent loose support.} 
 \end{center}
 \label{fig:true}
\end{figure}
%}
In this section, we perform numerical phasing  from the Fourier intensity measurement with UI or RPI.

Our test images are the $256\times 256$ Cameraman
and  the $138 \times 184$ Phantom. 
We surround both images by dark (i.e. zero-valued) border  to create  images of loose support. Images
of loose support are typically more challenging to
reconstruct. 
For Cameraman the border is $13$ pixel wide in each
dimension and the resulting image has $269\times 269$ pixels in total. For  Phantom  the dark margin is such that the  resulting image has  $200 \times 200$ pixels.

For the oversampling ratio $\sigma$, we zero pad  the images to generate a $269\sqrt{\sigma} \times 269\sqrt{\sigma}$  Cameraman and  $200\sqrt{\sigma} \times 200\sqrt{\sigma}$ Phantom. We synthesize the Fourier magnitude data by  applying the FFT to the array.  

\commentout{ Our main point includes: once absolute uniqueness is enforced by the use of random phase illumination, both ER and HIO converge to the true solution quickly, even with the lack of convexity; Not only high resolution but also low resolution random phase illumination render a good recovery and strong stability to various type of noise, as we demonstrate that the reconstruction error increases almost linearly with respect to the noise level; With high resolution random phase illumination, real-valued nonnegative images are perfectly recovered when $\sigma = 1$, which means the Fourier intensity data are not oversampled and no support constraint is required. Phasing without support constraint is of great interest in physical sciences \cite{MabCS, Mab3D}. }

\subsection{Error, Residual and Noise}
Let $\hat{f}$ be the recovered image. The relative error is defined as  $$e(\hat{f}) = 
\left\{ 
\begin{array}{ll}
{\|f-\hat{f}\|}/{\|f\|} & \text{ if absolute uniqueness holds} \\
{\displaystyle \min_{\nu\in [0,2\pi)}\|f-e^{i\nu}\hat{f}\|}/{\|f\|} & \text{ if uniqueness holds only up to a global phase}
\end{array}, \right.
$$ 
and the relative residual is defined as
$$r(\hat{f}) = \frac{\| \ Y - |\Phi \Lambda \PO\{\hat{f}\}| \ \| }{\| Y\|}$$
where $\PO$ is introduced 
if $\hat f$ may not strictly satisfy the object domain constraint as in the case of HIO.

\commentout{
The normalized errors defined in \eqref{tvepo} and \eqref{tvepf} measure how far the reconstruction is from the object domain constraint and the Fourier intensity measurement respectively.}

We consider three types of noise: Gaussian, Poisson and
illumination noise, the last of which is defined as follows. 
%Poisson noise level is defined as the ratio between the standard deviation and magnitude of the exact measurement at each pixel. 
Suppose the illumination field is noisy $\tilde{\lambda}(\bn) = \exp(i \tilde{\phi}(\bn))$ with $\tilde{\phi}(\bn) = \phi(\bn) + t(\delta,\bn)$ where  $t(\delta,\bn)$ are  independent, uniform random variables  in $[-\pi\delta / 100, \pi \delta /100], \delta>0$.

We also test phasing with low resolution illumination which does not consist independently distributed  pixel values 
 but independently distributed  blocks
of deterministic (indeed, uniform) values. In our experiments,  illumination of independent $40 \times 40$ blocks works well for real-valued nonnegative images
and,  for complex images, illumination of  independent $4 \times 4$ blocks works well.

\subsection{Convergence Test}

\begin{figure}[h]
  \centering
  \subfigure[]{
    \includegraphics[width = 3cm]{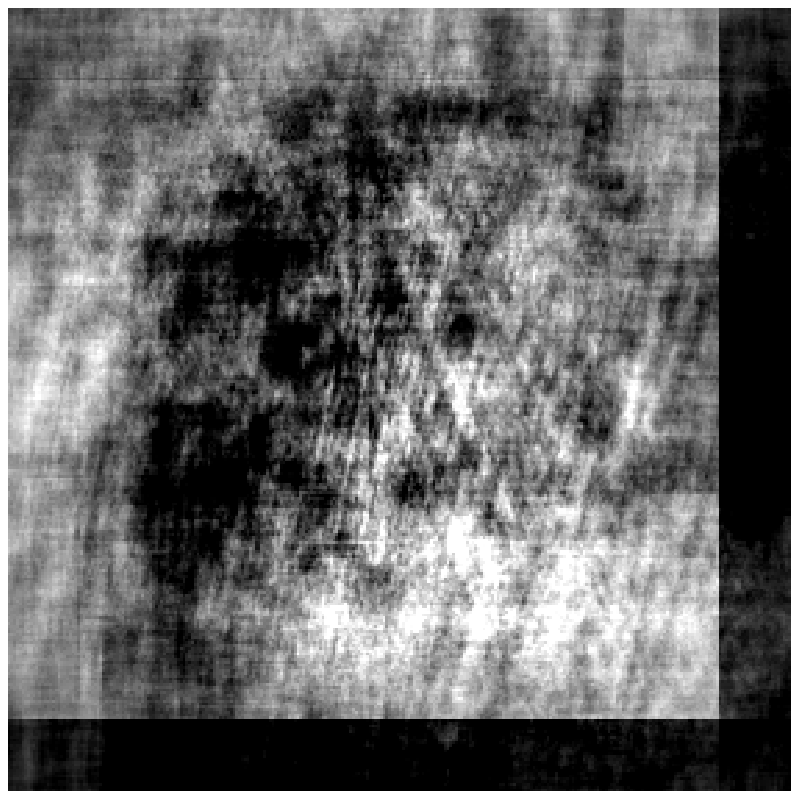}}
      %%%%%%%%%%%%%%%%%%%%%%%%%%%%%%%%%
          \subfigure[]{
         \includegraphics[width = 4cm, height = 3cm]{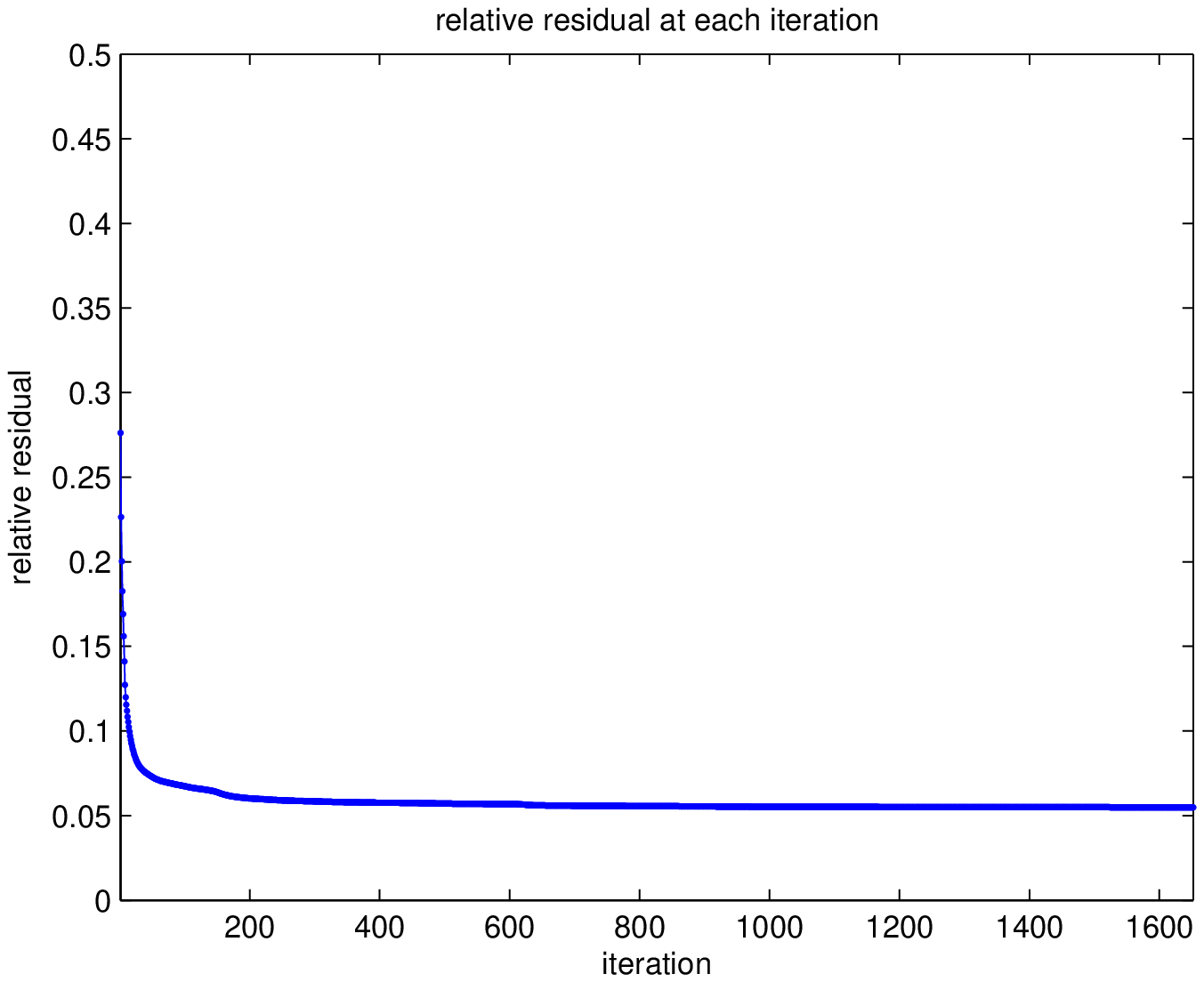}   \label{fig2b}}
        %%%%%%%%%%%%%%%%%%%%%%%%%%%%%%%%%
        \subfigure[]{
    \includegraphics[width = 3cm]{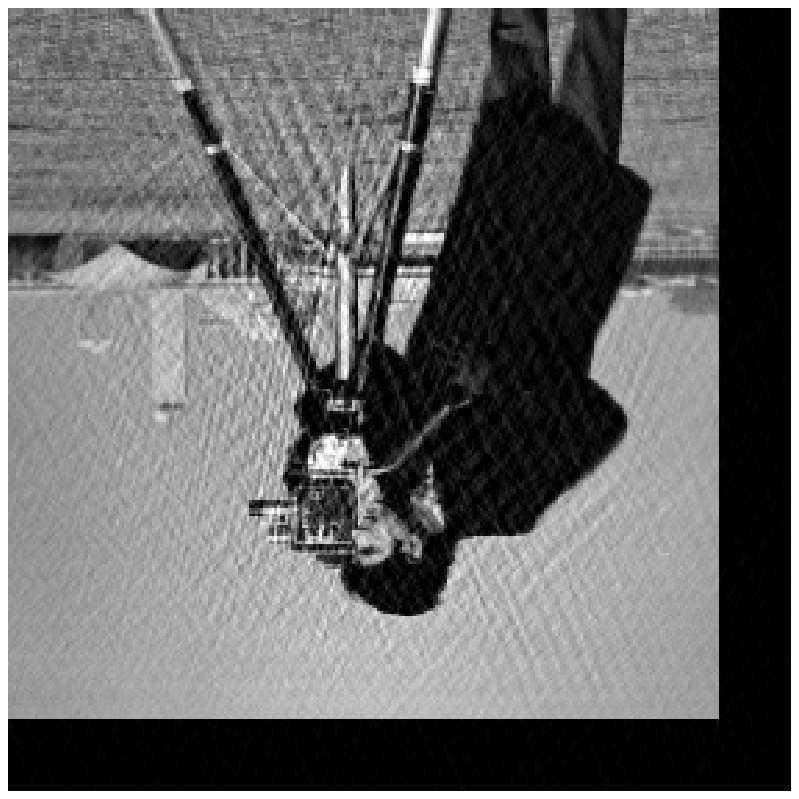}}
      %%%%%%%%%%%%%%%%%%%%%%%%%%%%%%%%%
          \subfigure[]{
         \includegraphics[width = 4cm, height = 3cm]{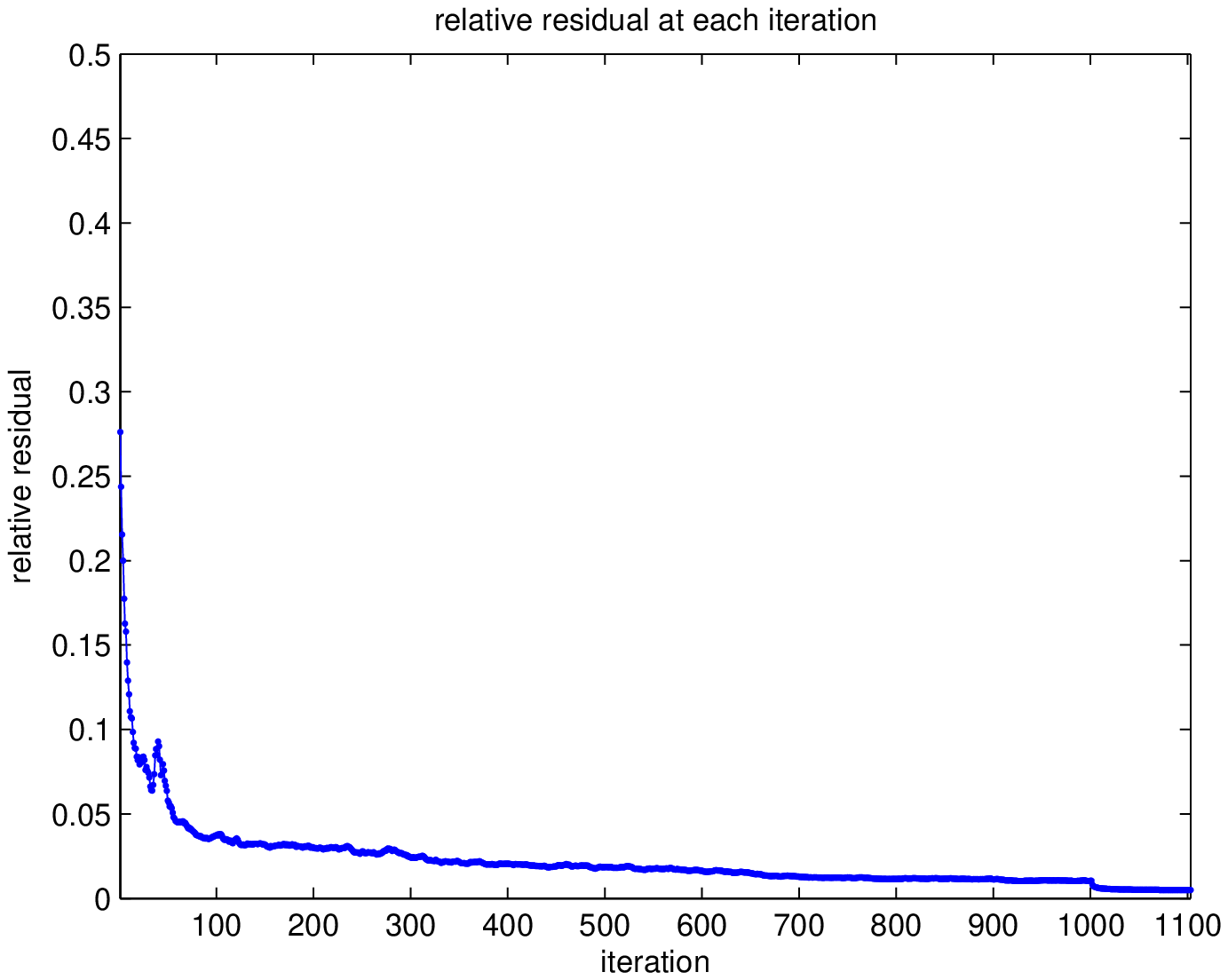}}
        %%%%%%%%%%%%%%%%%%%%%%%%%%%%%%%%%
         %%%%%%%%%%%%%%%%%%%%%%%%%%%%%%%%%
                 \subfigure[]{
    \includegraphics[width = 3cm]{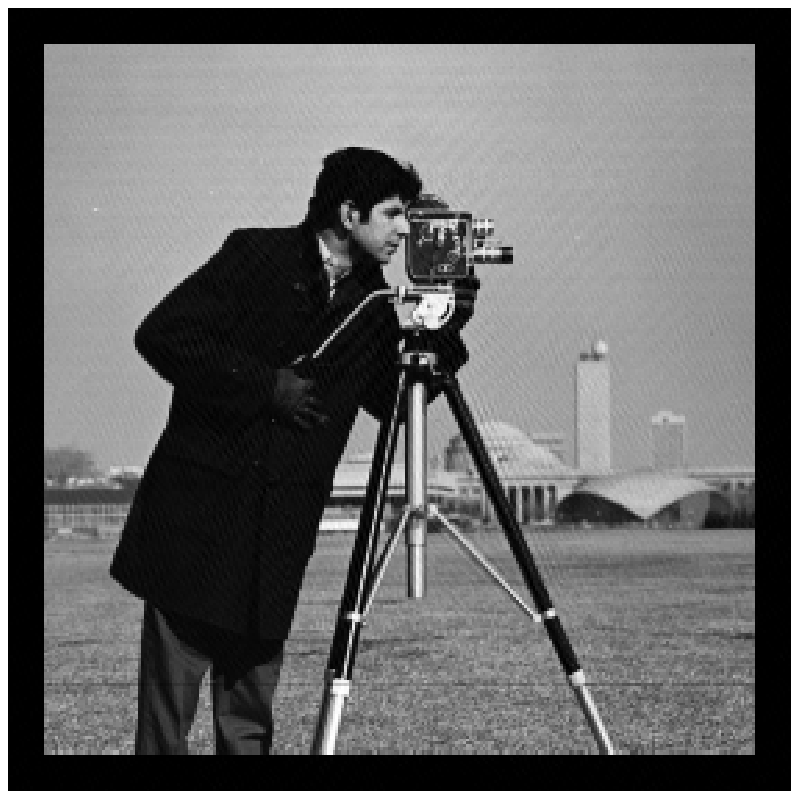}}
      %%%%%%%%%%%%%%%%%%%%%%%%%%%%%%%%%
          \subfigure[]{
         \includegraphics[width = 4cm, height = 3cm]{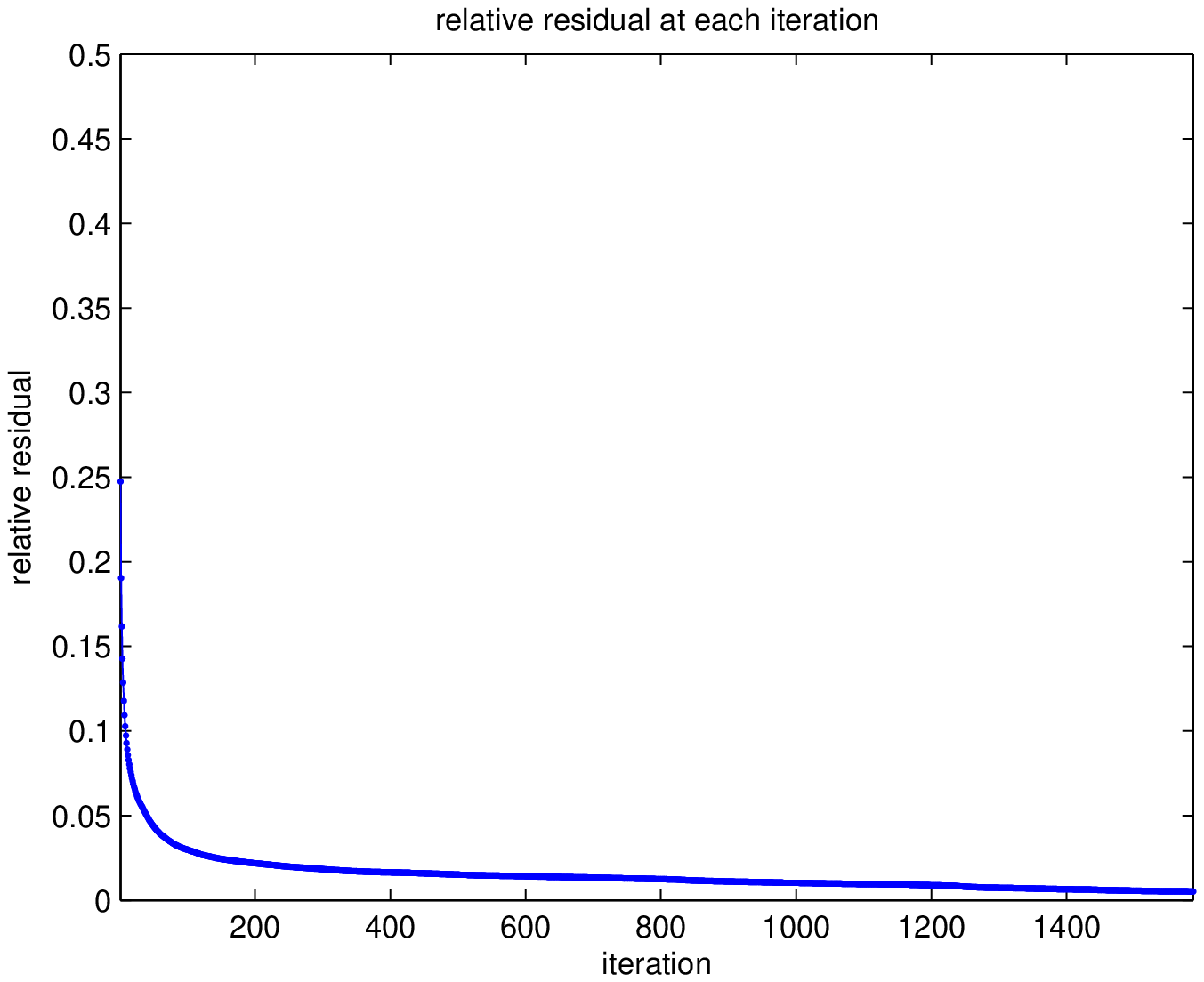}}
        %%%%%%%%%%%%%%%%%%%%%%%%%%%%%%%%%
          \subfigure[]{
    \includegraphics[width = 3cm]{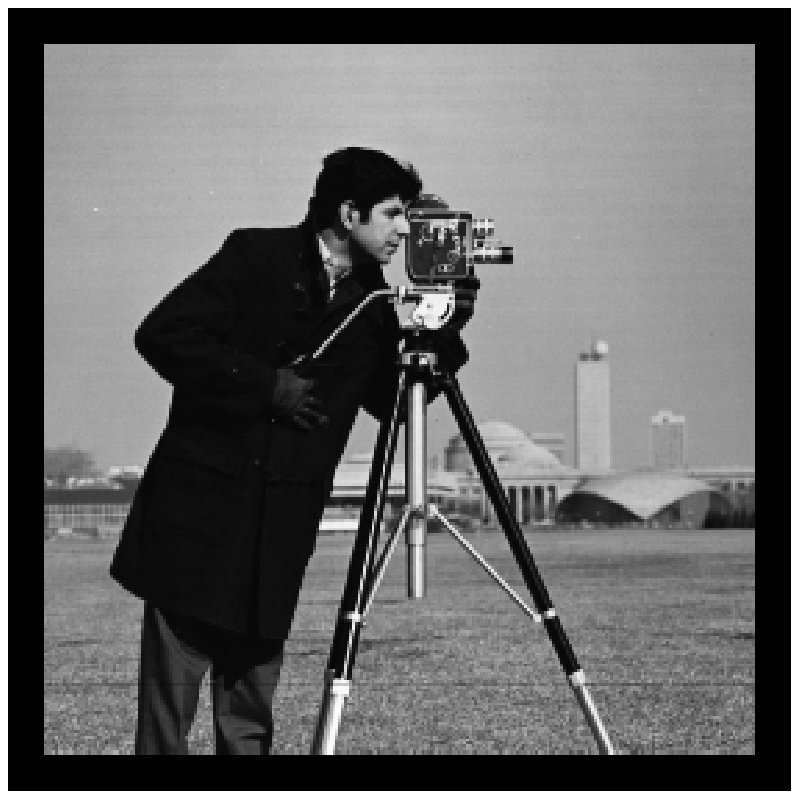}}
      %%%%%%%%%%%%%%%%%%%%%%%%%%%%%%%%%
          \subfigure[]{
         \includegraphics[width = 4cm, height = 3cm]{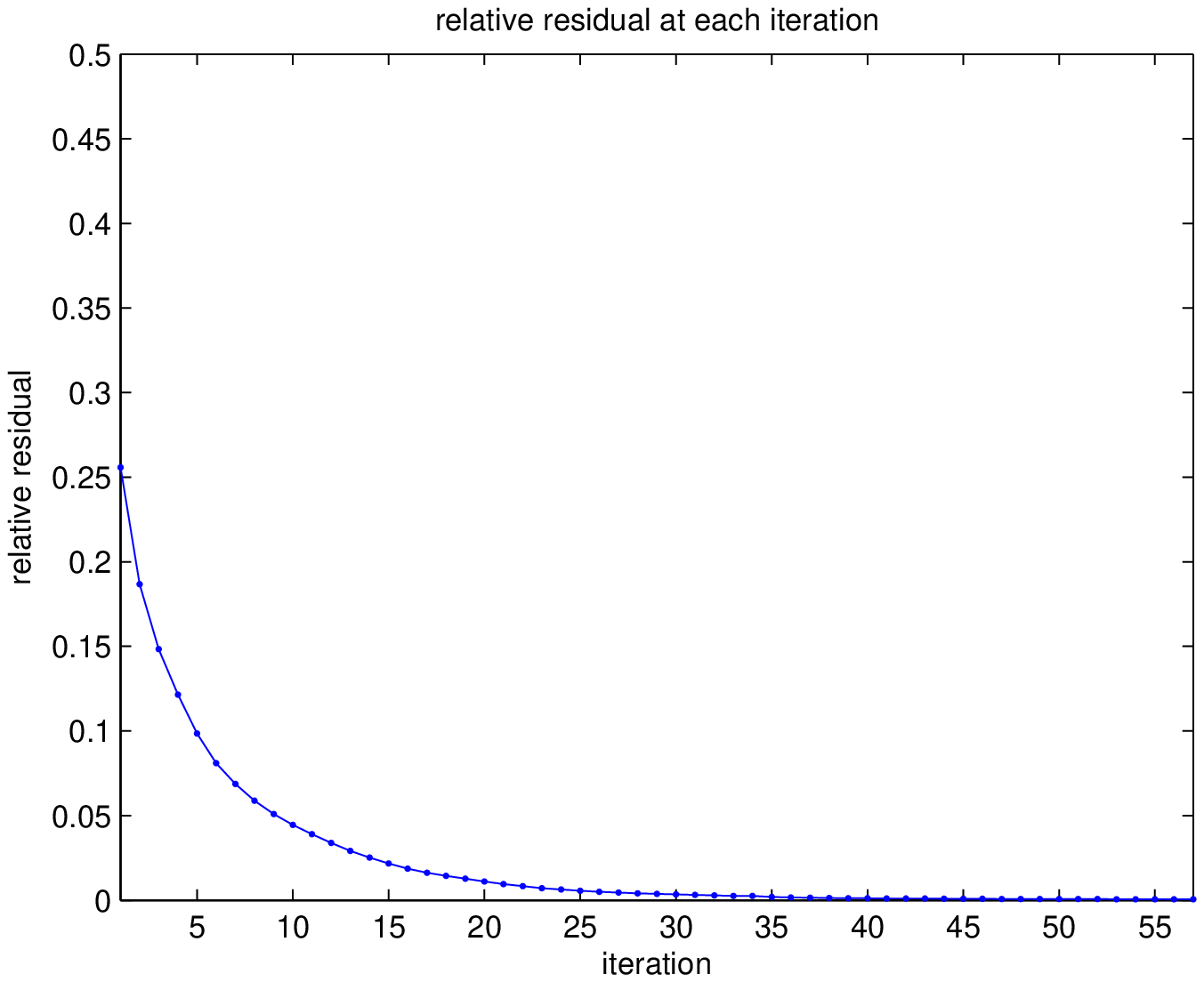}}
        %%%%%%%%%%%%%%%%%%%%%%%%%%%%%%%%%
         %%%%%%%%%%%%%%%%%%%%%%%%%%%%%%%%%
     \subfigure[]{
    \includegraphics[width = 3cm]{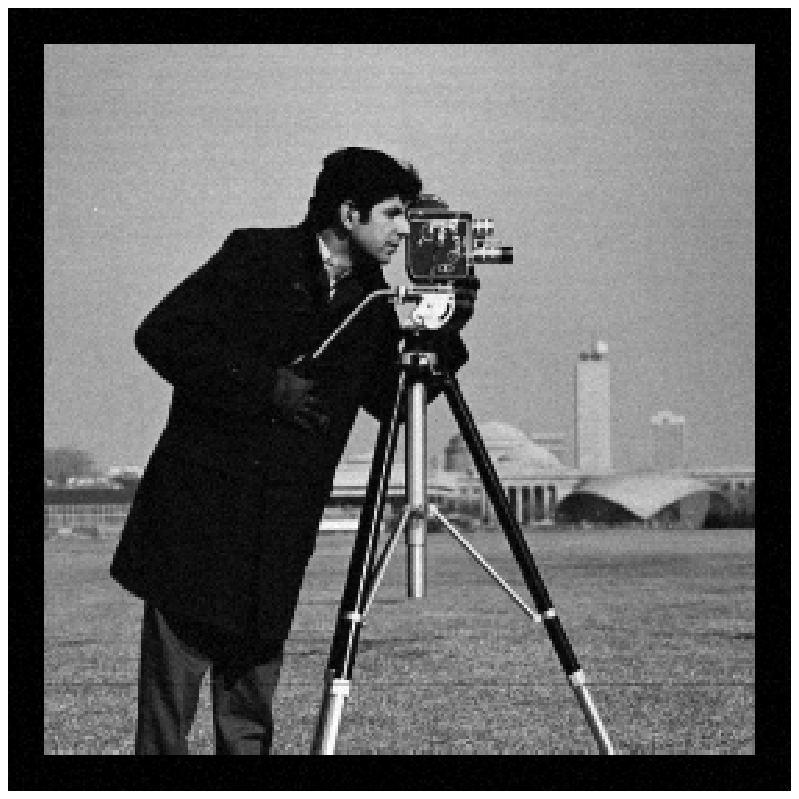}
 }
      %%%%%%%%%%%%%%%%%%%%%%%%%%%%%%%%%
          \subfigure[]{
         \includegraphics[width = 4cm, height = 3cm]{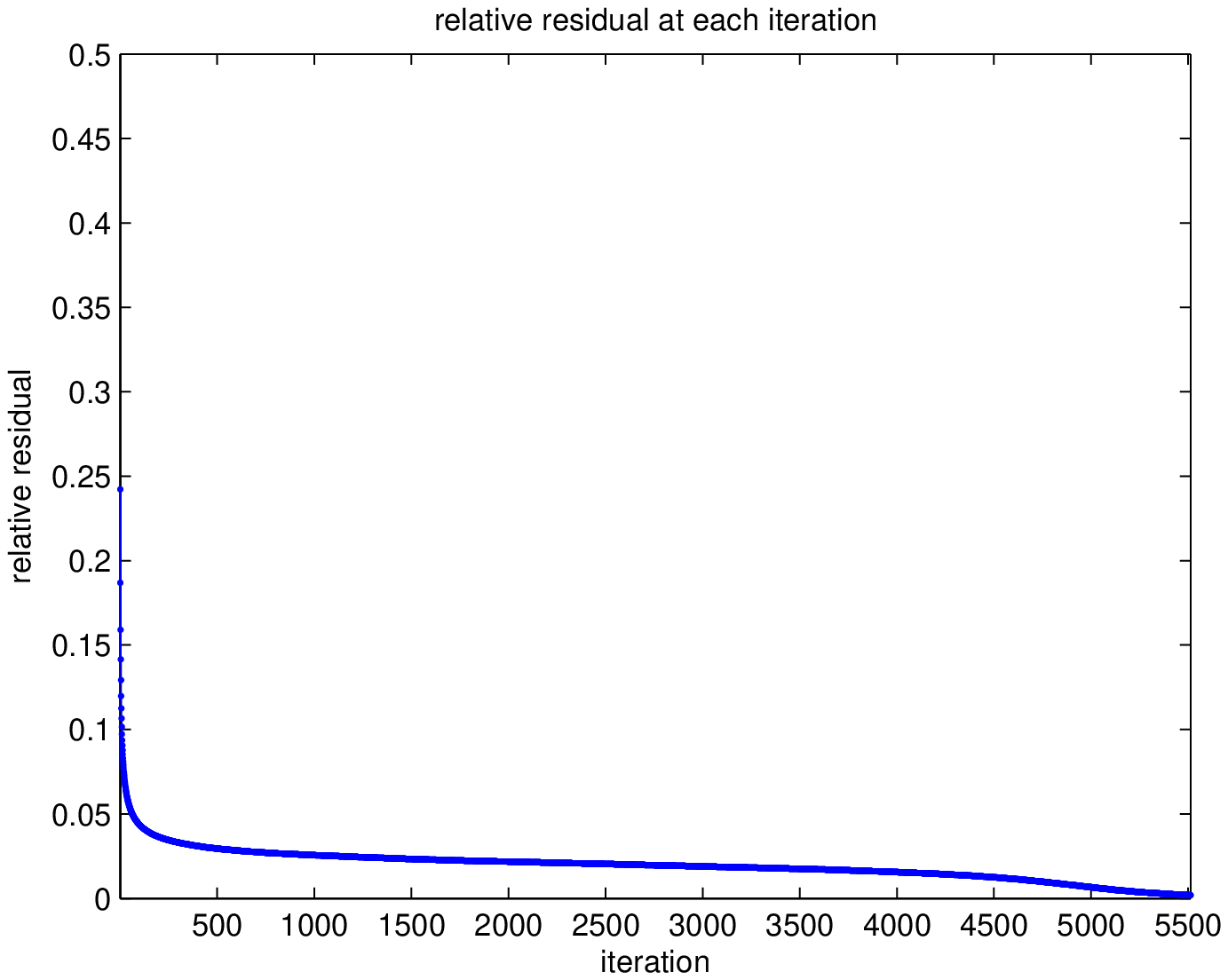}}
        %%%%%%%%%%%%%%%%%%%%%%%%%%%%%%%%%
          \subfigure[]{
    \includegraphics[width = 3cm]{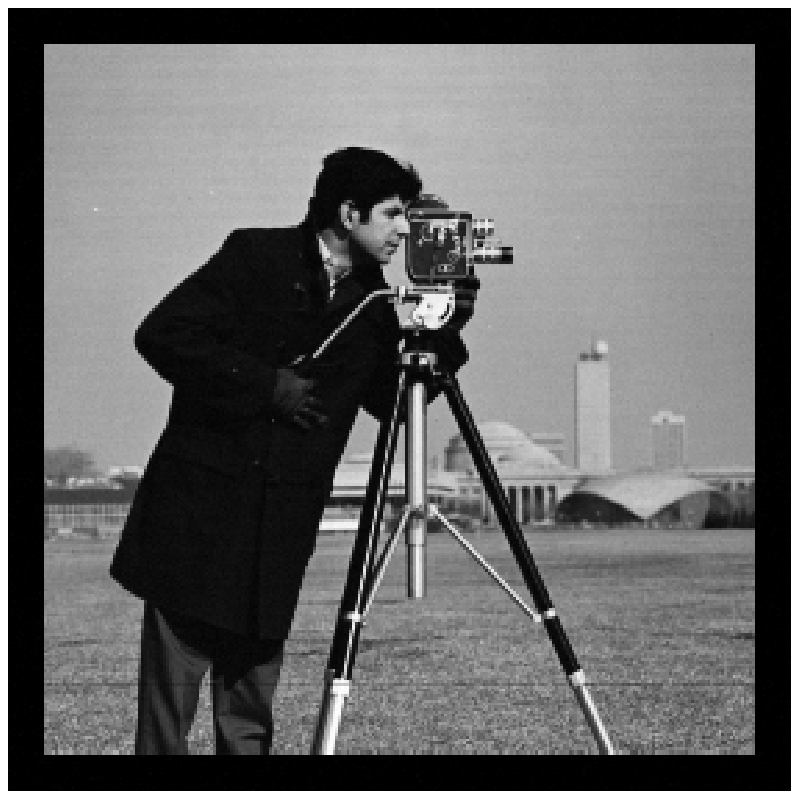}}
      %%%%%%%%%%%%%%%%%%%%%%%%%%%%%%%%%
          \subfigure[]{
         \includegraphics[width = 4cm, height = 3cm]{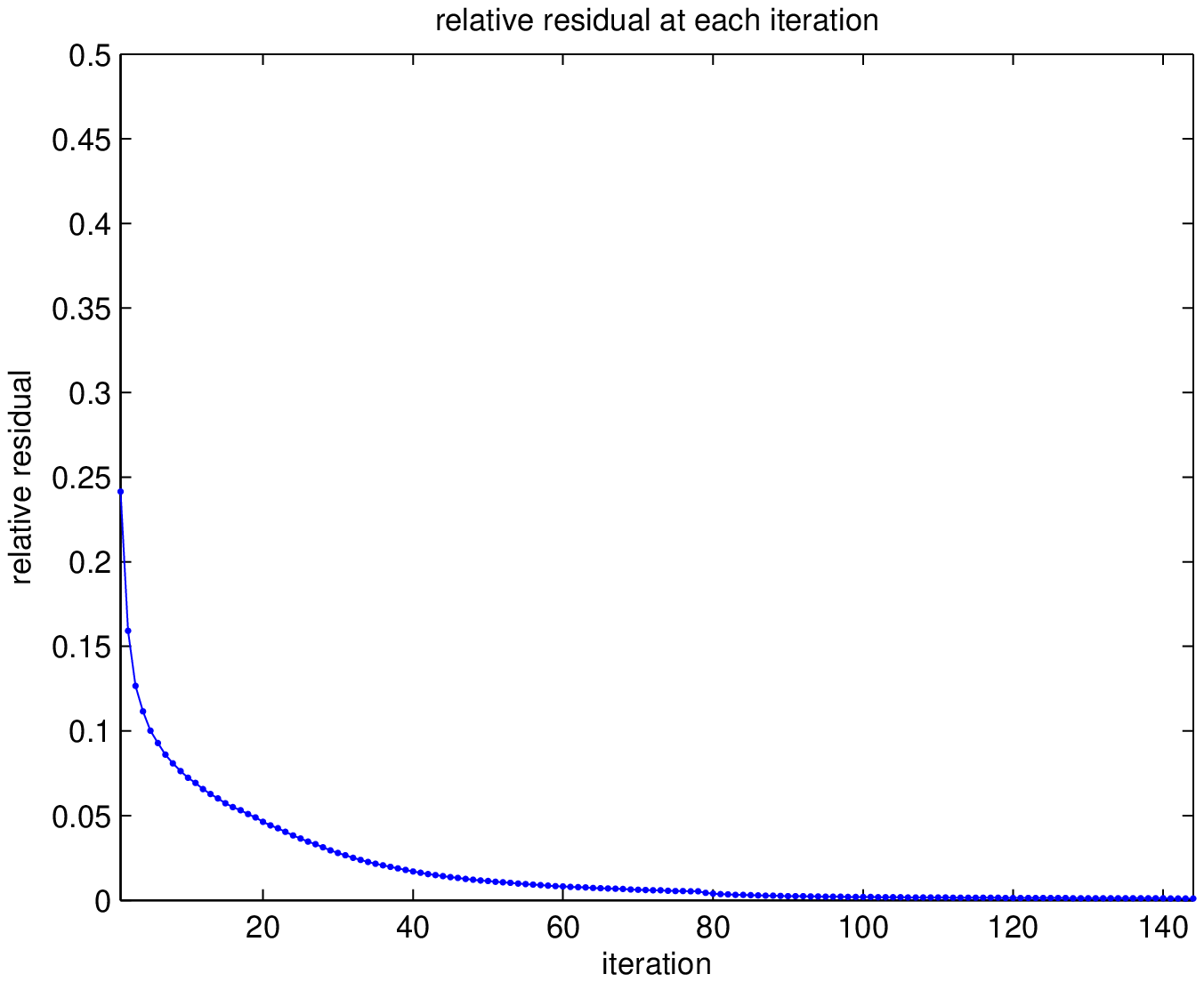}}
          \caption{(a) Recovery by $1651$ ER iterations with  UI and  $\sigma = 4$. (b) $r(f_k)$ versus k with $r(\hat{f}) \approx 5.49\%$. 
          (c) Recovery by $1000$ HIO + $103$ ER with UI and $\sigma = 4$. (d) $r(f_k)$ versus k with $r(\hat{f}) \approx 0.49\%$. 
          (e) Recovery by $1587$ ER steps with one low resolution RPI with $\sigma = 2$.  (f) $r(f_k)$ versus k with $r(\hat{f}) \approx 0.52\%$ 
          and $e(\hat{f}) \approx 2.51\%$.  
             (g) Recovery by $33$ HIO + $24$ ER steps with low resolution RPI  with $\sigma = 2$. (h)  $r(f_k)$ versus k with $r(\hat{f}) \approx 0.05\%$
           and   $e(\hat{f}) \approx 0.32\%$.  
             (i) Recovery by $5512$ ER steps with high resolution RPI with $\sigma = 1$.  (j) $r(f_k)$ versus k with  $r(\hat{f}) \approx 0.19\%$
             and $e(\hat{f}) \approx 3.27\%$. 
             (k) Recovery by $77$ HIO + $67$ ER steps with high resolution RPI with $\sigma = 1$. (l) $r(f_k)$ versus k with $r(\hat{f}) \approx 0.10\%$
             and $e(\hat{f}) \approx 1.39\%$.   
          }
        \label{Cameraman} %% label for entire figure
\end{figure}

\begin{figure}[h]
  \centering
  \subfigure[]{
    \includegraphics[width = 3cm]{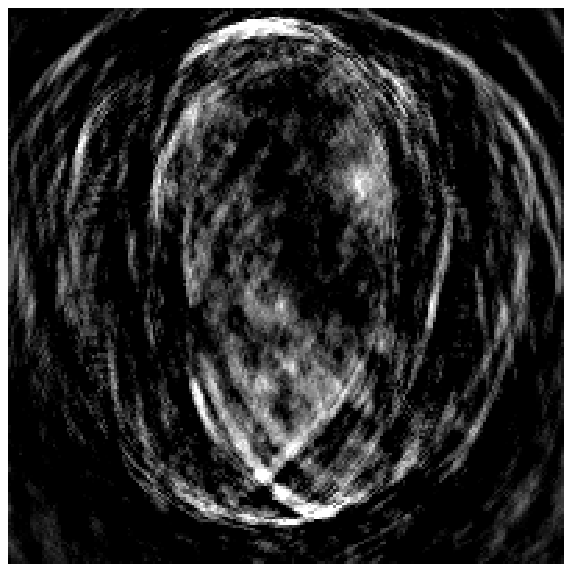}}
      %%%%%%%%%%%%%%%%%%%%%%%%%%%%%%%%%
          \subfigure[]{
         \includegraphics[width = 4cm, height = 3cm]{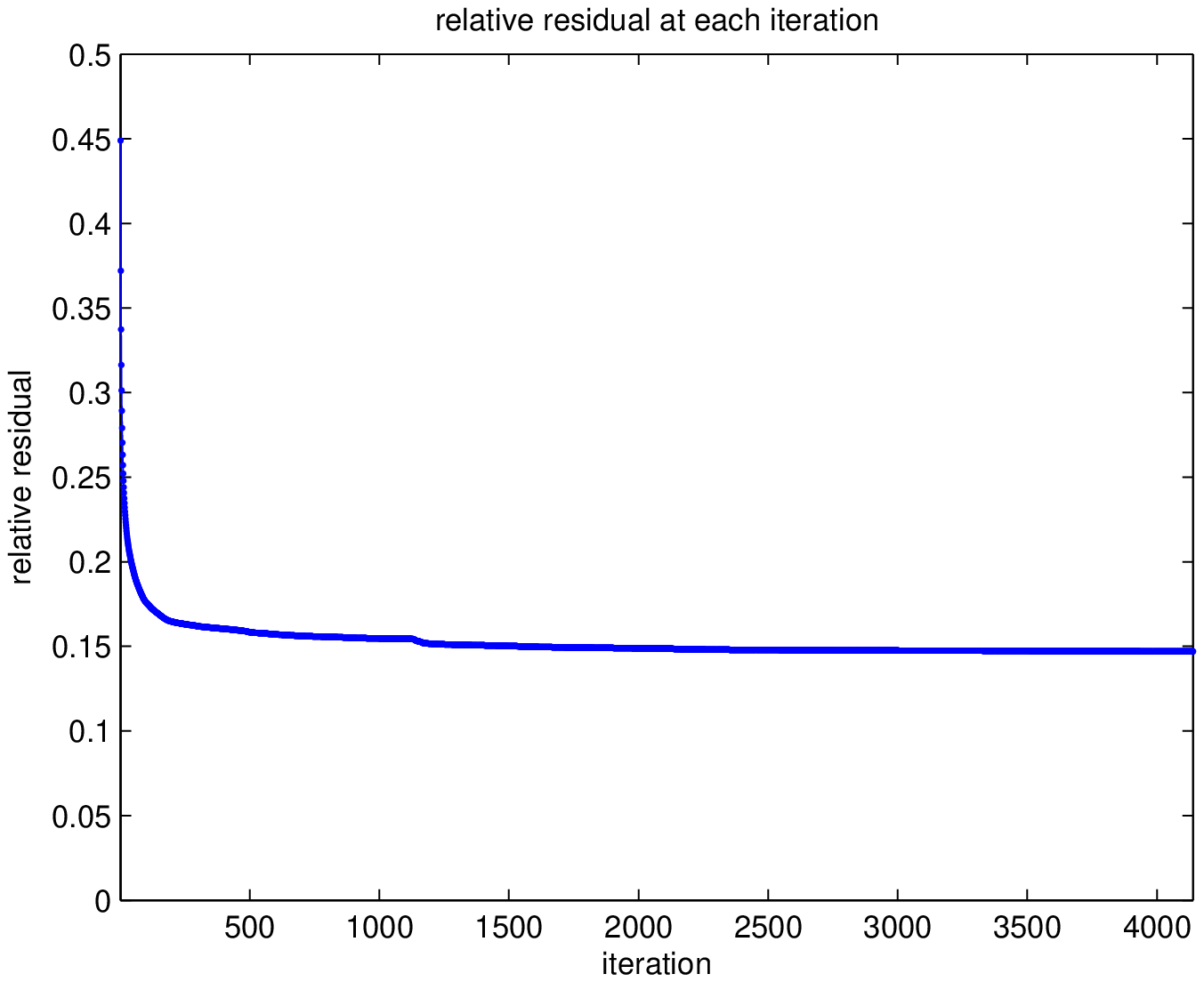}}
        %%%%%%%%%%%%%%%%%%%%%%%%%%%%%%%%%
        \subfigure[]{
    \includegraphics[width = 3cm]{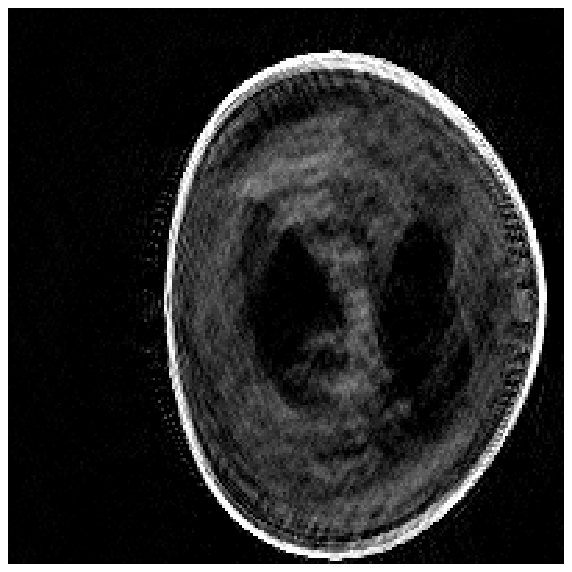}}
      %%%%%%%%%%%%%%%%%%%%%%%%%%%%%%%%%
          \subfigure[]{
         \includegraphics[width = 4cm, height = 3cm]{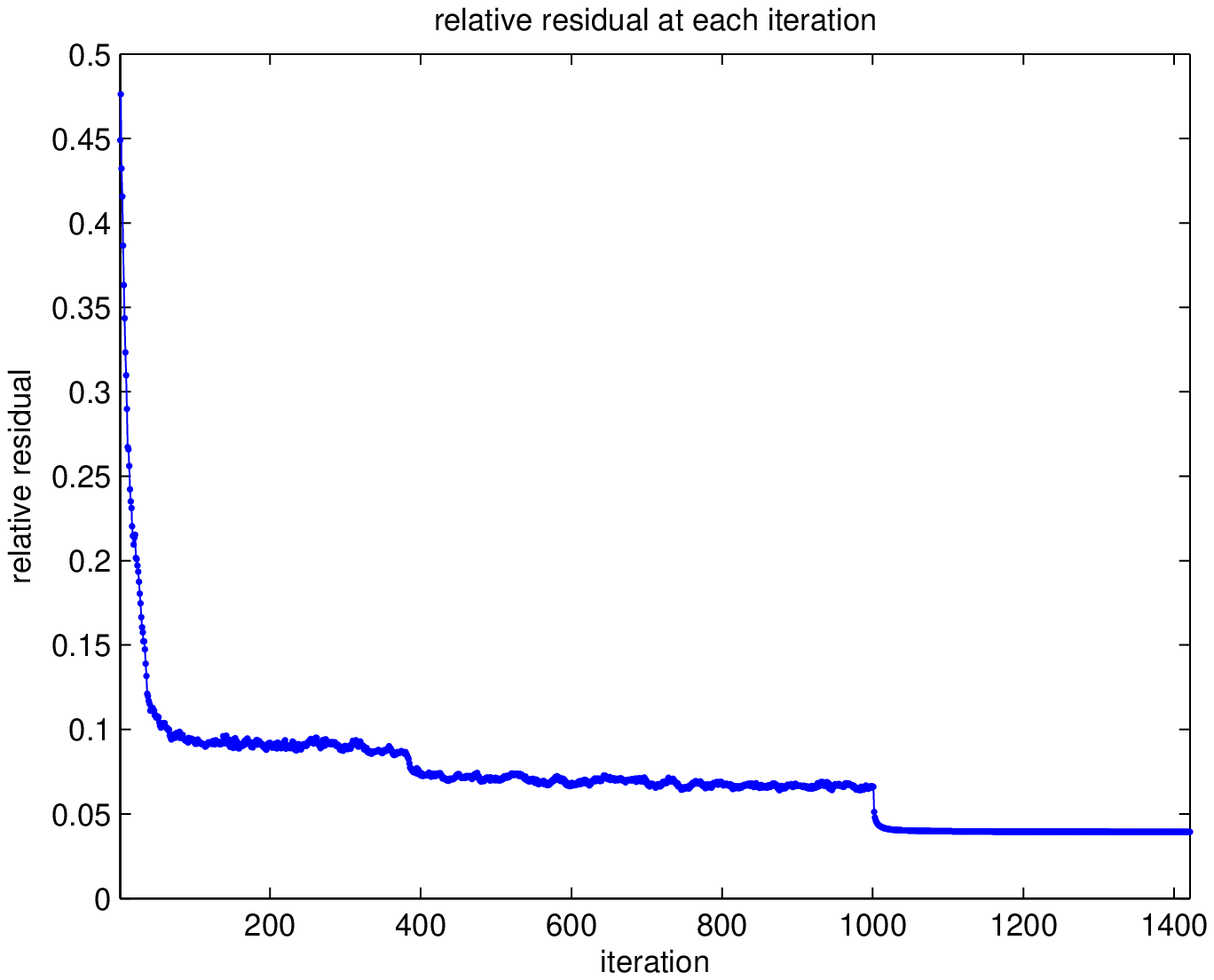}}
        %%%%%%%%%%%%%%%%%%%%%%%%%%%%%%%%%
         %%%%%%%%%%%%%%%%%%%%%%%%%%%%%%%%%
                 \subfigure[]{
    \includegraphics[width = 3cm]{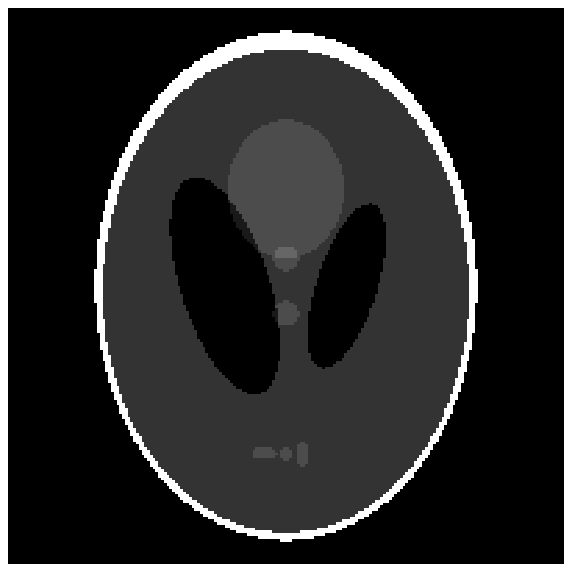}}
      %%%%%%%%%%%%%%%%%%%%%%%%%%%%%%%%%
          \subfigure[]{
         \includegraphics[width = 4cm, height = 3cm]{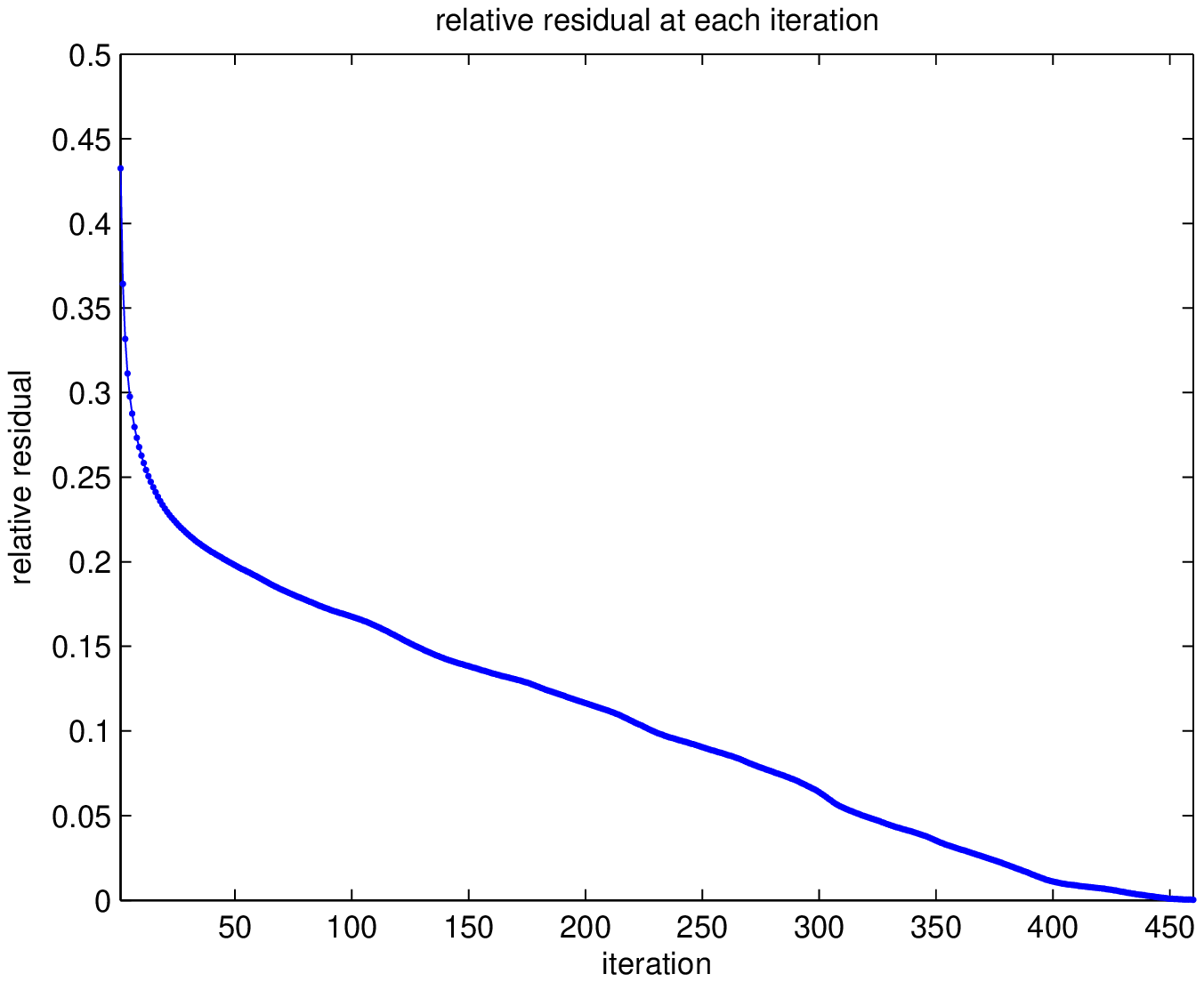}}
        %%%%%%%%%%%%%%%%%%%%%%%%%%%%%%%%%
          \subfigure[]{
    \includegraphics[width = 3cm]{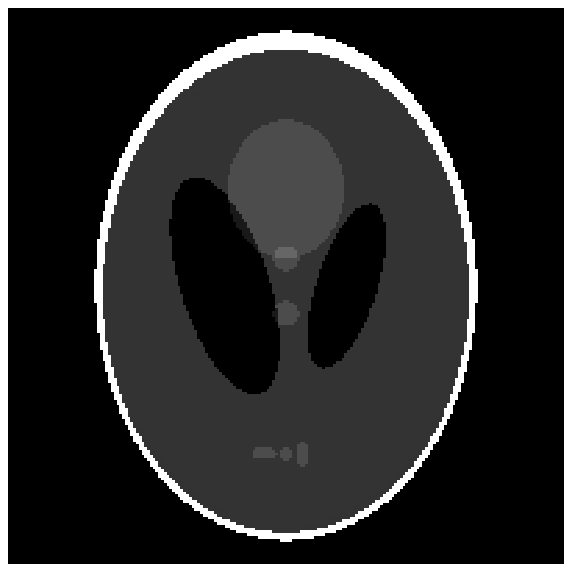}}
      %%%%%%%%%%%%%%%%%%%%%%%%%%%%%%%%%
          \subfigure[]{
         \includegraphics[width = 4cm, height = 3cm]{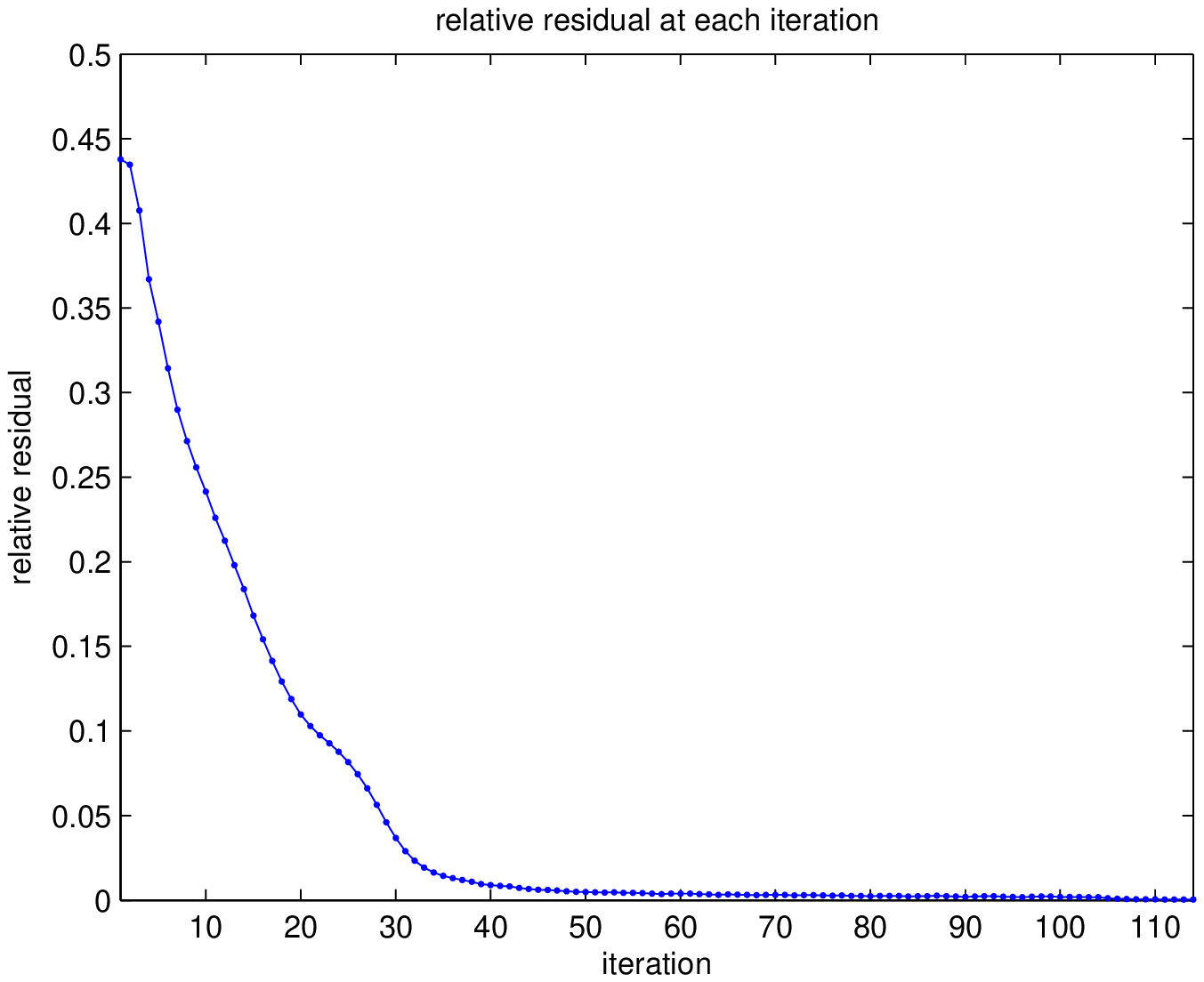}}
        %%%%%%%%%%%%%%%%%%%%%%%%%%%%%%%%%
         %%%%%%%%%%%%%%%%%%%%%%%%%%%%%%%%%
     \subfigure[]{
    \includegraphics[width = 3cm]{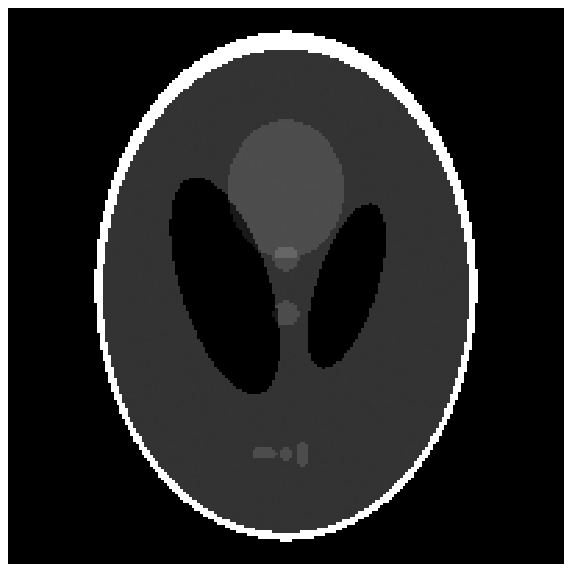}}
      %%%%%%%%%%%%%%%%%%%%%%%%%%%%%%%%%
          \subfigure[]{
         \includegraphics[width = 4cm, height = 3cm]{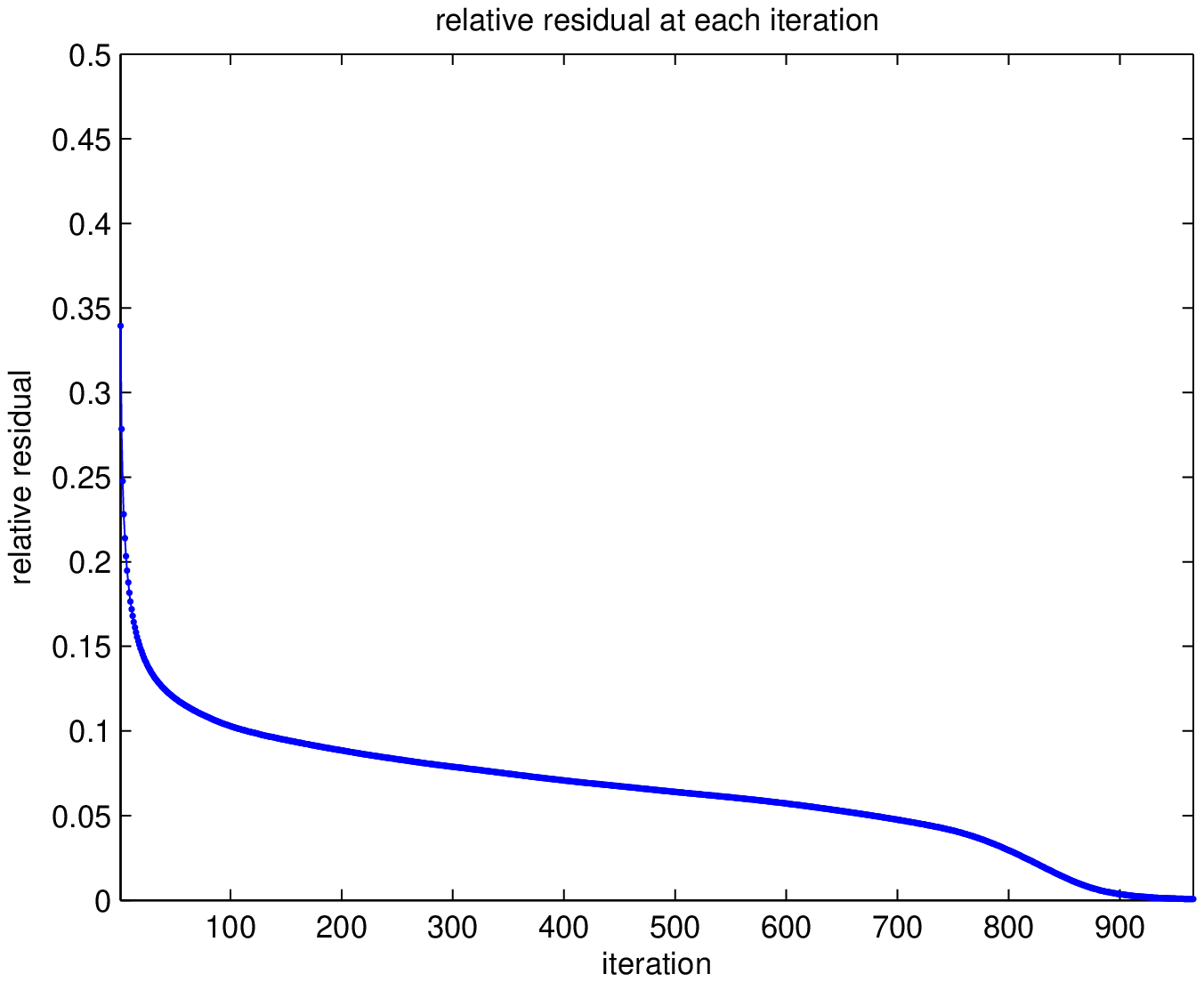}}
        %%%%%%%%%%%%%%%%%%%%%%%%%%%%%%%%%
          \subfigure[]{
    \includegraphics[width = 3cm]{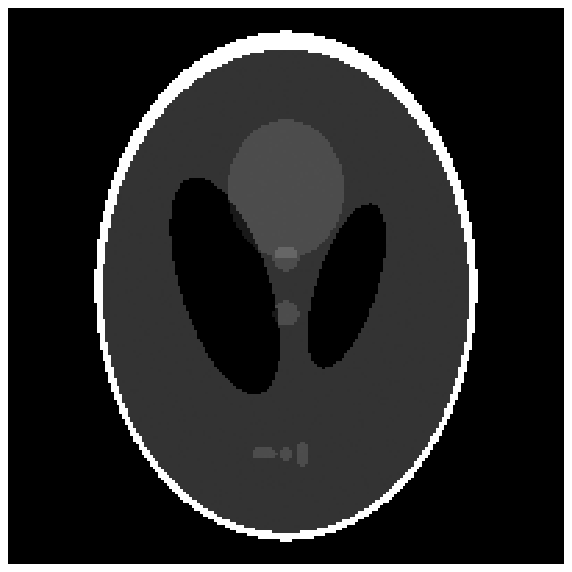}}
      %%%%%%%%%%%%%%%%%%%%%%%%%%%%%%%%%
          \subfigure[]{
         \includegraphics[width = 4cm, height = 3cm]{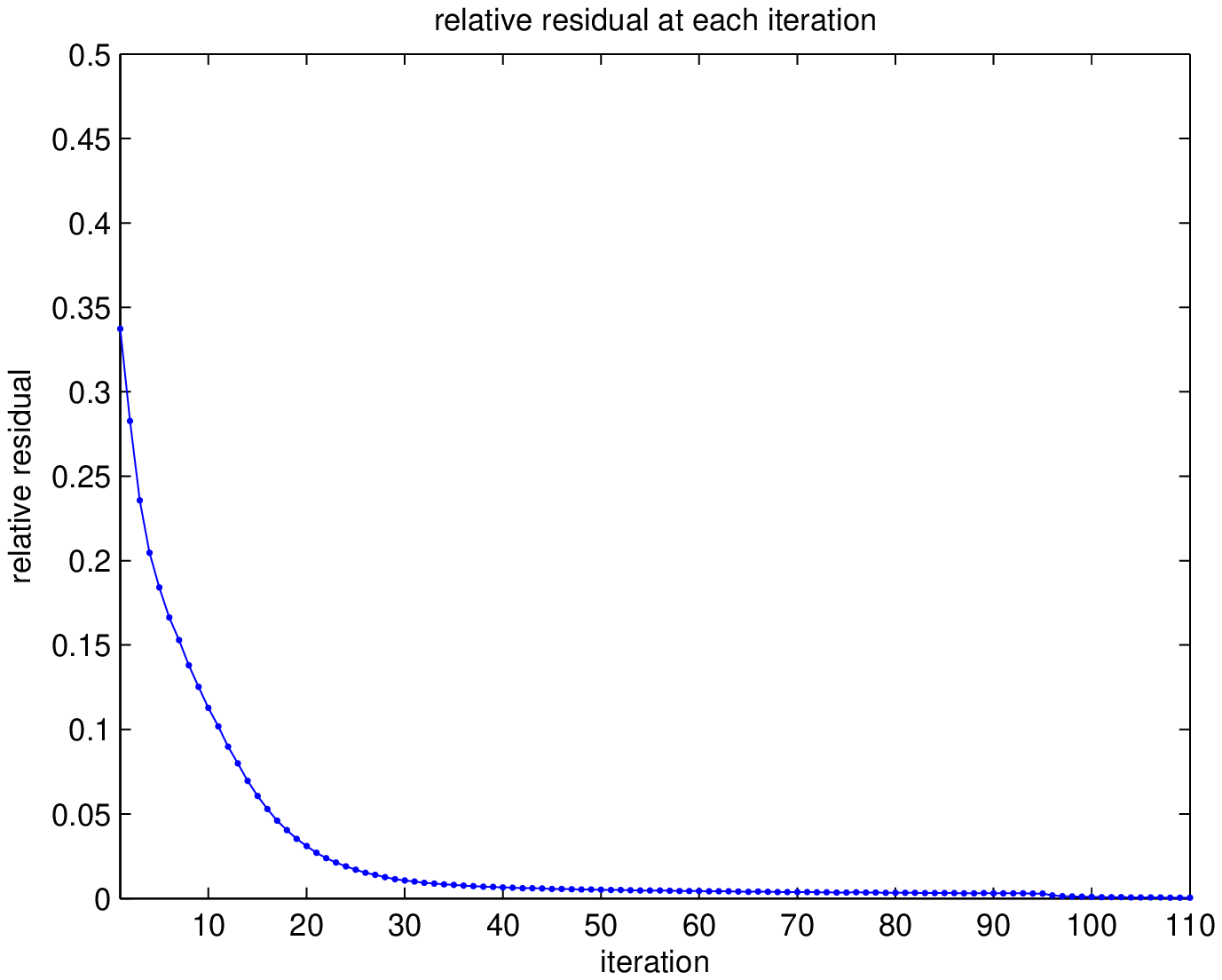}}
          \caption{(a) Recovery by $4140$ ER iterations with UI with $\sigma = 4$.   (b) $r(f_k)$ versus k with
         $r(\hat{f}) \approx 14.71\%$.
          (c) Recovery by $1000$ HIO + $421$ ER steps with one UI with $\sigma = 4$.   (d) $r(f_k)$ versus k
           with $r(\hat{f}) \approx 3.94\%$.
          (e) Recovery by $460$ ER with one low resolution RPI  with $\sigma = 2$.  (f) $r(f_k)$ versus k with $r(\hat{f}) \approx 0.03\%$ and  $e(\hat{f}) \approx 0.09\%$. 
             (g) Recovery by $103$ HIO + $11$ ER steps with one low resolution RPI with $\sigma = 2$.   (h) $r(f_k)$ versus k with $r(\hat{f}) \approx 0.03\%$
             and $e(\hat{f}) \approx 0.12\%$.  
             (i) Recovery  by $966$ ER steps with one high resolution RPI with $\sigma = 1$.  (j) $r(f_k)$ versus k with $r(\hat{f}) \approx 0.06\%$ and $e(\hat{f}) \approx 0.40\%$. 
             (k) Recovery  by $94$ HIO + $16$ ER steps with one high resolution RPI  with $\sigma = 1$.  (l) $r(f_k)$ versus k with $r(\hat{f}) \approx 0.04\%$
             and $e(\hat{f}) \approx 0.19\%$. 
          }
 \label{Phantom} %% label for entire figure
\end{figure}

The reconstruction of the  real-valued nonnegative images Cameraman and Phantom with one UI or RPI  is shown in Figures \ref{Cameraman} and \ref{Phantom} respectively.
For Figures \ref{Cameraman} and \ref{Phantom}, we terminate the pure ER when $\|f_{k+1}-f_k\|/\|f_k\|<0.01\%$.  For HIO+ER, HIO is stopped when $\|f_{k+1}-f_k\|/\|f_k\|<1\%$ with a maximal $1000$ iterations and ER is terminated when $\|f_{k+1}-f_k\|/\|f_k\|<0.01\%$. In Figure \ref{Cameraman} and \ref{Phantom}, $\|\hat{f}-\PO\PF\hat{f}\|/\|\hat{f}\| $ is as small as $0.01\%$, implying  that $\hat{f}$ is near a fixed point of $\PO\PF$.

As commented before the pure ER iteration always converges to a fixed point of $\PO\PF$. But 
with one uniform illumination and $\sigma=4$, the fixed point of $\PO\PF$ is not a phasing solution  as the relative residual stagnates at $5.49\%$ in Figure \ref{Cameraman}(b) and at $14.71\%$ in Figure \ref{Phantom}(b). HIO followed by ER improves the recovery over pure ER but the recovered Cameraman in Figure \ref{Cameraman}(c) displays 
the well known  artifact of stripe pattern and the recovered Phantom in Figure \ref{Phantom}(c) is severely blurred and distored. 

With one low resolution RPI (block size: $40 \times 40$) and  $\sigma = 2$, the recovered images in Figure \ref{Cameraman}(e), \ref{Cameraman}(g), \ref{Phantom}(e) and \ref{Phantom}(g) are excellent approximation to the true images, even though absolute uniqueness is not guaranteed for low resolution RPI.  HIO+ER 
is superior to pure ER  in significant speed-up in
convergence (Figure \ref{Cameraman}(f) versus \ref{Cameraman}(h), Figure \ref{Phantom}(f) versus \ref{Phantom}(h)).

With one high resolution RPI, high quality reconstruction 
can still be achieved with the oversampling ratio equal to $1$, cf. Figures \ref{Cameraman}(i), \ref{Cameraman}(k), \ref{Phantom}(i) and \ref{Phantom}(k). Notice the rapid convergence of HIO+ER in Figures \ref{Cameraman}(l)
and \ref{Phantom}(l).

\subsection{Oversampling Ratio Test}
To systematically test the oversampling ratio 
required for phasing with RPI, we
introduce 5\% different types of noise (Gaussian, Poisson, Illumination), use low (block size: 40 $\times $ 40)  as well as high resolution RPI and let $\sigma$ vary.
We use an adaptive version of HIO+ER:  HIO and ER are terminated if the residual increases in  $5$ consecutive  iterations.
The  relative error of reconstruction for 
the nonnegative image Phantom
is averaged over 5 trials and shown in Figure \ref{SRtest}(a). Clearly the relative error steadily decreases as the oversampling ratio increases. Without noise, low resolution RPI can achieve near zero error with $\sigma = 1.1$. With 5\% noise, the relative error stabilizes after  $\sigma=2$ to
a level comparable to the noise. 

Next we consider the complex-valued Phantom with phases randomly distributed in the sector  $[0, \pi/2]$. Figure \ref{SRtest}(b) shows the average relative  error $e(\hat{f})$  with one high resolution or low resolution (block size: $4 \times 4$) RPI and three kinds of noise. Again the relative error stabilizes
after $\sigma=2$ to a level comparable to the noise. Note
that for $1.8<\sigma<2$, there are more free variables in the complex-valued image than in the Fourier intensity data
and yet the reconstructions are still of good quality. 

Finally, we consider the complex-valued Phantom with phases random distributed in $[0, 2\pi]$. Figure \ref{SRtest}(c) shows the average relative error $e(\hat{f})$  with one high resolution or low resolution (block size: $4 \times 4$) RPI  {\em plus one UI}. Excellent recovery is achieved 
for  $\sigma \ge 1.8$. 

\begin{figure}[ht]
  \centering
  \subfigure[]{
    \includegraphics[width = 8cm]{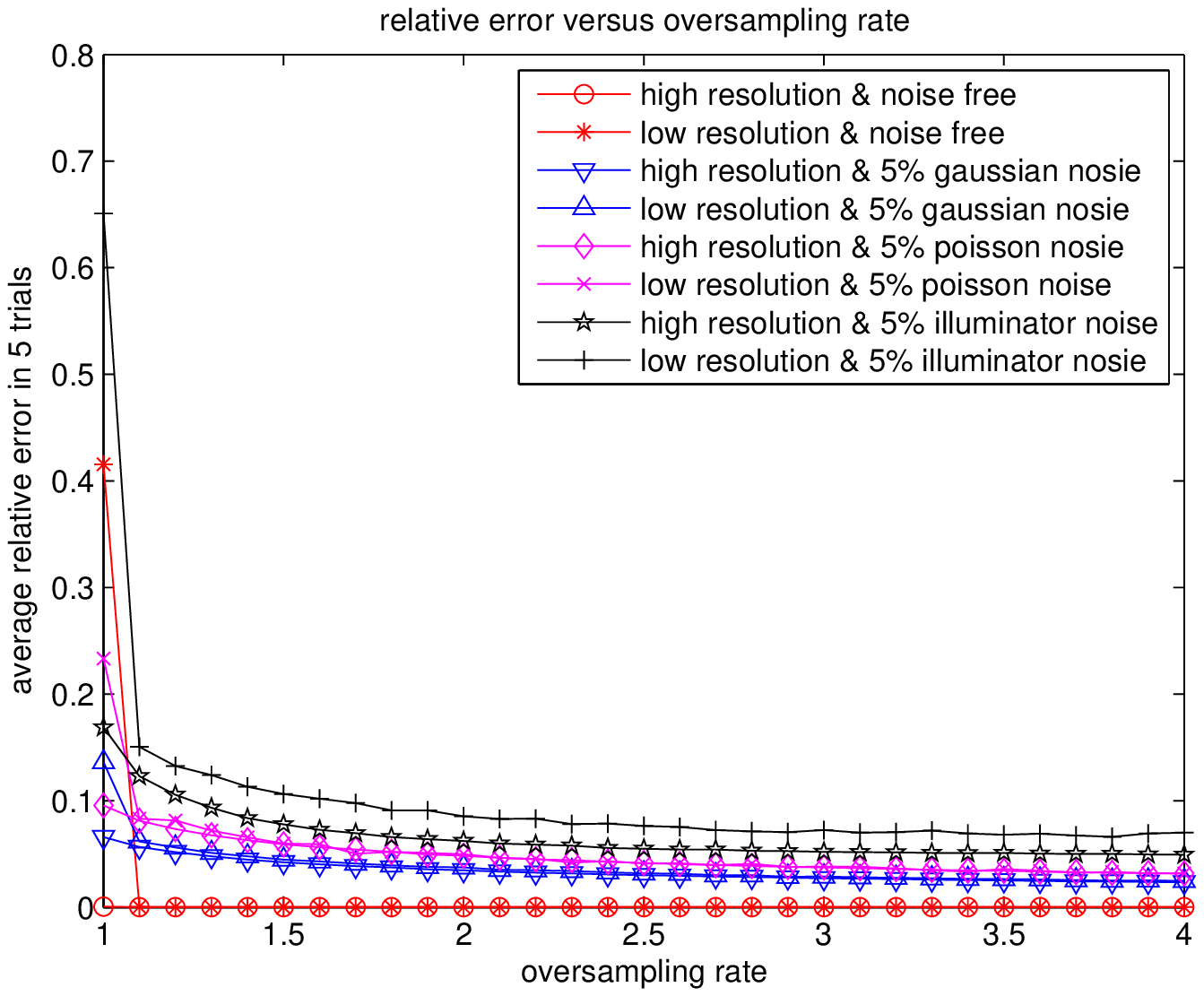}}
      %%%%%%%%%%%%%%%%%%%%%%%%%%%%%%%%%
      \subfigure[]{
    \includegraphics[width = 8cm]{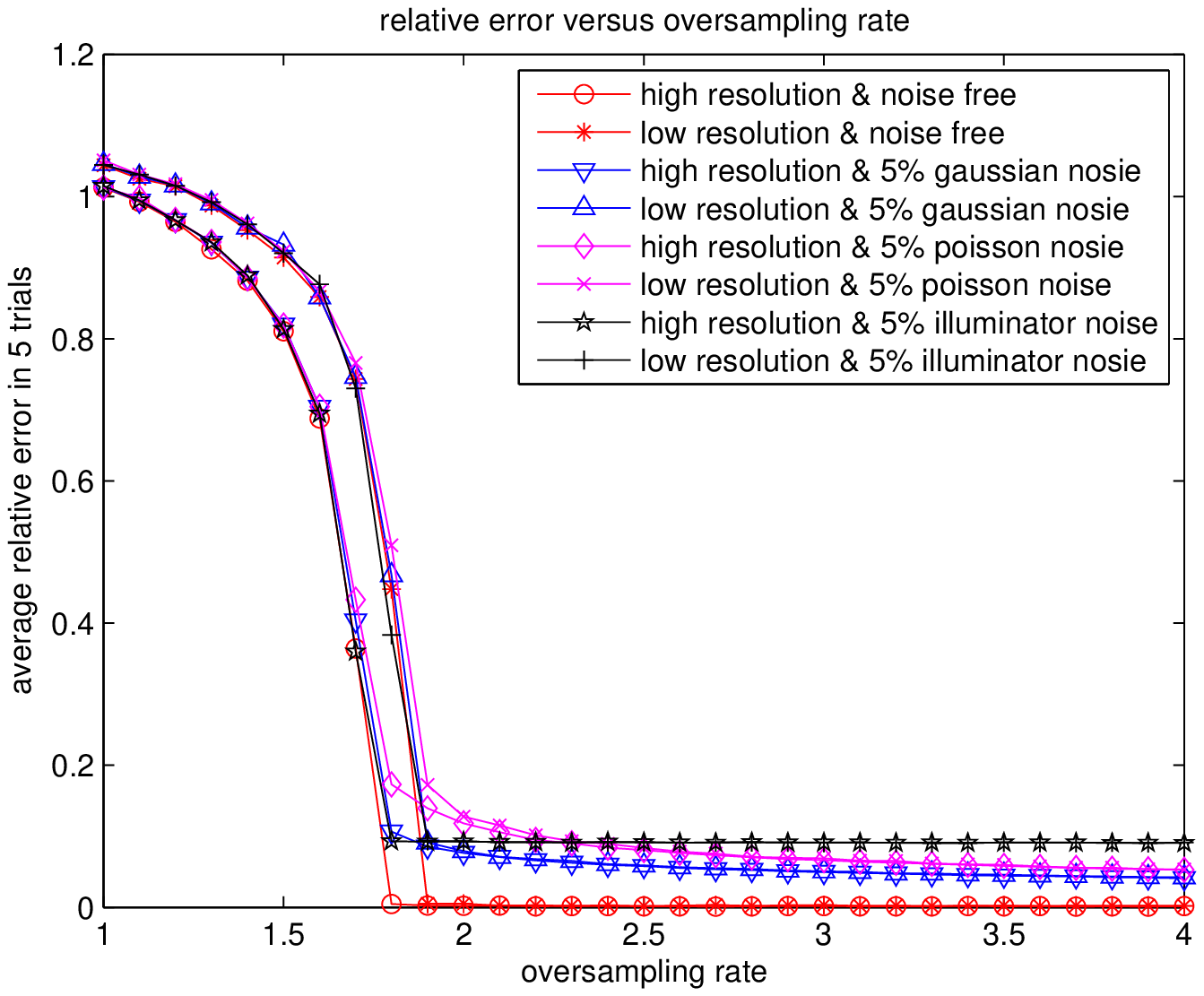}}
      %%%%%%%%%%%%%%%%%%%%%%%%%%%%%%%%%
\subfigure[]{
    \includegraphics[width = 8cm]{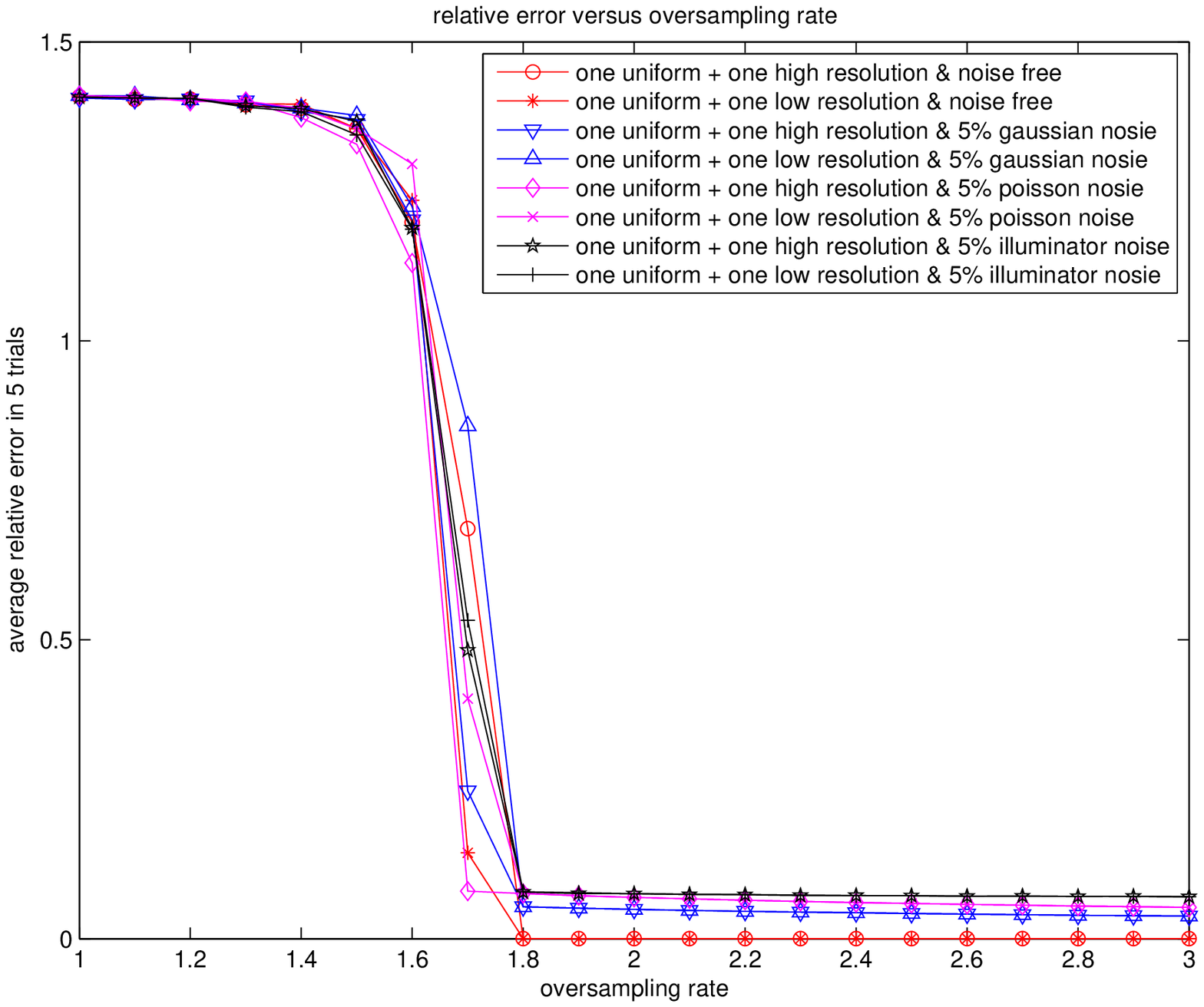}}
      %%%%%%%%%%%%%%%%%%%%%%%%%%%%%%%%%
  \caption{(a) Relative error with one RPI for nonnegative-valued  Phantom;  (b) Relative error with one RPI for complex-valued  Phantom 
with  phases randomly distributed in $[0, \pi/2]$;  (c) Relative error  by $200$ HIO $+ 300$ ER
with one RPI and UI
for complex-valued  Phantom with  phases 
randomly distributed in $[0,2\pi]$ with one UI and one
RPI of high resolution (block size: $1 \times 1$) or low resolution (block size: $4 \times 4$). }
    \label{SRtest} %% label for entire figure
\end{figure}

\begin{figure}[hthp]
  \centering
  \subfigure[]{
    \includegraphics[width = 3cm]{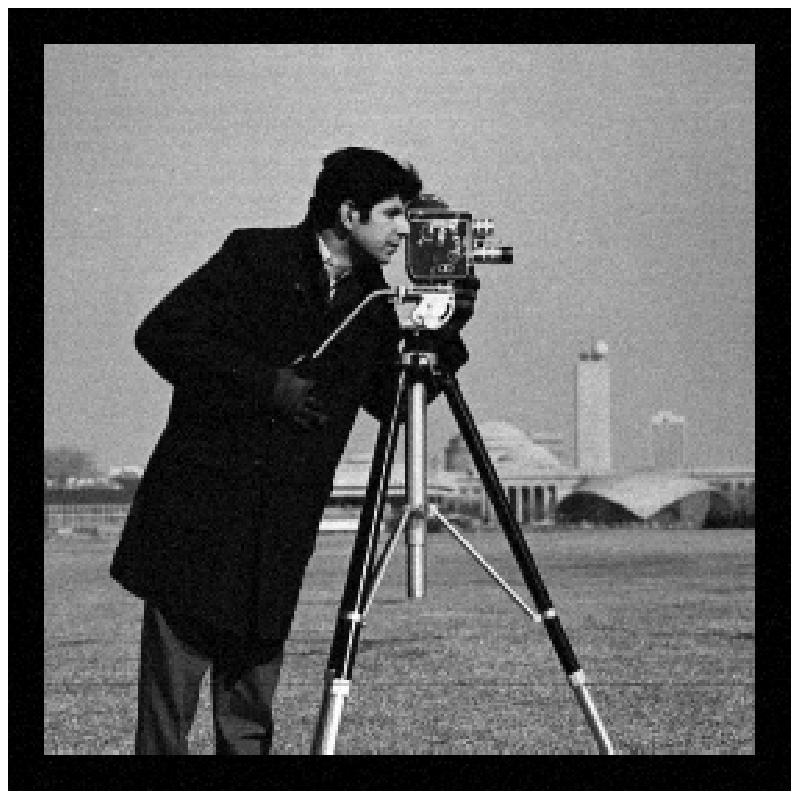}}
      %%%%%%%%%%%%%%%%%%%%%%%%%%%%%%%%%
          \subfigure[]{
         \includegraphics[width = 4cm, height = 3cm]{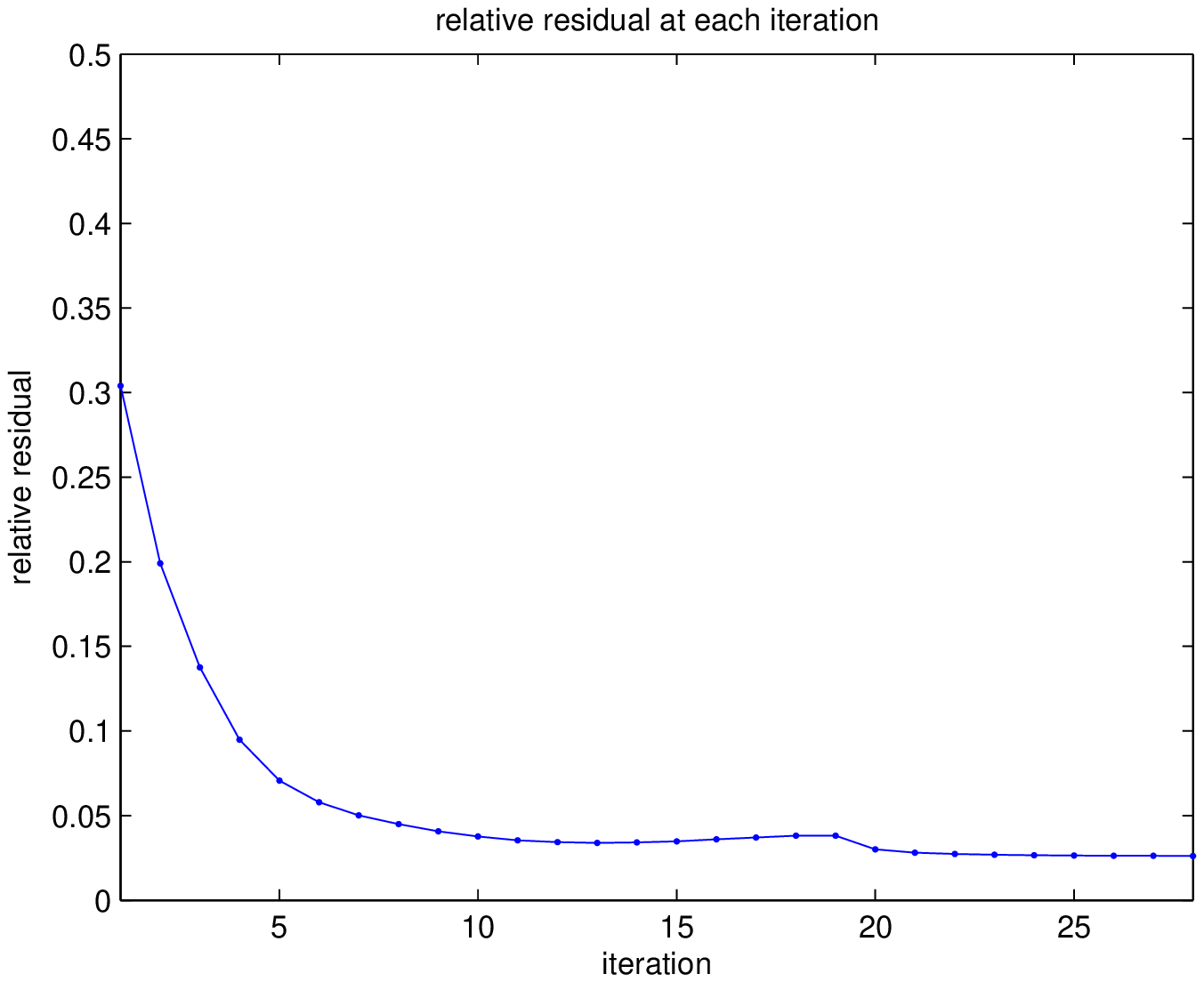}}
        %%%%%%%%%%%%%%%%%%%%%%%%%%%%%%%%%
        \subfigure[]{
    \includegraphics[width = 3cm]{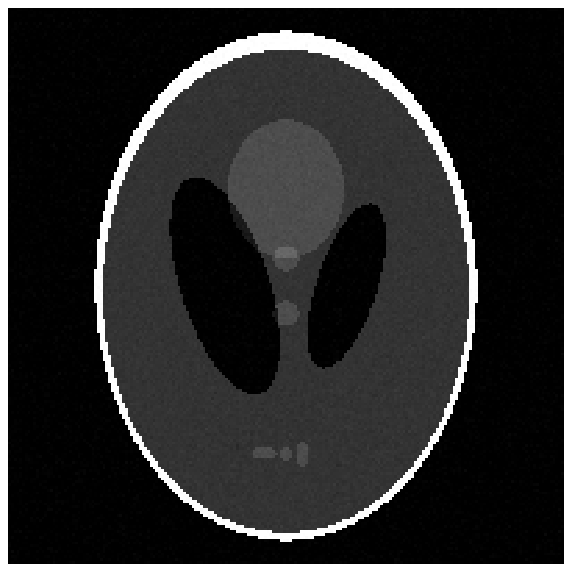}}
      %%%%%%%%%%%%%%%%%%%%%%%%%%%%%%%%%
          \subfigure[]{
         \includegraphics[width = 4cm, height = 3cm]{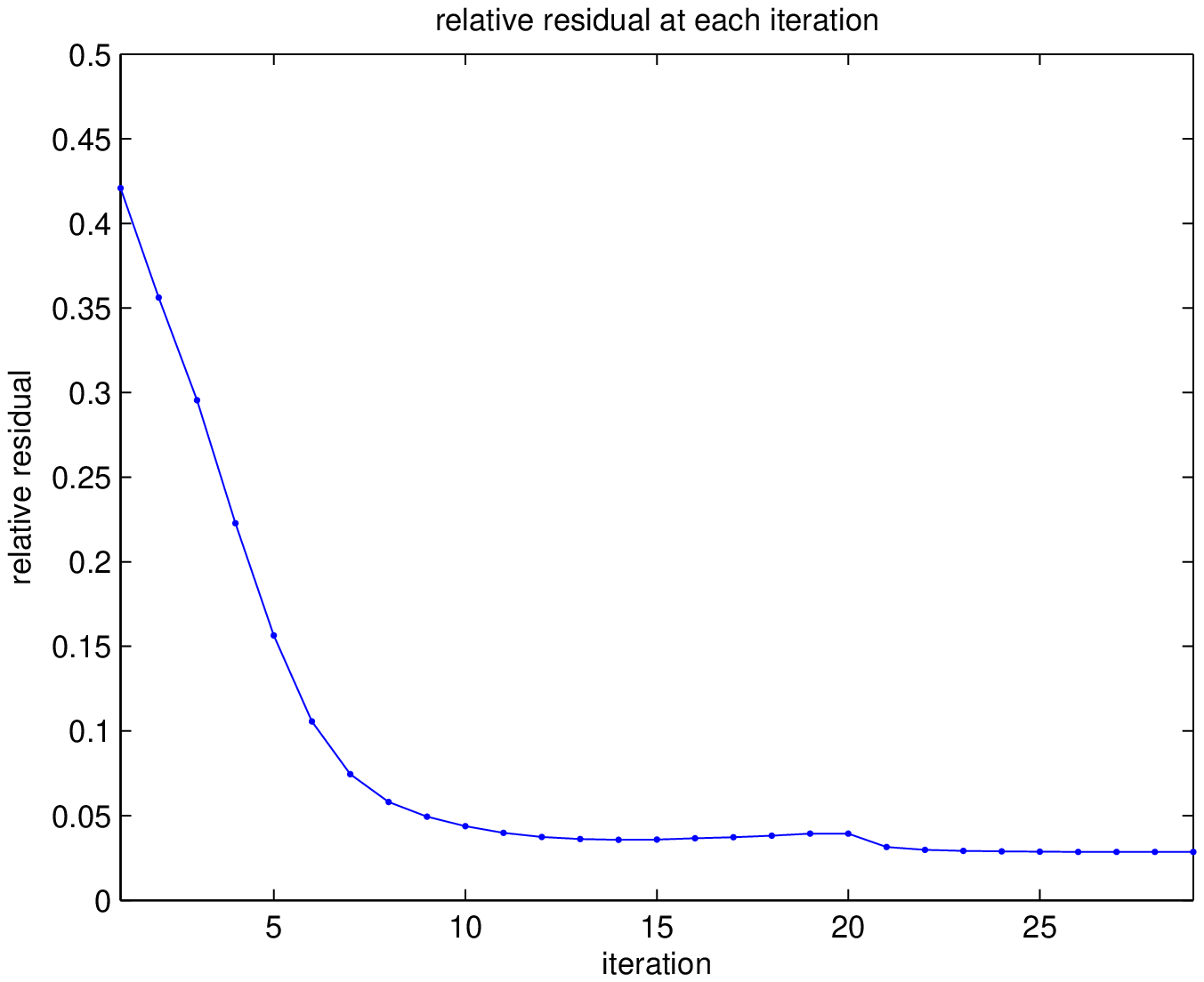}}
        %%%%%%%%%%%%%%%%%%%%%%%%%%%%%%%%%
         %%%%%%%%%%%%%%%%%%%%%%%%%%%%%%%%%
         \subfigure[]{
    \includegraphics[width = 3cm]{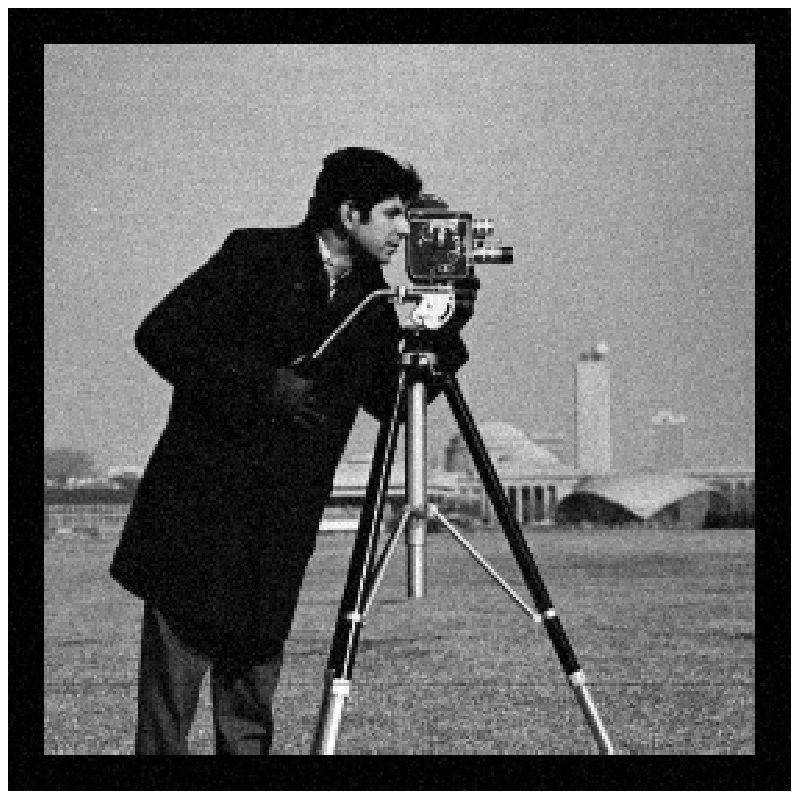}}
      %%%%%%%%%%%%%%%%%%%%%%%%%%%%%%%%%
          \subfigure[]{
         \includegraphics[width = 4cm, height = 3cm]{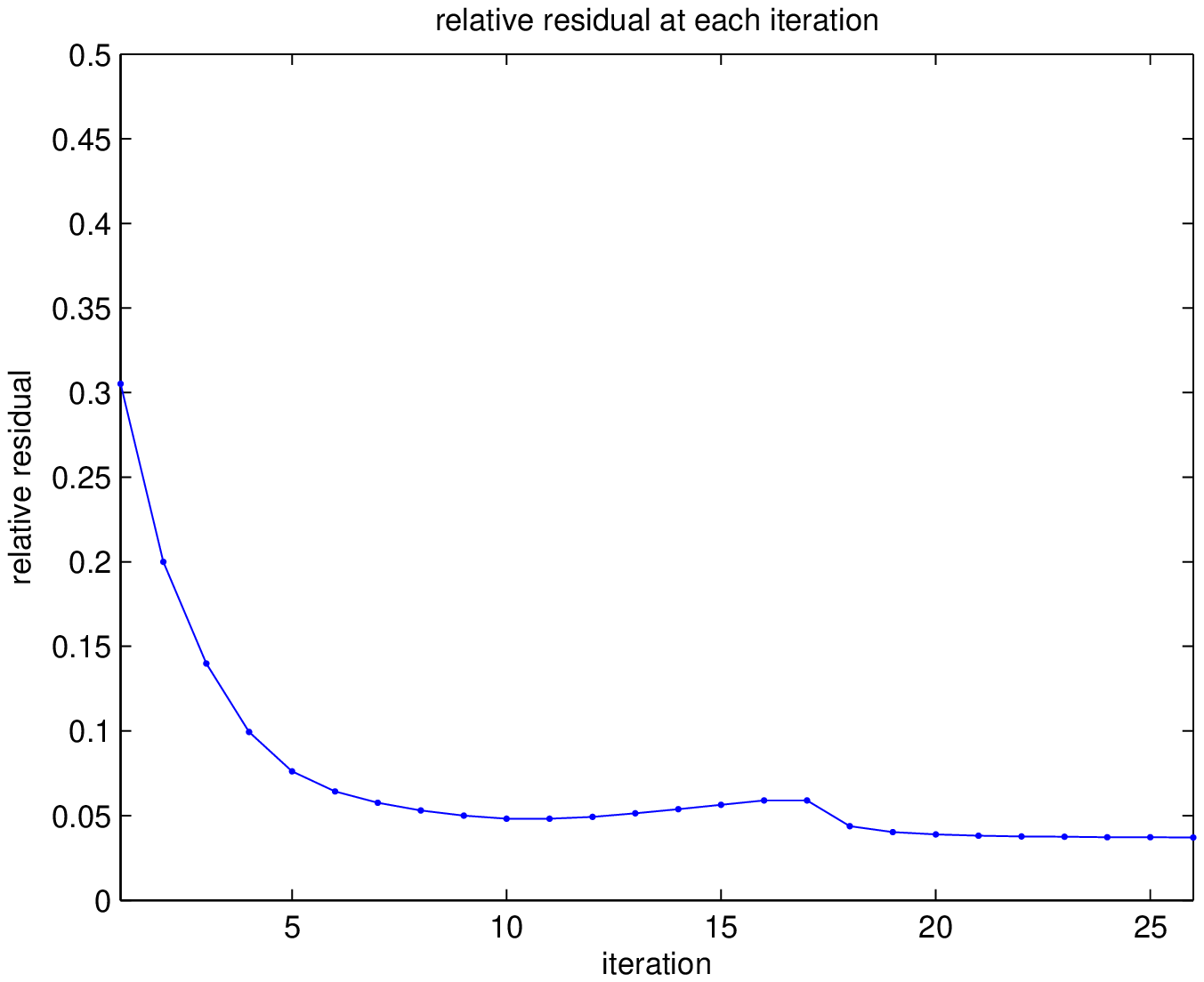}}
        %%%%%%%%%%%%%%%%%%%%%%%%%%%%%%%%%
        \subfigure[]{
    \includegraphics[width = 3cm]{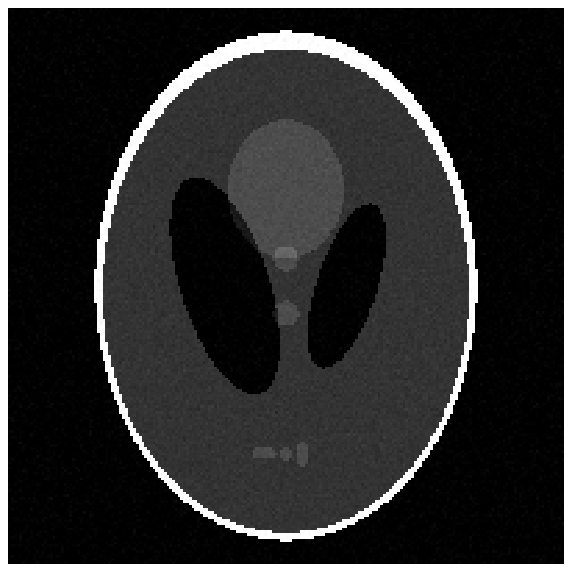}}
      %%%%%%%%%%%%%%%%%%%%%%%%%%%%%%%%%
          \subfigure[]{
         \includegraphics[width = 4cm, height = 3cm]{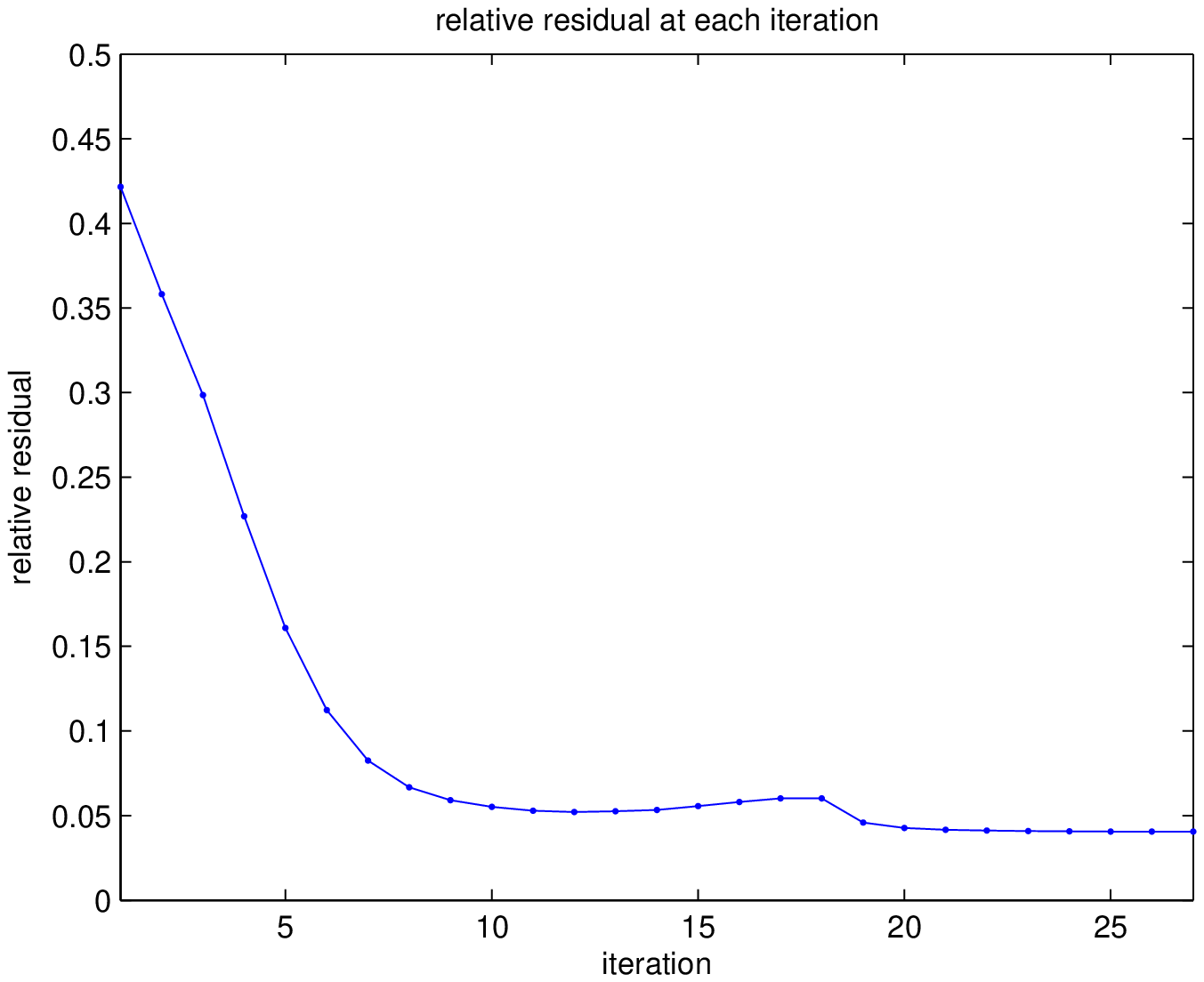}}
        %%%%%%%%%%%%%%%%%%%%%%%%%%%%%%%%%
         %%%%%%%%%%%%%%%%%%%%%%%%%%%%%%%%%
          \subfigure[]{
    \includegraphics[width = 3cm]{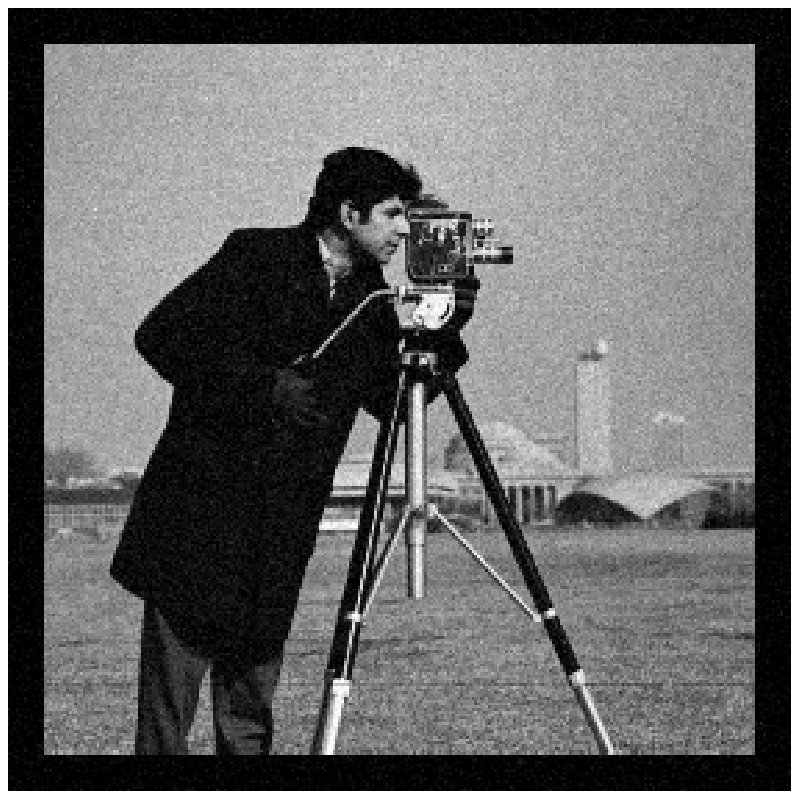}}
      %%%%%%%%%%%%%%%%%%%%%%%%%%%%%%%%%
          \subfigure[]{
         \includegraphics[width = 4cm, height = 3cm]{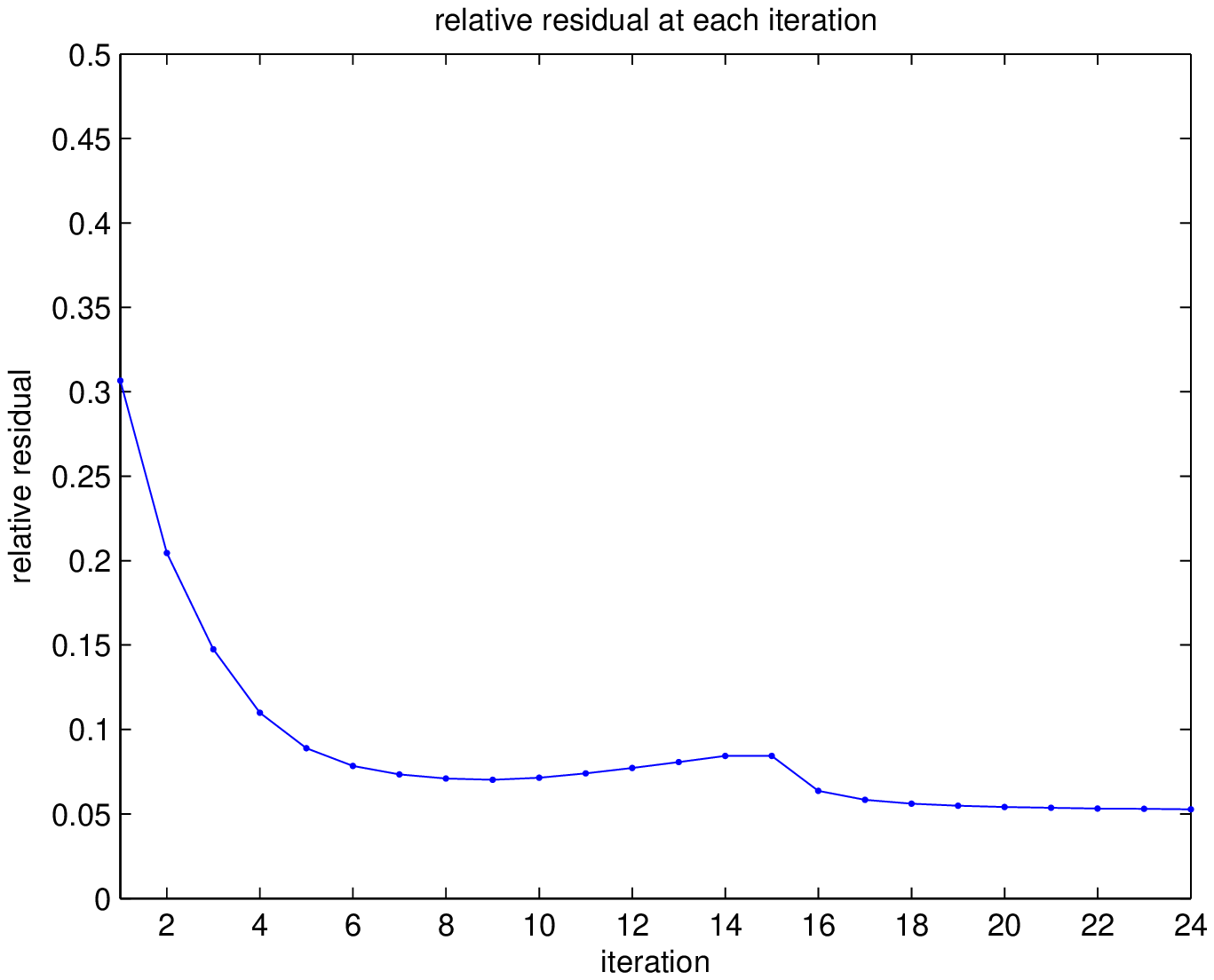}}
        %%%%%%%%%%%%%%%%%%%%%%%%%%%%%%%%%
        \subfigure[]{
    \includegraphics[width = 3cm]{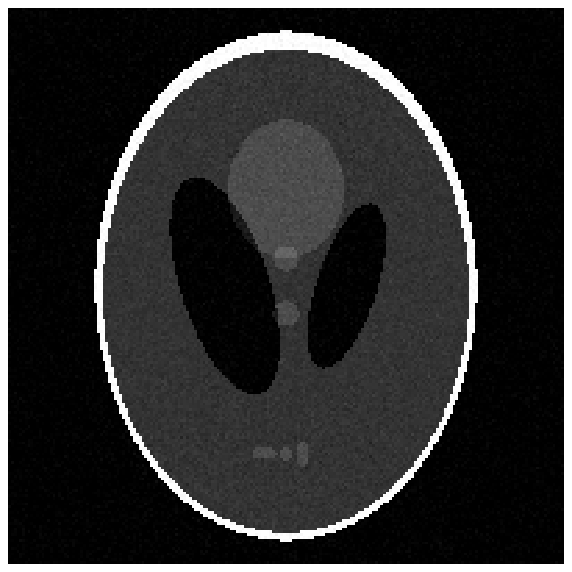}}
      %%%%%%%%%%%%%%%%%%%%%%%%%%%%%%%%%
          \subfigure[]{
         \includegraphics[width = 4cm, height = 3cm]{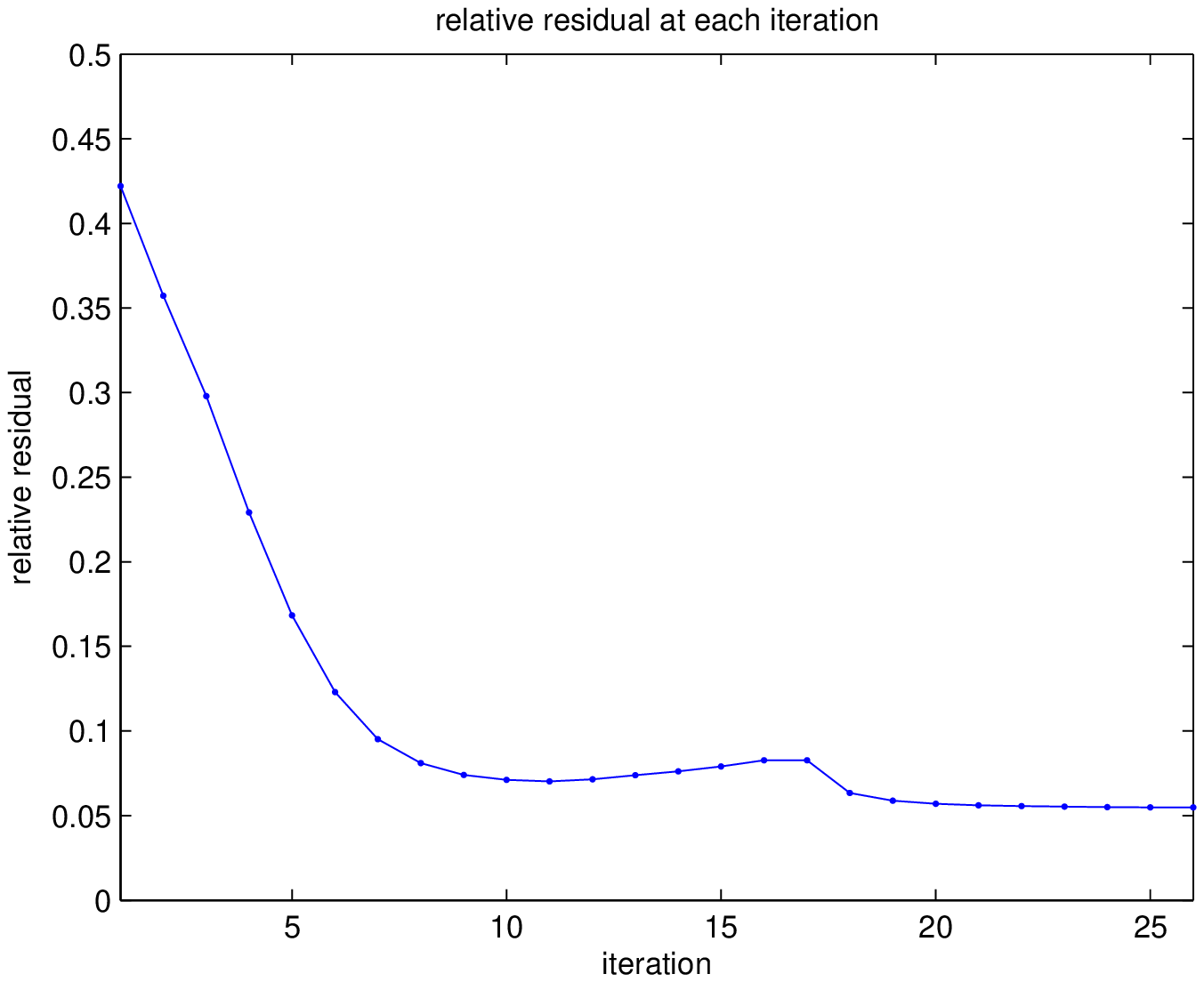}}
        %%%%%%%%%%%%%%%%%%%%%%%%%%%%%%%%%
         %%%%%%%%%%%%%%%%%%%%%%%%%%%%%%%%%                       
           \caption{ Phasing  with $\sigma=2$ and one high resolution RPI: (a) Recovery   by $18$ HIO $+ 10$ ER with  $5\%$ Gaussian noise;  (b) $r(f_k)$ versus k with $r(\hat{f}) \approx 2.62\%$
           and $e(\hat{f}) \approx 4.20\%$;  
           (c) Recovery by $19$ HIO $+ 10$ ER with  $5\%$ Gaussian noise. 
            (d) $r(f_k)$ versus k with $r(\hat{f}) \approx 2.85\%$
            and  $e(\hat{f}) \approx 3.51\%$; 
           (e) Recovery by $16$ HIO $+ 10$ ER with  $5\%$ Poisson noise;
         (f) $r(f_k)$ versus k with $r(\hat{f}) \approx 3.71\%$
         and  $e(\hat{f}) \approx 5.89\%$;            (g) Recovery  by $17$ HIO $+ 10$ ER with $5\%$ Poisson noise;
    (h) $r(f_k)$ versus k with   $r(\hat{f}) \approx 4.05\%$
    and $e(\hat{f}) \approx 4.84\%$; 
           (i) Recovery by $14$ HIO $+ 10$ ER with $5\%$ illuminator noise; (j) $r(f_k)$ versus k with
           $r(\hat{f}) \approx 5.28\%$ and  $e(\hat{f}) \approx 7.75\%$;     (k) Recovery by $16$ HIO $+ 10$ ER  with  $5\%$ illuminator noise;  (l) $r(f_k)$ versus k with
    $r(\hat{f}) \approx 5.48\%$ and  $e(\hat{f}) \approx 6.35\%$.   }
    
     \label{HighNoisy} %% label for entire figure
\end{figure}

\subsection{Stability Test}

For images with positivity constraint and with one RPI, we terminate HIO when the relative residual increases  for $5$ consecutive steps and apply  $10$ steps of ER afterward. The maximal HIO iteration is set to be $100$. For
 complex-valued  images with two illuminations, we 
 apply 200 steps of HIO and 300 steps of  ER. 

Figure \ref{HighNoisy} shows the recovery for the nonnegative-valued images with one high resolution RPI 
and  $5\%$ Gaussian ((a)-(d)), Poisson ((e)-(h)) and illuminator noise ((i)-(l)). Multiplicative noise such
as Poisson and illumination noises are generally
more debilitating than the additive Gaussian noise.

With a low resolution RPI (block size: $40\times 40$), 
the quality of reconstruction suffers slightly as shown in
 Figure \ref{LowNoisy} for nonnegative-valued  images. 
 The deterioration is most visible in the case of
 Poisson noise with the blocky pattern  in Figure \ref{LowNoisy}(e) and \ref{LowNoisy}(g).

\begin{figure}[ht]
  \centering
  \subfigure[]{
    \includegraphics[width = 3cm]{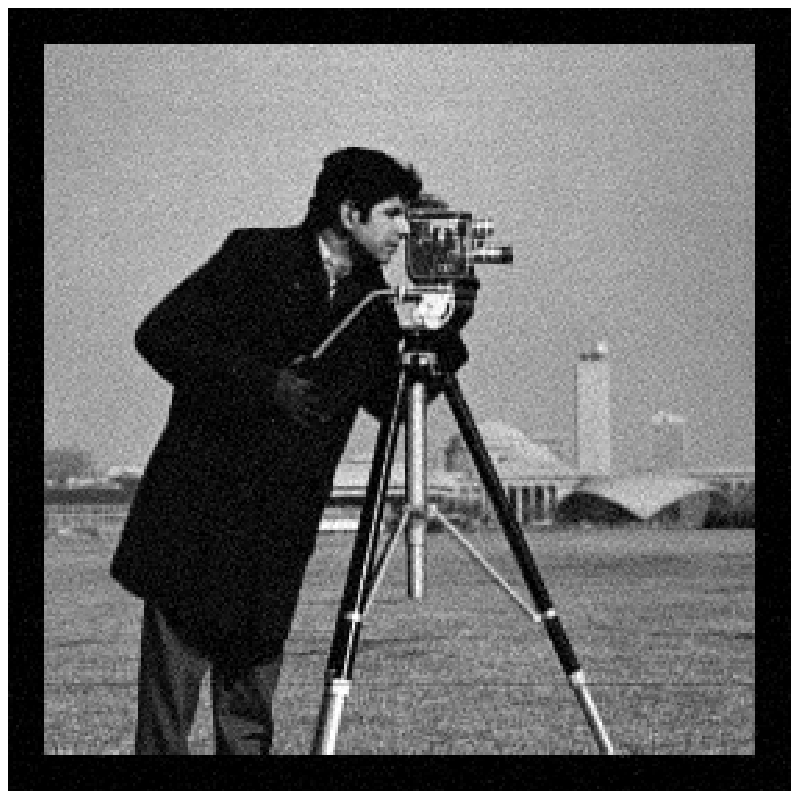}}
      %%%%%%%%%%%%%%%%%%%%%%%%%%%%%%%%%
          \subfigure[]{
         \includegraphics[width = 4cm, height = 3cm]{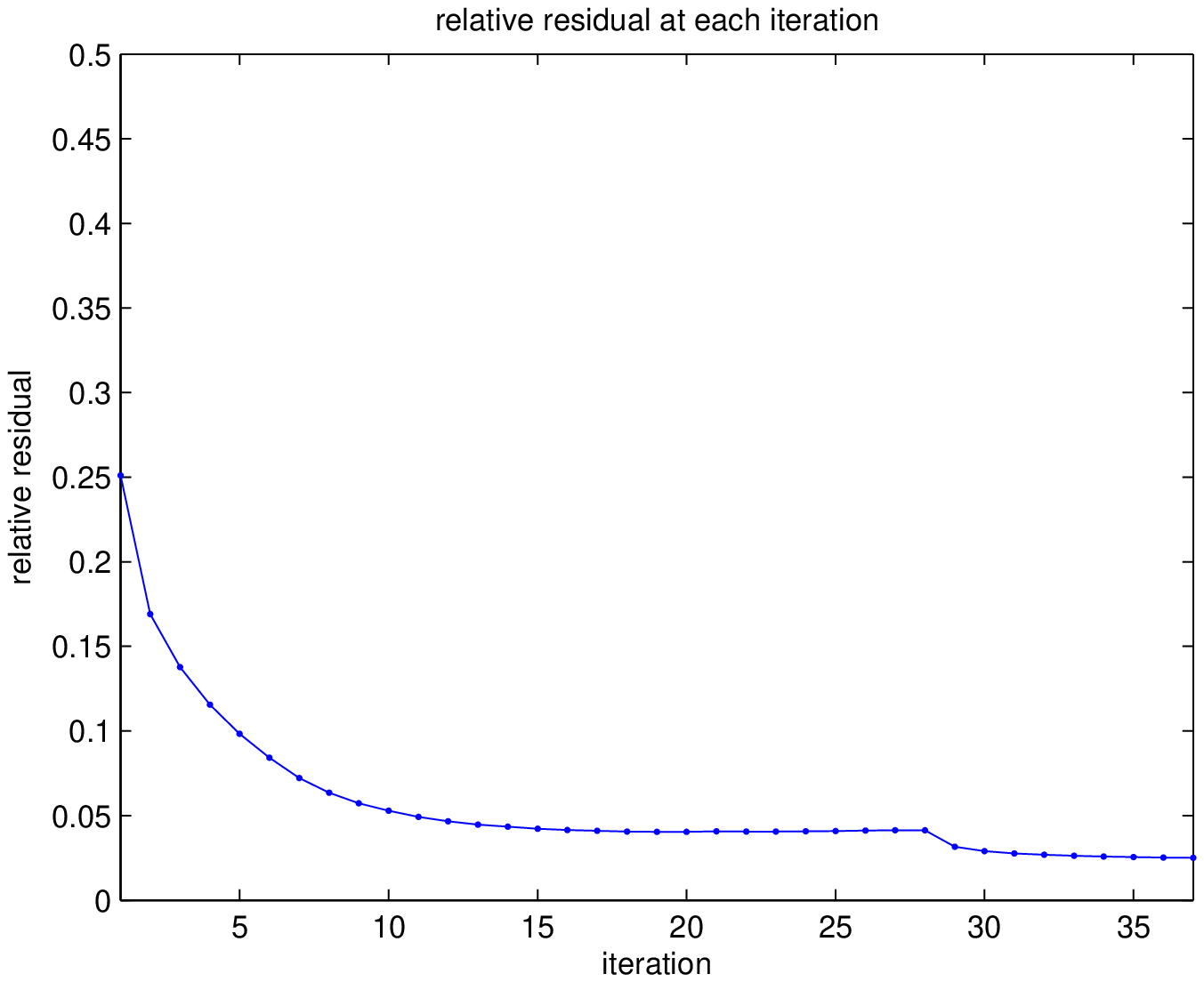}}
        %%%%%%%%%%%%%%%%%%%%%%%%%%%%%%%%%
        \subfigure[]{
    \includegraphics[width = 3cm]{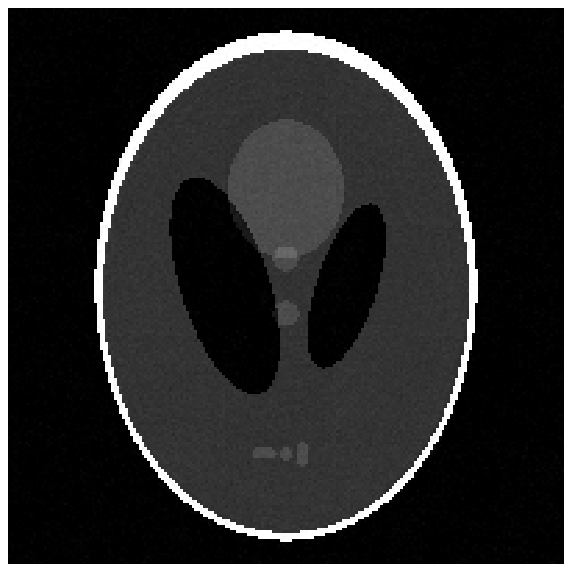}}
      %%%%%%%%%%%%%%%%%%%%%%%%%%%%%%%%%
          \subfigure[]{
         \includegraphics[width = 4cm, height = 3cm]{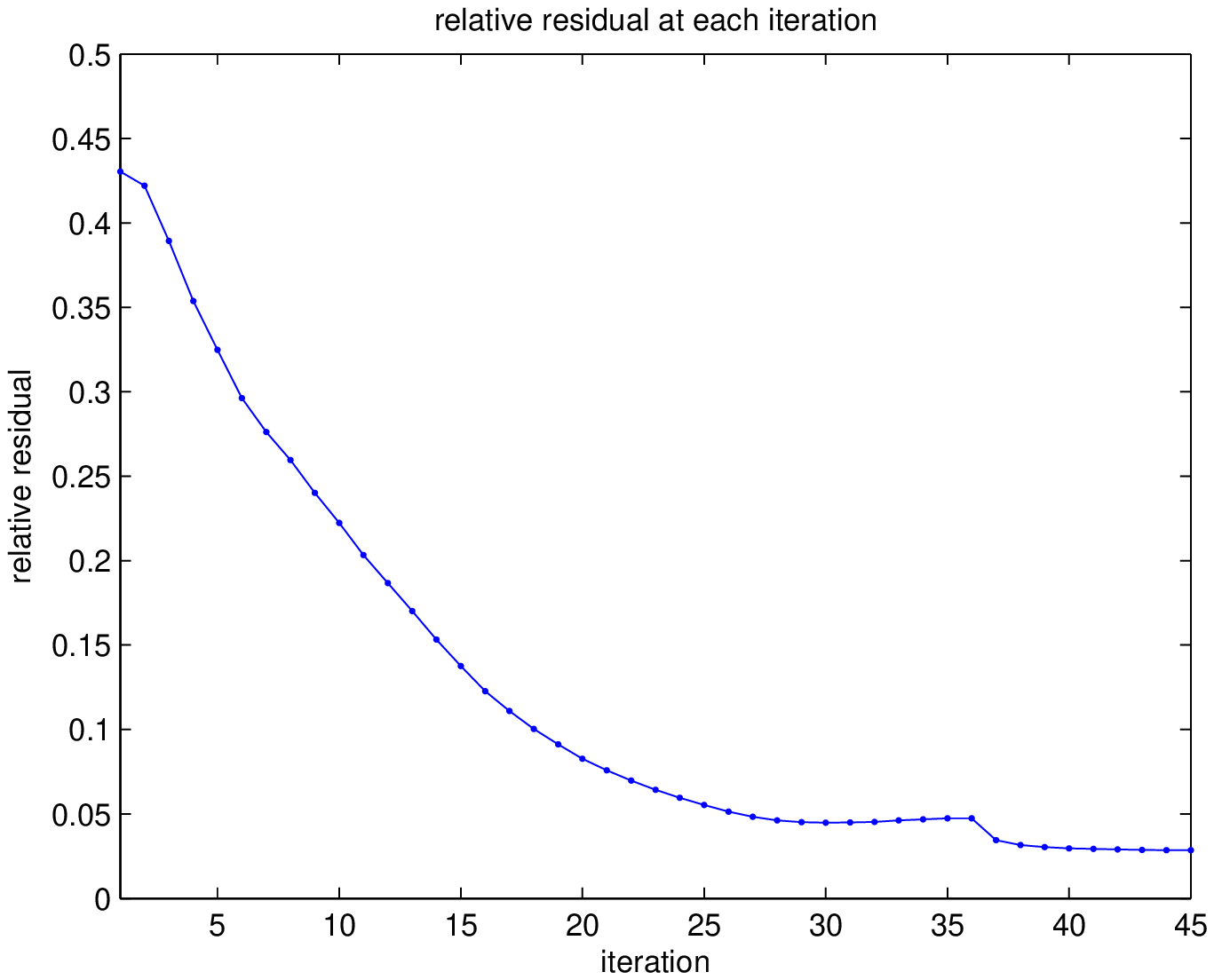}}
        %%%%%%%%%%%%%%%%%%%%%%%%%%%%%%%%%
         %%%%%%%%%%%%%%%%%%%%%%%%%%%%%%%%%
         \subfigure[]{
    \includegraphics[width = 3cm]{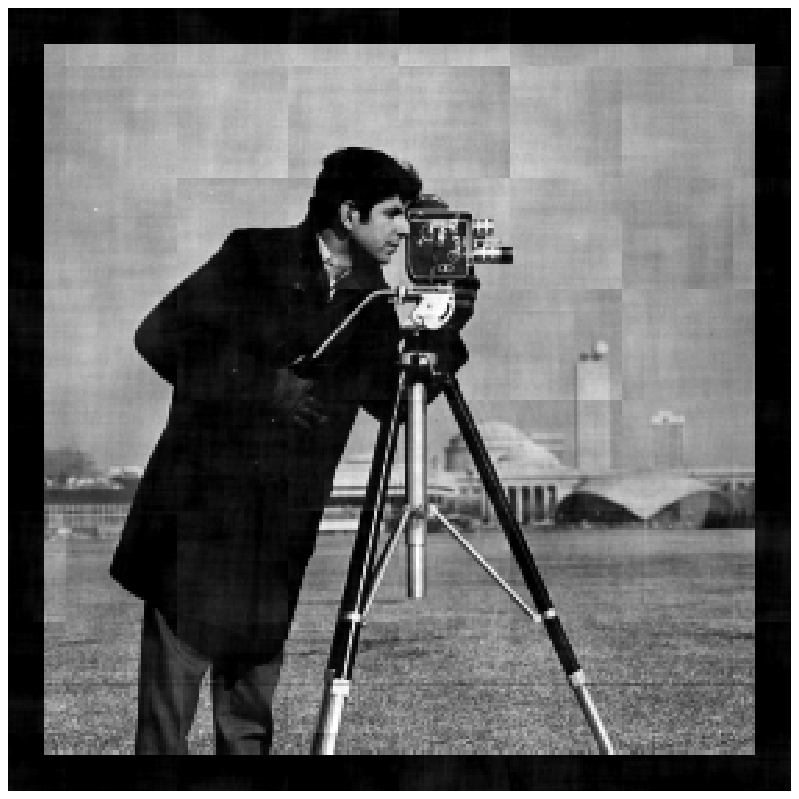}}
      %%%%%%%%%%%%%%%%%%%%%%%%%%%%%%%%%
          \subfigure[]{
         \includegraphics[width = 4cm, height = 3cm]{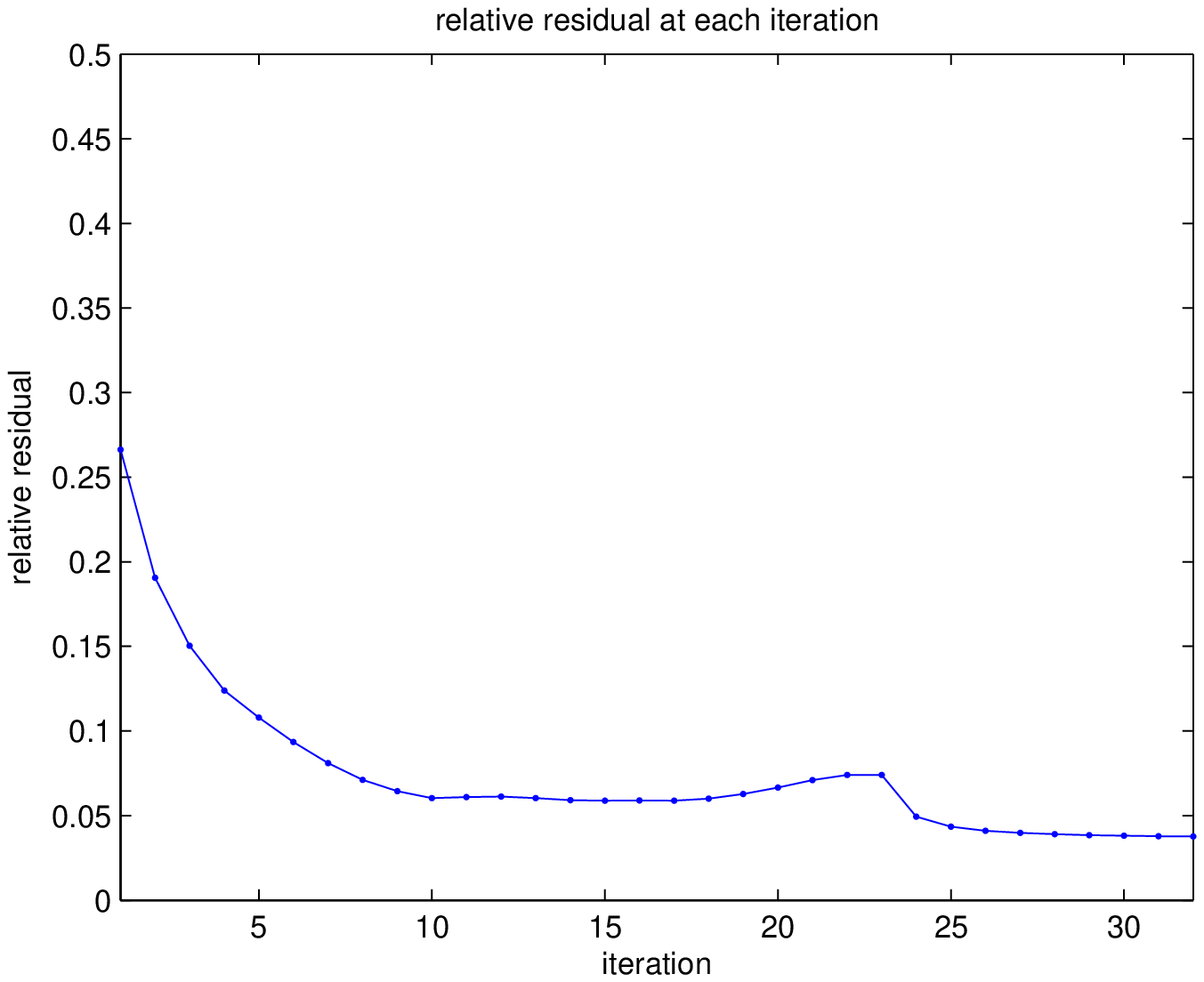}}
        %%%%%%%%%%%%%%%%%%%%%%%%%%%%%%%%%
        \subfigure[]{
    \includegraphics[width = 3cm]{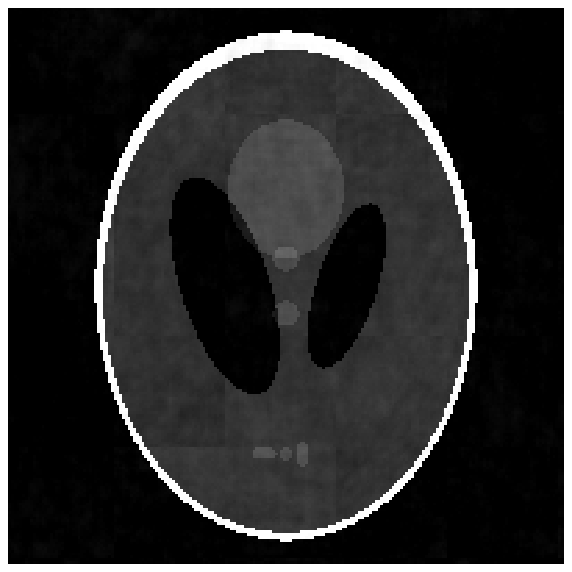}}
      %%%%%%%%%%%%%%%%%%%%%%%%%%%%%%%%%
          \subfigure[]{
         \includegraphics[width = 4cm, height = 3cm]{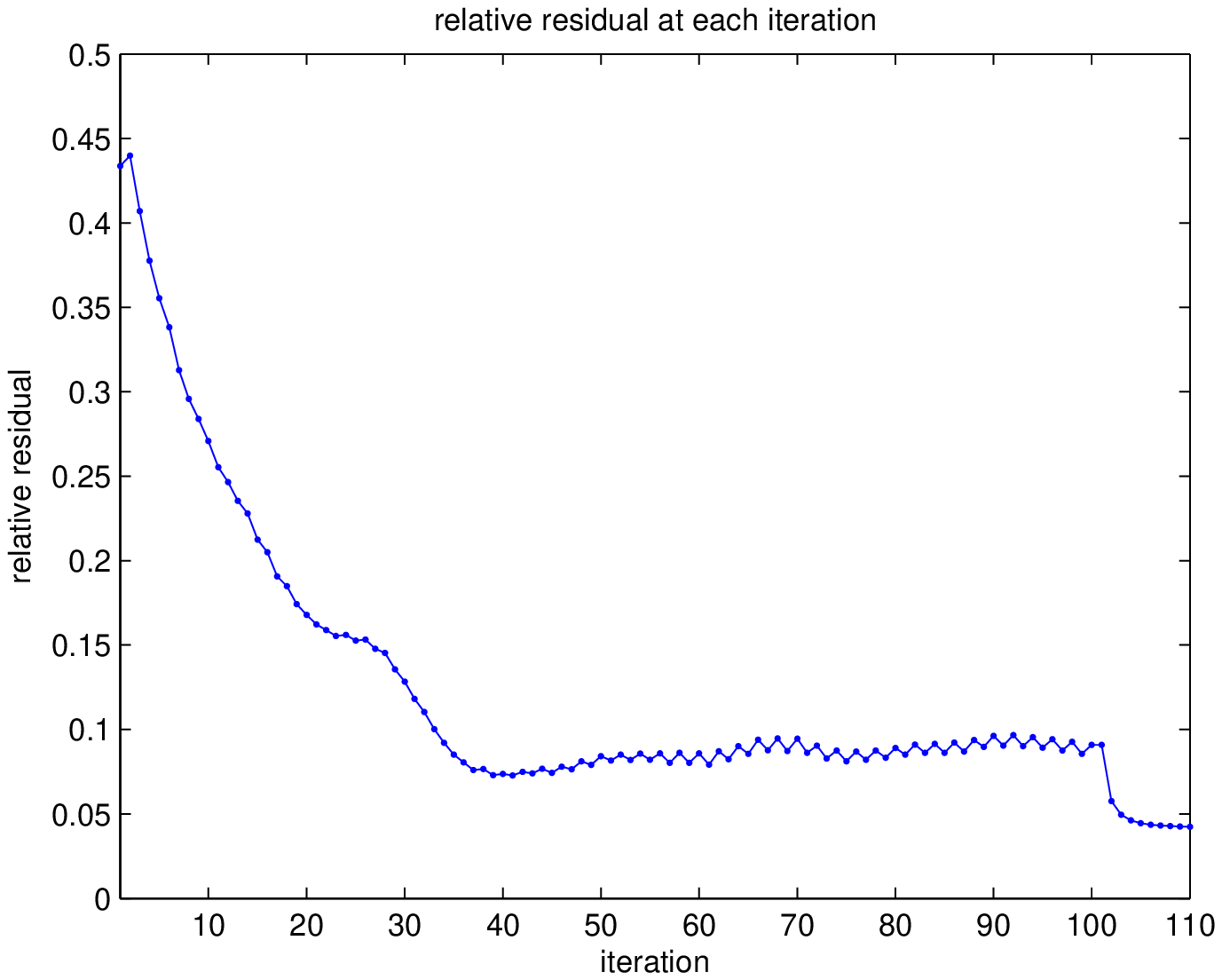}}
        %%%%%%%%%%%%%%%%%%%%%%%%%%%%%%%%%
         %%%%%%%%%%%%%%%%%%%%%%%%%%%%%%%%%
          \subfigure[]{
    \includegraphics[width = 3cm]{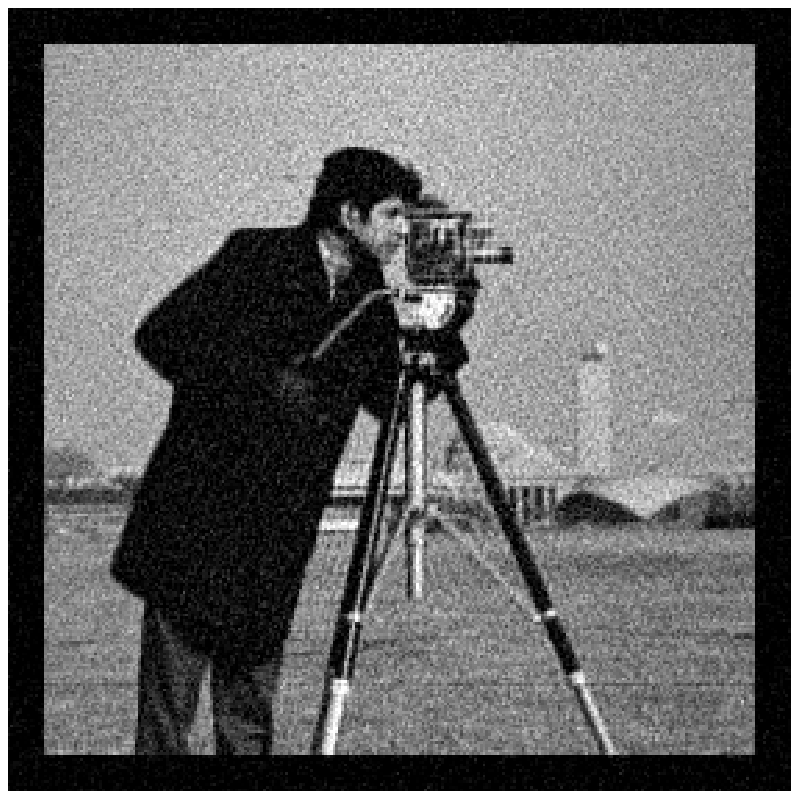}}
      %%%%%%%%%%%%%%%%%%%%%%%%%%%%%%%%%
          \subfigure[]{
         \includegraphics[width = 4cm, height = 3cm]{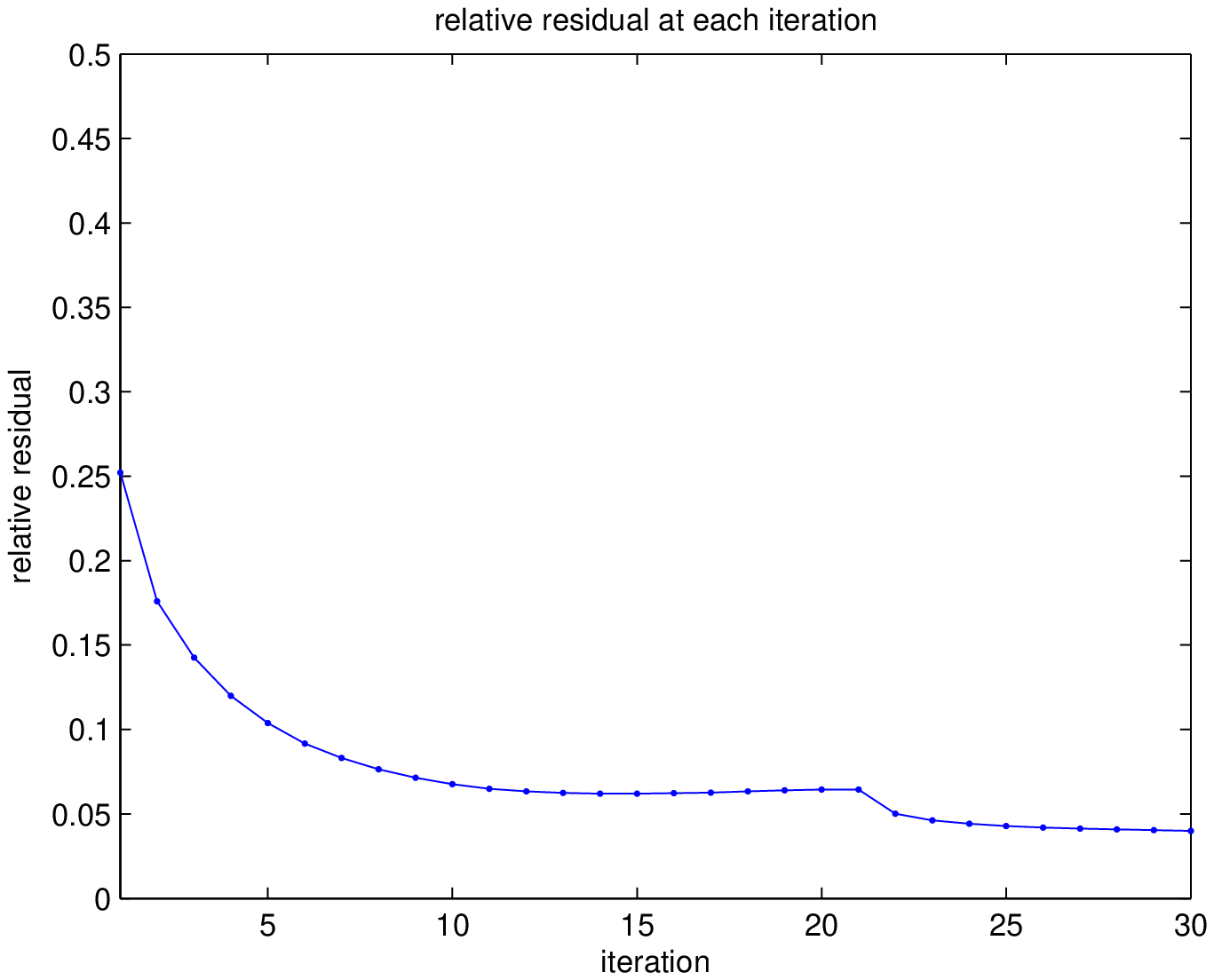}}
        %%%%%%%%%%%%%%%%%%%%%%%%%%%%%%%%%
        \subfigure[]{
    \includegraphics[width = 3cm]{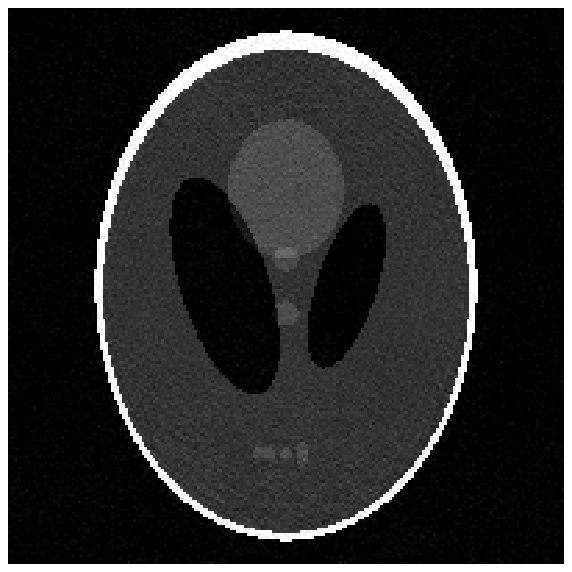}}
      %%%%%%%%%%%%%%%%%%%%%%%%%%%%%%%%%
          \subfigure[]{
         \includegraphics[width = 4cm, height = 3cm]{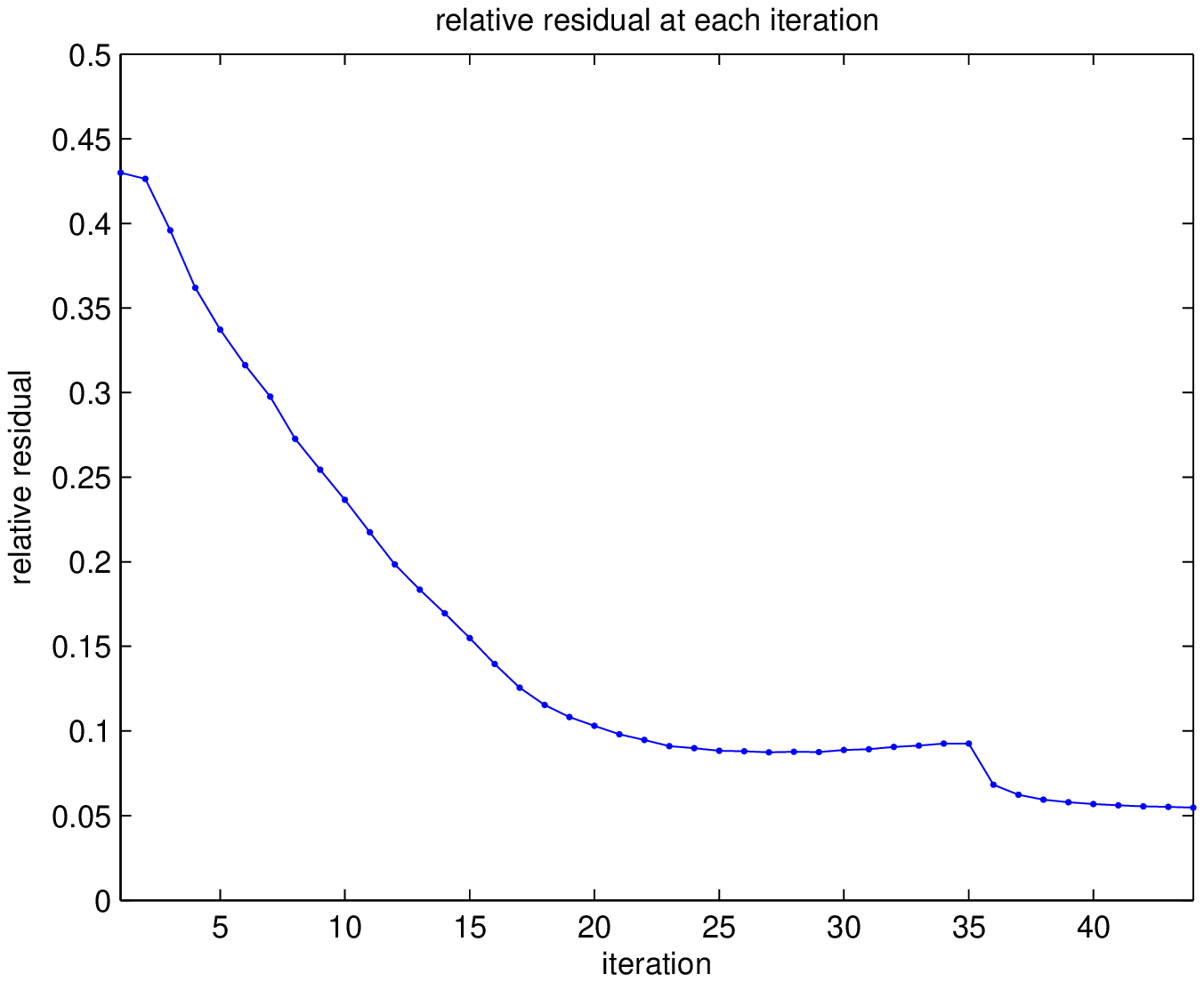}}
        %%%%%%%%%%%%%%%%%%%%%%%%%%%%%%%%%
         %%%%%%%%%%%%%%%%%%%%%%%%%%%%%%%%%                       
           \caption{ Phasing  with $\sigma=2$ and one low ($40\times 40$)  resolution RPI: (a) Recovery   by $27$ HIO $+ 10$ ER with  $5\%$ Gaussian noise;  (b) $r(f_k)$ versus k with $r(\hat{f}) \approx 2.50\%$
           and $e(\hat{f}) \approx 7.37\%$;  
           (c) Recovery by $35$ HIO $+ 10$ ER with  $5\%$ Gaussian noise. 
            (d) $r(f_k)$ versus k with $r(\hat{f}) \approx 2.85\%$
            and  $e(\hat{f}) \approx 4.18\%$; 
           (e) Recovery by $22$ HIO $+ 10$ ER with  $5\%$ Poisson noise;
         (f) $r(f_k)$ versus k with $r(\hat{f}) \approx 3.77\%$
         and  $e(\hat{f}) \approx 6.27\%$;            (g) Recovery  by $100$ HIO $+ 10$ ER with $5\%$ Poisson noise;
    (h) $r(f_k)$ versus k with   $r(\hat{f}) \approx 4.24\%$
    and $e(\hat{f}) \approx 5.09\%$; 
           (i) Recovery by $20$ HIO $+ 10$ ER with $5\%$ illuminator noise; (j) $r(f_k)$ versus k with
           $r(\hat{f}) \approx 4.00\%$ and  $e(\hat{f}) \approx 13.14\%$;     (k) Recovery by $34$ HIO $+ 10$ ER  with  $5\%$ illuminator noise;  (l) $r(f_k)$ versus k with
    $r(\hat{f}) \approx 5.48\%$ and  $e(\hat{f}) \approx 9.46\%$. 
   }
     \label{LowNoisy} %% label for entire figure
\end{figure}

  Figure \ref{StabilityTest} shows the average relative error
$e(\hat f)$   versus noise 
  for (a) nonnegative-valued Phantom and $\sigma=2$, (b) Phantom with phases randomly distributed in $[0,\pi/2]$ and $\sigma=4$ and (c) Phantom with  phases randomly distributed in $[0,2\pi]$ and $\sigma=3$. One high or low ($40\times 40$) resolution RPI is used in (a) 
  while  one high or low ($4\times 4$) RPI and one UI  are used in (b) and (c).  The 
  adaptive HIO $+ 50$ ER is used for (a) and (b)
  while $200$ HIO + $300$ ER
  is used for  (c). 
  
Relative  error increases almost linearly with respect to the relative noise level 
 with the noise amplification constant at worst 2.  Clearly the illumination  noise is most debilitating, followed by the Poisson noise. Nevertheless,
the noise stability is achieved with even the low resolution 
RPI for all three types of noise.

\begin{figure}[ht]
  \centering
  \subfigure[]{
    \includegraphics[width = 8cm]{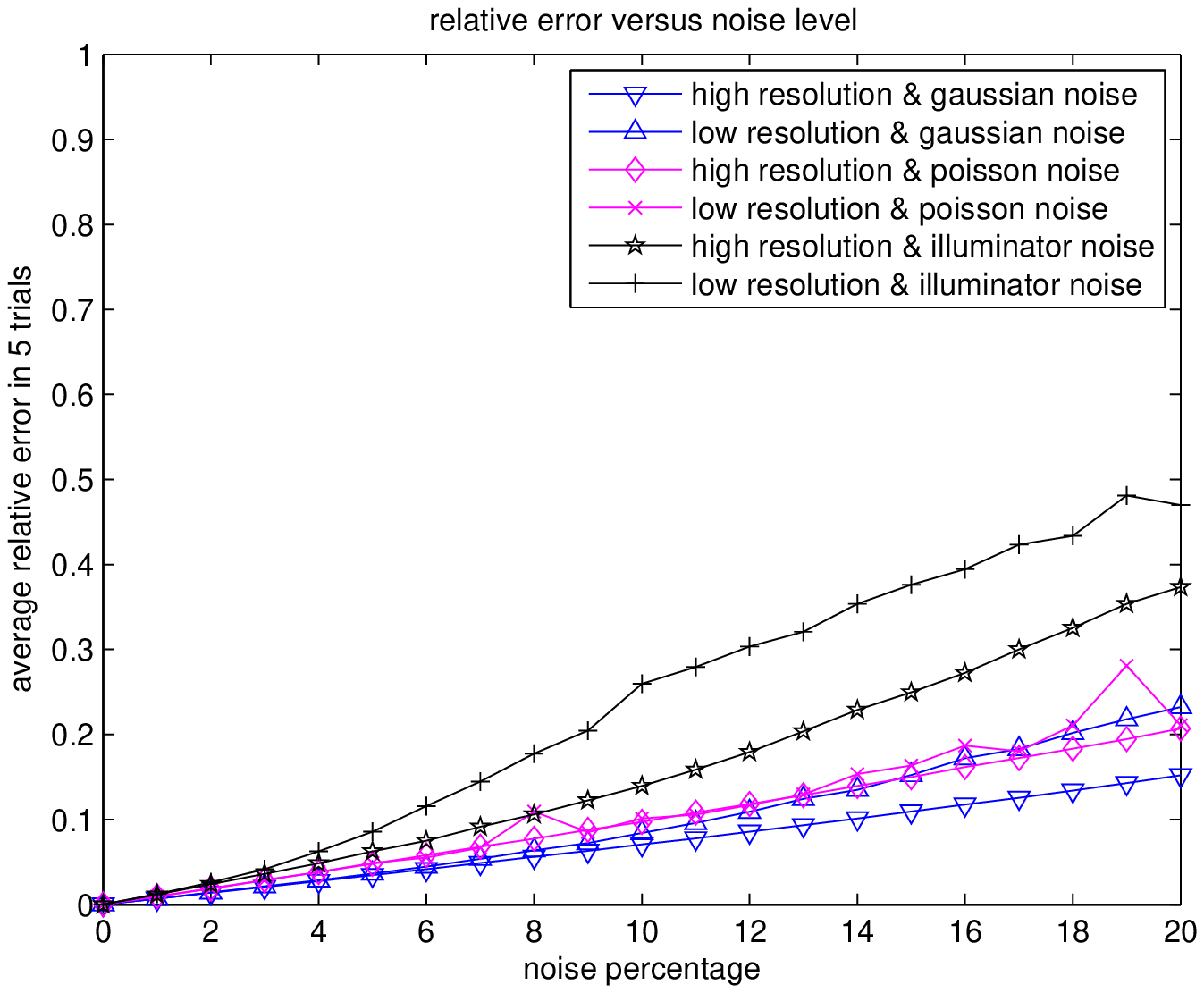}}
      %%%%%%%%%%%%%%%%%%%%%%%%%%%%%%%%%
      \subfigure[]{
    \includegraphics[width = 8cm]{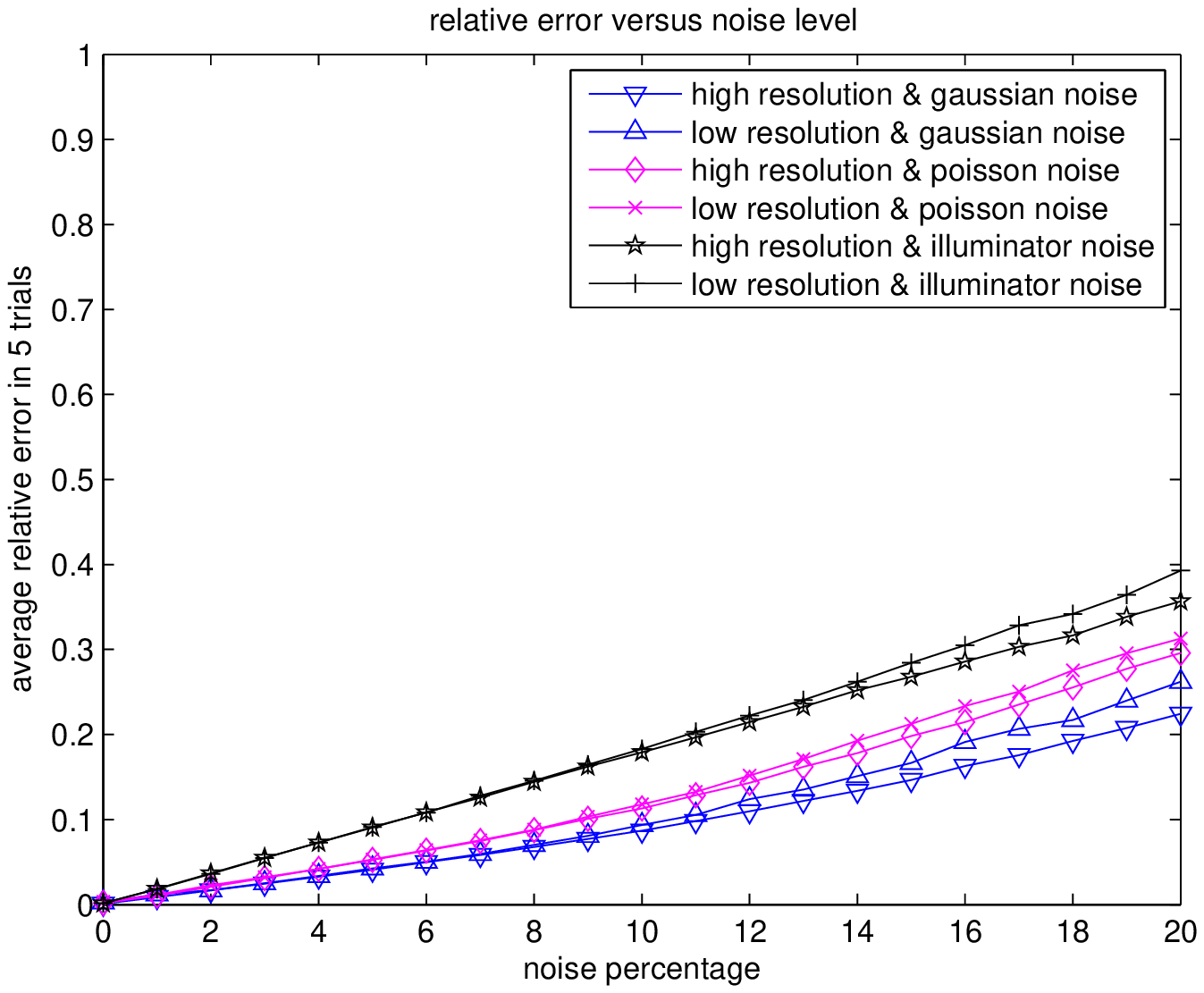}}
      %%%%%%%%%%%%%%%%%%%%%%%%%%%%%%%%%
 \subfigure[]{
    \includegraphics[width = 8cm]{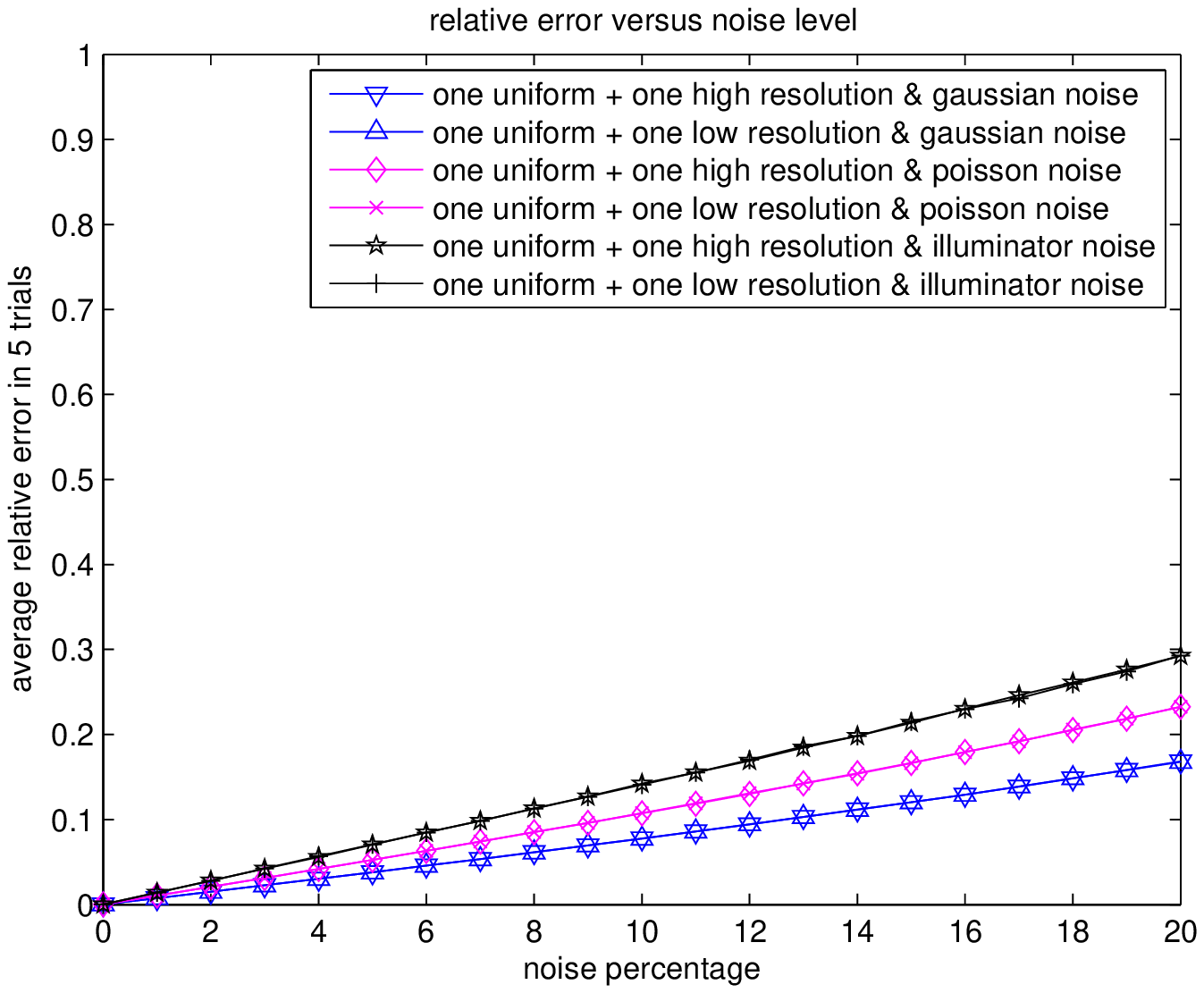}}
      %%%%%%%%%%%%%%%%%%%%%%%%%%%%%%%%%
      %%%%%%%%%%%%%%%%%%%%%%%%%%%%%%%%%
  \caption{(a) Relative error for nonnegative-valued Phantom and $\sigma=2$  (b) Relative error for complex-valued Phantom with phases randomly distributed in  $[0, \pi/2]$ and
  $\sigma = 4$;  (c) Relative error  for complex-valued Phantom with phases 
  randomly distributed in $[0, 2\pi]$ and $\sigma = 3$. }
    \label{StabilityTest} %% label for entire figure
\end{figure}

\section{Conclusion}
\label{sec5}
We have given a proof of convergence of ER (Theorem \ref{FixedPoint1})  and demonstrated that  the stagnation problem of
standard phasing algorithms such as ER and HIO can
be alleviated if the ambiguities associated with spatial translation and conjugate inversion are removed by
 RPI. In addition, phasing with RPI has the following
 advantages: (i) It is stable
 with respect to additive as well as  multiplicative noises
 with a moderate  noise amplification constant;
 (ii) It reduces the oversampling ratio by more than
 a factor of 2; (iii) It reduces the number of iterations
 by more than an order of magnitude. 
 We have also shown that phasing with RPI 
 performs well with low resolution illumination and 
 can tolerate a high level of illumination error,  adding 
 assurance that the random illumination 
 needs not be calibrated exactly.  
  
The lower bound $\sigma \geq 2$ for phasing of \cite{MSC} was never actually achieved 
but we have  achieved the lower limit in phasing with RPI for complex-valued images under a sector constraint. For 
nonnegative-valued images, phasing with 
one high resolution RPI reduces the oversampling ratio  to unity, the minimum level  by the dimensional count. 

\commentout{
On the issue of convexity, there have been recent attempts to formulate phase retrieval as the convex, trace-norm  minimization by significantly  lifting the dimension of the problem \cite{Candes1} \cite{Candes2}. The notion of absolute uniqueness is essential  in the trace-norm minimization approach which hinges on the equivalence between the  rank and trace norm minimization. 
Otherwise,  if both $f$ and its conjugate inversion or spatial shift $\tilde f$ are solutions of phase retrieval, then all convex combinations, $\lambda f f^{\rm H} + (1-\lambda) \tilde f\tilde f^{\rm H} $, $\lambda \in (0,1)$ 
have  the same trace and yet, typically rank two.
These rank two minimizers can not be the tensor
product of the true image. To remove these ambiguities it
is necessary to  ensure uniqueness up to a global phase.
}

\vspace{1cm}

\centerline{\Large\bf Appendices}

%\bigskip

\appendix

\commentout{
\begin{proposition}
Let D be a closed convex subset of $\cC(\cN)$ and $\mathcal{P}_D$ be the orthogonal projection onto D. Then for every pair of elements $x$ and $y$ in $\cC(\cN)$,
$$\|\mathcal{P}_D\{x\} - \mathcal{P}_D\{y\} \| \le \|x-y\|.$$
\end{proposition}
Projection operators onto closed convex sets are nonexpansive and therefore uniformly continuous.\\
}

 \section{Proof of Proposition \ref{accumulation}} 
\begin{proof}
By Proposition \ref{lemma1}, $\lim_{k\to \infty}\vep_f(f_k) =\eta$, for some $\eta\geq 0$. Since 
 $f_{k+1} = \PO \{f'_k\}$,  we have $\|f_{k+1}\|\leq \|f'_k\| = \|G'_k\| = \|Y\|$ and  that $\{f_k\}$ is a bounded sequence. Every bounded sequence in $\cC(\cN)$ has a convergent subsequence, so $\{f_k\}$ has at least one convergent subsequence. Without loss of generality, we assume $\lim_{k\to\infty} f_k =f^\star$. Next, we prove that $f^{\star}$ must be a fixed point of $\PO\PF$ or $\PO\PF^{\theta}$ for some $\theta$.

Since $\Phi$ and $\Lambda$ are unitary matrices, $\lim_{k\to\infty} \Phi \Lambda f_k = \Phi \Lambda f^\star$ in $\|\cdot\|$,
and thus $$\lim_{k\to\infty} \Phi \Lambda f_k(\bom) = \Phi \Lambda f^\star(\bom), \ \forall \bom.$$

\begin{itemize}

\item If $\Phi\Lambda f^\star(\bom)$ vanishes nowhere in $\cL$, then 
\beqn
\lim_{k\to \infty} \measuredangle{\Phi\Lambda f_k} =\measuredangle{\Phi\Lambda f^\star} 
\eeqn
implying 
\beqn
\lim_{k\to \infty}G_k'=\lim_{k\to \infty} \PT \Phi \Lambda f_k= \PT \Phi \Lambda f^\star.
\eeqn
Therefore, 
\beqn
\lim_{k\to\infty} f_{k+1} = \PO\PF f^\star
\eeqn
which along with the convergence of $\vep_f(f_k)$ and $f_k$
implies that  
$\vep_f(f^\star) = \vep_f(\PO\PF f^\star)$. By Proposition \ref{lemma1}, we have
$$f^\star = \PO\PF f^\star.$$

\item If $\Phi \Lambda f^\star (\bom) = 0$ at some $\bom \in \cL$, $\PT \Phi \Lambda f_k(\bom)$ may not converge. However, since $\PT \Phi \Lambda f_k$ is bounded in view of $\|\PT \Phi \Lambda f_k\| = \|Y\|$,  there exists a subsequence $\{ f_{k_j}\}$ and some $\theta(f^\star)$ such that $\lim_{k\to\infty} \PT\Phi\Lambda f_{k_j}(\bom) =\PT^\theta \Phi \Lambda f^\star(\bom)$ where $\PT^\theta$ is defined in (\ref{6}). Therefore
\beq
\lim_{j\to\infty}\PO \Lambda^{-1} \Phi^{-1 } \PT \Phi \Lambda f_{k_j} =\PO \Lambda^{-1} \Phi^{-1}\PT^\theta \Phi \Lambda f^\star,
\eeq
namely
$$\lim_{j\to \infty} f_{k_j+1} = \lim_{j\to\infty}
\PO\PF f_{k_j} = \PO\PF^{\theta} f^\star$$
which along with the convergence of $\ep_f(f_k)$ and $f_k$ implies that 
$\vep_f(f^\star) = \vep_f( \PO\mathcal{P}_f^\theta f^\star)$. By Proposition \ref{lemma1}, it follows that
$$f^\star = \PO\mathcal{P}_f^\theta f^\star.$$
\end{itemize}
\end{proof}

\commentout{

\section*{Proof of Theorem \ref{FixedPoint1}}

The $z$-transform of an array $\{f(\bn)\}$ is 
$$F(\bz) = \sum_{\bn} f(\bn) \bz^{-\bn}.$$
According to the fundamental theorem of algebra, $F(\bz)$ may always be written uniquely(to within factors of zero degree) as
$$F(\bz) = \alpha \bz^{-\bn_0} \prod_{k=1}^p F_k(\bz),$$
where $\bn_0$ is a vector of nonnegative integers, $\alpha$ is a complex coefficient, and $F_k(\bz)$ are nontrivial irreducible monic polynomials in $\bz^{-1}$.

\begin{definition}[Conjugate Symmetry]
A polynomial $X(\bz)$ in $\bz^{-1}$ is said to be conjugate symmetric if, for some vector $\bk$ of positive integers and some $\theta \in [0,2\pi)$, 
\begin{equation*}
X_k(\bz) = e^{i \theta} \bz^{-\bk} \overline{X_k(\bar{z}^{-1})}. 
\end{equation*}
\end{definition}

It should be pointed out that a conjugate symmetric polynomial may be reducible, irreducible, trivial, or nontrivial. If $A(\bz)$ is an arbitrary polynomial in $\bz^{-1}$, then 
$$X(\bz) = A(\bz) \cdot \bz^{-\bN} \overline{A(\bar{\bz}^{-1})}$$
is conjugate symmetric. Also, note that any trivial polynomial in $\bz^{-1}$, i.e., $X(\bz) = \eta \bz^{-\bk}$, is conjugate symmetric.

The uniqueness of real-valued signal reconstruction from its magnitude or phase only is discussed in \cite{Hayes}. One can easily generalize it to complex-valued signals.

\begin{proposition}
Let $\{f(\bn)\} \in \mathcal{C}(\cN )$ be a finite complex-valued array whose $z$-transform is irreducible up to a power of $\bz$. If $\{g(\bn)\} \in \mathcal{C}(\cN )$ satisfies $|F(\bom)| = |G(\bom)|$ on $\mathcal{L}$, then $f \sim g$.
\label{HayesMagnitude}
\end{proposition}

\begin{proposition}
Let $\{f(\bn)\} \in \mathcal{C}(\cN )$ be a finite complex-valued array whose $z$-transform has no conjugate symmetric factors. If $\{g(\bn)\} \in \mathcal{C}(\cN )$ satisfies $\measuredangle{F(\bom)} = \measuredangle{G(\bom)}$ on $\mathcal{L}$, then $g = \beta f$ for some real positive number $\beta$.
\label{HayesPhase}
\end{proposition}
\begin{proof}
Consider the array $\{h(\bn)\}$ defined by 
$$h(\bn) = f(\bn) \ \star \  \overline{g(-\bn)} $$
which has $z$-transform
$$H(\bz) = F(\bz)\overline{G(\bar{\bz}^{-1})}.$$
Since the phase of the Fourier transform of $h(\bn)$ is equal to 
$$\measuredangle{H(\bom)} = \measuredangle{F(\bom)} - \measuredangle{G(\bom)},$$
it follows that if $\measuredangle{F(\bom)} = \measuredangle{G(\bom)}$, then $\measuredangle{H(\bom)}=0$. The Fourier transform of $\{h(\bn)\}$ is real, which implies that
$$H(\bz) = \overline{H(\bar{\bz}^{-1})}.$$
Therefore,
\begin{equation}
F(\bz)\overline{G(\bar{\bz}^{-1})} = \overline{F(\bar{\bz}^{-1})}G(\bz).
\label{temeq1}
\end{equation}
Multiplying both sides of \eqref{temeq1} by $\bz^{-\bN}$ results in the following polynomial equation in $\bz^{-1}$:
\begin{equation}
F(\bz)\overline{G(\bar{\bz}^{-1})}\bz^{-\bN} = \overline{F(\bar{\bz}^{-1})}G(\bz)\bz^{-\bN}.
\label{temeq0}
\end{equation}

$F(\bz)$ doesn't have trivial factors and nontrivial conjugate symmetric factors, which implies that
\begin{equation}
F(\bz) = a \prod_k F_k(\bz),
\label{temeq2}
\end{equation}
where $F_k(\bz)$ is some nontrivial irreducible non-conjugate symmetric monic polynomial in $\bz^{-1}$.  Then 
\begin{equation}
\bz^{-\bN} \overline{F(\bar{\bz}^{-1})} = a' \bz^{-m'}\prod_k \tilde{F}_k(\bz),
\label{temeq3}
\end{equation}
where $\tilde{F}_k(\bz)$ is the nontrivial irreducible non-conjugate symmetric monic polynomial in $\bz^{-1}$ such that $\tilde{F}_k(\bz) = \bz^{-\bN+\mathbf{p}}\overline{F_k(\bar{\bz}^{-1})}$ for some vector $\mathbf{p}$ of positive integers. 
Fundamental theorem of algebra yields
\begin{equation}
G(\bz) = b \bz^{-n} \prod_{\ell} G_{\ell}(\bz),
\label{temeq4}
\end{equation}
where $G_\ell(\bz)$ is some nontrivial irreducible monic polynomial in $\bz^{-1}$. Then
\begin{equation}
\bz^{-\bN} \overline{G(\bar{\bz}^{-1})} = b' \bz^{-n'} \prod_\ell \tilde{G}_\ell(\bz),
\label{temeq5}
\end{equation}
where $\tilde{G}_\ell(\bz)$ is the nontrivial irreducible monic polynomial in $\bz^{-1}$ such that $\tilde{G}_\ell(\bz) = \bz^{-\bN + \mathbf{q}}\overline{G_\ell(\bar{\bz}^{-1})}$ for some vector $\mathbf{q}$ of positive integers.

Plugging \eqref{temeq2},\eqref{temeq3}, \eqref{temeq4} and \eqref{temeq5} in \eqref{temeq0} yields
\begin{equation}
a b' \bz^{-n'} \prod_k F_k(\bz) \prod_\ell \tilde{G}_\ell (\bz) = a' b \bz^{-m'-n} \prod_k \tilde{F}_k(\bz) \prod_\ell G_\ell(\bz).
\end{equation}
Now consider an arbitrary nontrivial irreducible factor $F_k(\bz)$. $F_k(\bz)$ must be equal to some $\tilde{F}_{k'}(\bz)$ or some $G_{\ell '}(\bz)$. However, if $F_k(\bz) = \tilde{F}_{k}(\bz)$, then $F_k(\bz)$ is conjugate symmetric. If, on the other hand, $F_k(\bz) = \tilde{F}_{k'}(\bz)$ for some $k' \neq k$, $F_k(\bz)F_{k'}(\bz) = \tilde{F}_{k'}(\bz)F_{k'}(\bz)$ is conjugate symmetric. Both cases, however, are excluded by the theorem hypothesis that the $z$-transform of $\{f(\bn)\}$ doesn't have conjugate symmetric factors. Consequently, each $F_k(\bz)$ must be equal to $G_{\ell '}(\bz)$ for some $\ell '$. $F(\bz)$ and $G(\bz)$ must be related by 
\begin{equation}
G(\bz) = c \bz^{\mathbf{t}} P(\bz) F(\bz) = Q(\bz) F(\bz),
\label{temeq6}
\end{equation}
where $\mathbf{t}$ is an integer-valued vector and $P(\bz)$ is a polynomial in $\bz^{-1}$. However, $G(\bz)$ and $F(\bz)$ are both polynomials in $\bz^{-1}$, and since $F(\bz)$ contains no trivial factors, so $Q(\bz)$ must be a polynomial of $\bz^{-1}$. Furthermore, plugging \eqref{temeq6} in \eqref{temeq1} yields
$$Q(\bz) = \overline{Q(\bar{\bz}^{-1})}.$$
Therefore, $Q(\bz) = \beta$ and the theorem follows by noting that $\beta$ must be positive if $\measuredangle{F(\bom)} = \measuredangle{G(\bom)}$.

\end{proof}

In order to prove Theorem \ref{FixedPoint1}, we introduce the following lemma, which demonstrates that, with certain random illumination, the $z$-transform of $\tilde{f}(\bn) = \lambda(\bn)f(\bn)$ is almost surely not conjugate symmetric.

\begin{lemma}
Let $\{f(\bn)\} \in \mathcal{C}(\cN )$ be a complex-valued array. Let $\{\lambda(\bn)\}$ be independent continuous random variables on $\mathbb{S}^1$,$\CC$, or $\RR$. Then, $\forall \mathbf{t}$, the $z$-transform of $ \{ \lambda(\bt\oplus\bn)f(\bt\oplus\bn)\}$ and $\{ \lambda(\bt\ominus\bn)f(\bt\ominus\bn)\}$ is almost surely not conjugate symmetric.
\label{NonCS}
\end{lemma}

\begin{proof}
Let 
$$\tilde{f}_{\bt+} (\bn) = \lambda(\bt\oplus\bn)f(\bt\oplus\bn)$$
whose $z$-transform is
\begin{equation}
\tilde{F}_{\bt+}(\bz) = \sum_{\bn} \lambda(\bt\oplus\bn)f(\bt\oplus\bn)\bz^{-\bn}.
\label{temeq21}
\end{equation}
$\{\tilde{F}_{\bt+} (\bn)\}$ is conjugate symmetric if 
\begin{equation}
\tilde{F}_{\bt+}(\bz)  = e^{i \theta} \bz^{-\bk} \overline{\tilde{F}_{\bt+}(\bar{z}^{-1})}
\label{temeq22}
\end{equation}
for some vector $\bk$ of positive integers and some $\theta \in [0,2\pi)$. Plugging \eqref{temeq21} in \eqref{temeq22} yields
$$\sum_{\bn} \lambda(\bt\oplus\bn)f(\bt\oplus\bn)\bz^{-\bn} = e^{i\theta}\bz^{-\bk} \sum_{\bn'}\overline{\lambda(\bt\oplus\bn')f(\bt\oplus\bn')}\bz^{\bn'},$$
which implies
\begin{equation}
\lambda(\bt\oplus\bn)f(\bt\oplus\bn) = e^{i\theta} \overline{\lambda(\bt\oplus\bk-\bn)f(\bt\oplus\bk-\bn)}, \ \forall \bn.
\label{temeq23}
\end{equation}
However, $f$ is deterministic, and $\{\lambda(\bn)\}$ are independent continuous random variables, \eqref{temeq23} fails with probability one for any $\bk$. There are finitely many choice of $\bk$, so the $z$-transform of $\{\tilde{f}_{\bt+}\}$ is almost surely not conjugate symmetric.

Similary, the $z$-transform of $\{\tilde{f}_{\bt-}(\bn)\} $ where $\tilde{f}_{\bt-}(\bn)= \lambda(\bt\ominus\bn)f(\bt\ominus\bn)$ is also almost surely not conjugate symmetric.
\end{proof}

\begin{flushleft}
 \textbf{Proof of Theorem \ref{FixedPoint1}}:  
 \end{flushleft}
 
 \begin{proof}
 Given a sequence$\{ X(\bn)\}$, define the sequence
 $$X_{\mathbf{m}+}  \text{ s.t. } X_{\mathbf{m}+}(\bn) = X(\mathbf{m} \oplus \bn), $$
 and 
  $$X_{\mathbf{m}-}  \text{ s.t. } X_{\mathbf{m}-}(\bn) = X(\mathbf{m} \ominus \bn). $$
 Let $h' =  \PF^{\theta}$ which vanishes on the padding. Then
 \begin{equation}
 \left \{ 
 \begin{array}{ll}
 \PO h ' = h \\
 |\Phi \Lambda h'| = |\Phi \Lambda f| \\
 \measuredangle{\Phi \Lambda h'} = \measuredangle{\Phi \Lambda h}
 \end{array}.
 \right.
 \label{temeq11}
 \end{equation}
 The $z$-transform of $\tilde{f} = \Lambda f$ is irreducible, which implies 
 $$\Lambda h' \sim \Lambda f$$
by Proposition \ref{HayesMagnitude}. There exists some integer-valued vector $\mathbf{m}$ and some $\nu \in [0,2\pi)$ such that
$$h' = e^{i \nu} \Lambda^{-1} \Lambda_{\mathbf{m}+} f_{\mathbf{m}+}$$
or 
$$h' = e^{i \nu} \Lambda^{-1} \Lambda_{\mathbf{m}-} f_{\mathbf{m}-}.$$
If $h' = e^{i \nu} \Lambda^{-1} \Lambda_{\mathbf{m}+} f_{\mathbf{m}+}$,
the third equation in \eqref{temeq11} becomes
$$\measuredangle{e^{i\nu}\Phi \Lambda_{\mathbf{m}+} f_{\mathbf{m}+}} = \measuredangle{\Phi \Lambda h}.$$

It's assumed that $\{f(\bn)\}$ doesn't vanish anywhere and the $z$-transform of $\tilde{f} = \Lambda f$ is irreducible, which implies that $\Lambda_{\mathbf{m}+}f_{\mathbf{m}+}$ is irreducible for any $\mathbf{m}$. Meanwhile, $\Lambda_{\mathbf{m}+}f_{\mathbf{m}+}$ is almost surely not  conjugate symmetric. According to Proposition \ref{HayesPhase},
$$\Lambda h = \beta e^{i \nu} \Lambda_{\mathbf{m}+} f_{\mathbf{m}+}$$
for some positive number $\beta$.

$\{f(\mathbf{m}\oplus\bn)\}$ is a real-valued array, and $\lambda(\bn)$ are independent continuous random variables on $\mathbb{S}^1$ or $\CC$, which implies that if $\nu \neq 0,\pi$ or $\mathbf{m} \neq \mathbf{0}$, $h$ is almost surely not real-valued. However, $h = \PO h'$ so $h$ is real-valued. Therefore, $\nu = 0 , \pi$ and $\mathbf{m} = \mathbf{0}$ with probability one, so that
\begin{equation}
h = \pm \beta f.
\label{temeq12}
\end{equation}
Plugging \eqref{temeq12} in the frist equation in \eqref{temeq11} yields
$$\PO\PF(\pm \beta f) = \pm \beta f.$$
While $\PO\PF(\pm \beta f) = \pm \PO\PF f$ and $\PO\PF f = f$,
$$\beta = \pm 1.$$
In summary, $h = f$ with probability one.
 \end{proof}
 
 }

\section{Proof of Theorem \ref{FixedPoint1}}
 Define
 \[
 f_{\mathbf{m}+}(\cdot) = f(\mathbf{m}  + \cdot), \quad  f_{\mathbf{m}-}(\cdot) = f(\mathbf{m} - \cdot). 
 \]

Let 
$$F(\bz) = \sum_{\bn} f(\bn) \bz^{-\bn}$$
be the $z$-transform of $f$.
According to the fundamental theorem of algebra, $F(\bz)$ can   be written uniquely as
$$F(\bz) = \alpha \bz^{-\bn_0} \prod_{k=1}^p F_k(\bz),$$
where $\bn_0$ is a vector of nonnegative integers, $\alpha$ is a complex coefficient, and $F_k(\bz)$ are nontrivial irreducible monic polynomials in $\bz^{-1}$.

\begin{definition}[Conjugate Symmetry]
A polynomial $X(\bz)$ in $\bz^{-1}$ is said to be conjugate symmetric if, for some vector $\bk$ of positive integers and some $\theta \in [0,2\pi)$, 
\begin{equation*}
X(\bz) = e^{i \theta} \bz^{-\bk} \overline{X(\bar{\bz}^{-1})}. 
\end{equation*}
\end{definition}
A conjugate symmetric polynomial may be reducible, irreducible, trivial, or nontrivial. If $A(\bz)$ is an arbitrary polynomial in $\bz^{-1}$, then 
$$X(\bz) = A(\bz) \cdot \bz^{-\bN} \overline{A(\bar{\bz}^{-1})}$$
is conjugate symmetric. Any  monomial $a \bz^{\bk}$ is conjugate symmetric.

The uniqueness of recovering  a real-valued object from its Fourier magnitude or phase only data is discussed in \cite{Hayes}
and can be easily generalized  to the case of complex-valued objects.

\begin{proposition}
Let $f(\bn) \in \mathcal{C}(\cN )$ be a finite array whose $z$-transform is irreducible up to a power of $\bz^{-1}$. If the Fourier transform  $G$ of $g(\bn) \in \mathcal{C}(\cN )$ satisfies $|G(e^{i2pi\bom})|=|F(e^{i2\pi\bom})|, \forall \bom\in \mathcal{L}$, then $\exists\  \theta \in [0,2\pi)$ and $\mathbf{m}$ such that either $g = e^{i\theta} f_{\mathbf{m}+}$ or $g = e^{i\theta}\overline{f_{\mathbf{m}-}}$.
\label{HayesMagnitude}
\end{proposition}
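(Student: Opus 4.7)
The plan is to convert the Fourier magnitude identity into a polynomial identity in $\bz^{-1}$ and then exploit the irreducibility hypothesis via unique factorization in the multivariate polynomial ring $\mathbb{C}[z_1^{-1},\dots,z_d^{-1}]$.

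First, I would verify that the sampling set $\mathcal{L}$ is the Nyquist grid for the correlation function $\mathcal{C}_f$: it has $\prod_j(2N_j+1)$ points, exactly the number of lattice sites on which $\mathcal{C}_f$ is supported. Consequently, the inverse discrete Fourier transform on $\mathcal{L}$ is invertible and the hypothesis $|F(e^{i2\pi\bom})|^2=|G(e^{i2\pi\bom})|^2$ on $\mathcal{L}$ is equivalent to $\mathcal{C}_f=\mathcal{C}_g$. This in turn is equivalent to the Laurent polynomial identity
\[
F(\bz)\,\overline{F(\bar{\bz}^{-1})} \;=\; G(\bz)\,\overline{G(\bar{\bz}^{-1})},
\]
and multiplying through by $\bz^{-\bN}$ produces an honest polynomial identity in $\bz^{-1}$.

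Second, I would factor both sides. Writing $F(\bz)=\alpha\bz^{-\bn_0}F_1(\bz)$ with $F_1$ irreducible and monic, and defining the ``conjugate-reflected'' associate $\tilde F_1(\bz) = \bz^{-\bN+\mathbf{p}}\overline{F_1(\bar{\bz}^{-1})}$ (which is again irreducible and monic, since conjugation and reflection are ring automorphisms up to a monomial), the left-hand side has the unique factorization $|\alpha|^2 F_1(\bz)\tilde F_1(\bz)$ times a monomial. Factoring $G$ analogously as $G(\bz)=\beta\bz^{-\mathbf{m}_0}\prod_\ell G_\ell(\bz)$, unique factorization of multivariate polynomials forces $|\alpha|=|\beta|$ and forces the multiset $\{G_\ell,\tilde G_\ell\}$ to coincide with $\{F_1,\tilde F_1\}$. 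Hence $G$ has exactly one nontrivial irreducible factor, and that factor is either $F_1$ or $\tilde F_1$.

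Third, I would translate the two cases back into the object domain. If the factor is $F_1$, then $G(\bz)=\gamma\bz^{-\mathbf{s}}F(\bz)$ for a unimodular constant $\gamma=\beta/\alpha$ (the magnitude constraint is automatic) and an integer shift $\mathbf{s}$, which corresponds to $g=e^{i\theta}f_{\mathbf{m}+}$ with $\theta=\arg\gamma$ and $\mathbf{m}$ determined by $\mathbf{s}$. If the factor is $\tilde F_1$, a direct computation identifies $G(\bz)$, up to a unimodular constant and a monomial, with $\bz^{-\bN}\overline{F(\bar{\bz}^{-1})}$, which is the $z$-transform of a shift of $\overline{f(\bN-\cdot)}$; this gives $g=e^{i\theta}\overline{f_{\mathbf{m}-}}$. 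The main obstacle I anticipate is the monomial bookkeeping: I must check that the shift $\mathbf{s}$ coming out of the factor-matching is integer-valued and that the resulting $g$ is supported in $\mathcal{N}$, so that the reassembled $G$ indeed lies in the range of the $z$-transform on $\mathcal{C}(\mathcal{N})$. Once this support/integrality check is made, the unimodularity of the residual constant $\gamma$ is forced by $|\alpha|=|\beta|$ and the proof closes.
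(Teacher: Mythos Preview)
Your proposal is correct and is precisely the classical unique-factorization argument due to Hayes. Note that the paper does not supply its own proof of this proposition: it merely cites Hayes and states the result, reserving a detailed argument only for the companion phase-only proposition. Your reconstruction---passing from the sampled magnitude identity on $\mathcal{L}$ to $\mathcal{C}_f=\mathcal{C}_g$, then to the Laurent-polynomial identity $F(\bz)\,\overline{F(\bar{\bz}^{-1})}=G(\bz)\,\overline{G(\bar{\bz}^{-1})}$, and finally matching irreducible factors in $\mathbb{C}[\bz^{-1}]$---is exactly the method the paper is deferring to. The monomial bookkeeping you flag is routine: since $g\in\mathcal{C}(\cN)$ by hypothesis, the resulting shift $\mathbf{m}$ is automatically integer-valued and compatible with the support constraint, and unimodularity of the residual constant follows from $|\alpha|=|\beta|$ as you note.
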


\begin{proposition}
Let $f\in \mathcal{C}(\cN )$ be a finite array whose $z$-transform has no nontrivial  conjugate symmetric factors. If $g \in \mathcal{C}(\cN )$ satisfies $\measuredangle{F(e^{2\pi i \bom})} = \measuredangle{G(e^{2\pi i \bom})}, \forall\bom\in \mathcal{L}$, then $g = \beta f$ for some real positive number $\beta$.
\label{HayesPhase}
\end{proposition}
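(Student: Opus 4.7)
The plan is to follow the classical phase-only uniqueness argument of Hayes, converting the hypothesis on Fourier phases into a polynomial identity on $z$-transforms and then exploiting unique factorization.

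First I would introduce the cross-correlation $h(\bn) = f(\bn) \star \overline{g(-\bn)}$, whose $z$-transform is $H(\bz) = F(\bz)\,\overline{G(\bar{\bz}^{-1})}$ and whose support lies in $\{-\bN,\ldots,\bN\}$. Evaluating on the unit polytorus gives $H(e^{i2\pi\bom}) = F(e^{i2\pi\bom})\,\overline{G(e^{i2\pi\bom})}$, whose phase on $\mathcal{L}$ equals $\measuredangle F(e^{i2\pi\bom}) - \measuredangle G(e^{i2\pi\bom}) = 0$ by hypothesis. Hence $H$ is real on $\mathcal{L}$. Since $h(\bn) - \overline{h(-\bn)}$ is supported in $\{-\bN,\ldots,\bN\}$ and has vanishing DFT on the $(2N_j+1)$-point lattice $\mathcal{L}$, the inverse DFT forces $h(\bn) = \overline{h(-\bn)}$, which translates to the polynomial identity
\begin{equation*}
F(\bz)\,\overline{G(\bar{\bz}^{-1})} = \overline{F(\bar{\bz}^{-1})}\, G(\bz)
\end{equation*}
after clearing the common monomial factor $\bz^{-\bN}$.

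Next I would factor both sides into irreducibles. Using the fundamental theorem of algebra, write
\begin{equation*}
F(\bz) = a\,\bz^{-\mathbf{p}} \prod_k F_k(\bz), \qquad G(\bz) = b\,\bz^{-\mathbf{q}} \prod_\ell G_\ell(\bz),
\end{equation*}
with each $F_k,G_\ell$ a nontrivial irreducible monic polynomial in $\bz^{-1}$. Define $\tilde F_k(\bz) := \bz^{-\bN+\mathbf{p}_k}\overline{F_k(\bar{\bz}^{-1})}$ and similarly $\tilde G_\ell$; these are again nontrivial irreducible monic polynomials. The displayed identity becomes, up to a monomial,
\begin{equation*}
\prod_k F_k(\bz)\prod_\ell \tilde G_\ell(\bz) \;=\; \prod_k \tilde F_k(\bz)\prod_\ell G_\ell(\bz).
\end{equation*}
By unique factorization, each $F_k$ on the left must match either some $\tilde F_{k'}$ or some $G_\ell$ on the right. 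The case $F_k = \tilde F_k$ says $F_k$ is conjugate symmetric, and the case $F_k = \tilde F_{k'}$ with $k\neq k'$ makes $F_k F_{k'}$ conjugate symmetric — both are excluded by the no-nontrivial-conjugate-symmetric-factor hypothesis. Hence every $F_k$ must appear as some $G_\ell$, giving
\begin{equation*}
G(\bz) = Q(\bz)\,F(\bz)
\end{equation*}
for some polynomial $Q$ in $\bz^{-1}$ (possibly times a monomial). Substituting back into the symmetry identity yields $Q(\bz) = \overline{Q(\bar{\bz}^{-1})}$, so $Q$ is itself conjugate symmetric.

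Finally I would reduce $Q$ to a positive real scalar. Because both $f$ and $g$ are supported in $\mathcal{N}$, and $F$ already carries the full degree information of any object in $\mathcal{C}(\cN)$ whose $z$-transform is divisible by the same irreducibles, $Q$ cannot raise the degree: $Q$ must be a trivial (monomial) factor. The support constraint $g\in\mathcal{C}(\cN)$ then forces the monomial shift to be $\bz^{\mathbf 0}$, so $Q$ is a constant $\beta\in\mathbb{C}$. The phase equality $\measuredangle G = \measuredangle F$ on $\mathcal{L}$ now forces $\measuredangle\beta = 0$, i.e.\ $\beta$ is real and positive, completing the proof.

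The main obstacle is the bookkeeping in the irreducible factorization step: one must verify carefully that the monomial factors $\bz^{-\mathbf p}$, $\bz^{-\mathbf q}$, and the $\bz^{-\bN}$ appearing when forming $\tilde F_k$ and $\tilde G_\ell$ balance out consistently, so that the polynomial matching is between genuine nontrivial irreducibles and the excluded-conjugate-symmetric cases truly exhaust the possibilities. A secondary subtlety is the inverse-DFT step that promotes an identity on $\mathcal{L}$ to an identity of arrays; this uses critically that $\mathcal{L}$ contains exactly $(2N_j+1)$ points per dimension, matching the support size of the cross-correlation.
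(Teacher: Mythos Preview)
Your proposal is correct and follows essentially the same approach as the paper: introduce the cross-correlation $h = f \star \overline{g(-\cdot)}$, deduce the polynomial identity $F(\bz)\overline{G(\bar\bz^{-1})} = \overline{F(\bar\bz^{-1})}G(\bz)$, match irreducible factors to show every $F_k$ must appear among the $G_\ell$, and conclude $G = QF$ with $Q$ a positive constant. The only notable difference is in the last reduction: the paper assumes $F$ has no trivial factor (so $Q$ is a polynomial in $\bz^{-1}$) and then observes that $Q(\bz) = \overline{Q(\bar\bz^{-1})}$ immediately forces $Q$ to be constant, whereas you appeal to degree and support constraints --- the paper's route is a bit cleaner and avoids the vague ``$F$ already carries the full degree information'' step.
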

\begin{proof}
Consider the array $h$ defined by 
$$h(\bn) = f(\bn) \ \star \  \overline{g(-\bn)} $$
whose $z$-transform is 
$$H(\bz) = F(\bz)\overline{G(\bar{\bz}^{-1})}.$$
Since the phase of the Fourier transform of $h(\bn)$ is equal to 
$$\measuredangle{H(e^{2\pi i\bom})} = \measuredangle{F(e^{2\pi i\bom})} - \measuredangle{G(e^{2\pi i\bom})},$$
it follows that if $\measuredangle{F(e^{2\pi i\bom})} = \measuredangle{G(e^{2\pi i\bom})}$, then $\measuredangle{H(e^{2\pi i\bom})}=0$. Thus the Fourier transform of $h$ is real-valued, implying  that
$$H(\bz) = \overline{H(\bar{\bz}^{-1})}.$$
Therefore,
\begin{equation}
F(\bz)\overline{G(\bar{\bz}^{-1})} = \overline{F(\bar{\bz}^{-1})}G(\bz).
\label{temeq1}
\end{equation}
Multiplying both sides of \eqref{temeq1} by $\bz^{-\bN}$ results in the following polynomial equation in $\bz^{-1}$:
\begin{equation}
F(\bz)\overline{G(\bar{\bz}^{-1})}\bz^{-\bN} = \overline{F(\bar{\bz}^{-1})}G(\bz)\bz^{-\bN}.
\label{temeq0}
\end{equation}

Since $F(\bz)$ does not  have trivial factors or nontrivial conjugate symmetric factors, we have 
\begin{equation}
F(\bz) = a \prod_k F_k(\bz),
\label{temeq2}
\end{equation}
where $F_k(\bz)$ are  nontrivial irreducible non-conjugate symmetric monic polynomials in $\bz^{-1}$.  Thus 
\begin{equation}
\bz^{-\bN} \overline{F(\bar{\bz}^{-1})} = a' \bz^{-m'}\prod_k \tilde{F}_k(\bz),
\label{temeq3}
\end{equation}
where $\tilde{F}_k(\bz)$ are the nontrivial irreducible non-conjugate symmetric monic polynomials in $\bz^{-1}$ of the form $\tilde{F}_k(\bz) = \bz^{-\bN+\mathbf{p}_k}\overline{F_k(\bar{\bz}^{-1})}$ for some vector $\mathbf{p}_k$ of positive integers. 

Writing 
\begin{equation}
G(\bz) = b \bz^{-n} \prod_{\ell} G_{\ell}(\bz),
\label{temeq4}
\end{equation}
where $G_\ell(\bz)$ are nontrivial irreducible monic polynomials in $\bz^{-1}$, we have
\begin{equation}
\bz^{-\bN} \overline{G(\bar{\bz}^{-1})} = b' \bz^{-n'} \prod_\ell \tilde{G}_\ell(\bz),
\label{temeq5}
\end{equation}
where $\tilde{G}_\ell(\bz)$ are the nontrivial irreducible monic polynomials in $\bz^{-1}$ of the form  $\tilde{G}_\ell(\bz) = \bz^{-\bN + \mathbf{q}_\ell}\overline{G_\ell(\bar{\bz}^{-1})}$ for some vector $\mathbf{q}_\ell$ of positive integers.

Plugging \eqref{temeq2},\eqref{temeq3}, \eqref{temeq4} and \eqref{temeq5} in \eqref{temeq0} yields
\begin{equation}
a b' \bz^{-n'} \prod_k F_k(\bz) \prod_\ell \tilde{G}_\ell (\bz) = a' b \bz^{-m'-n} \prod_k \tilde{F}_k(\bz) \prod_\ell G_\ell(\bz).
\end{equation}
Each nontrivial irreducible factor $F_k(\bz)$ must be equal to some $\tilde{F}_{k'}(\bz)$ or some $G_{\ell '}(\bz)$. However, if $F_k(\bz) = \tilde{F}_{k}(\bz)$, then $F_k(\bz)$ itself is conjugate symmetric. If, on the other hand, $F_k(\bz) = \tilde{F}_{k'}(\bz)$ for some $k' \neq k$, $F_k(\bz)F_{k'}(\bz) = \tilde{F}_{k'}(\bz)F_{k'}(\bz)$ becomes a conjugate symmetric factor. Both cases, however, are excluded by the assumption  that the $z$-transform of $f$ does not  have conjugate symmetric factors. Thus  each $F_k(\bz)$ must be equal to $G_{\ell '}(\bz)$ for some $\ell '$ and $F(\bz)$ so that $G(\bz)$ must be related by 
\begin{equation}
G(\bz)  = Q(\bz) \prod_k F_k(\bz) = \frac{1}{a}Q(\bz) F(\bz).
\label{temeq6}
\end{equation}
However, $G(\bz)$ and $F(\bz)$ are both polynomials in $\bz^{-1}$, and since $F(\bz)$ contains no trivial factors, so $Q(\bz)$ must be a polynomial of $\bz^{-1}$. Furthermore, plugging \eqref{temeq6} in\eqref{temeq1} yields
$$Q(\bz) = \overline{Q(\bar{\bz}^{-1})}.$$
Therefore, $Q(\bz) = \beta$ and the theorem follows by noting that $\beta$ must be positive if $\measuredangle{F(\bom)} = \measuredangle{G(\bom)}$.

\end{proof}

We next show  that  the $z$-transform of $\{\lambda(\bn)f(\bn)\}$ is almost surely irreducible up to a power $\bz^{-1}$ and not conjugate symmetric.  

\begin{lemma}
Let $f\in \mathcal{C}(\cN )$ be a complex-valued array. Let $\{\lambda(\bn)\}$ be independent and continuous random variables on $\mathbb{S}^1$. Then, $\forall \ \mathbf{t} \neq \mathbf{0}$, the $z$-transform of $(\lambda f)_{\mathbf{t}+} $ and $\overline{(\lambda f)_{\mathbf{t}-}}$ is 
almost surely not conjugate symmetric.
\label{NonCS}
\end{lemma}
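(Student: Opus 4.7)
The plan is to write the conjugate symmetry condition as a polynomial identity, extract a system of equations on $\{\lambda(\bn)\}$ by matching coefficients, and argue that the resulting system cuts out a measure-zero event in the joint distribution of the independent random phases. Concretely, if $\tilde{F}_{\bt+}(\bz):=\sum_{\bn}\lambda(\bt+\bn)f(\bt+\bn)\bz^{-\bn}$ is conjugate symmetric, then there exist a vector $\bk$ of nonnegative integers and $\theta\in[0,2\pi)$ such that $\tilde{F}_{\bt+}(\bz)=e^{i\theta}\bz^{-\bk}\,\overline{\tilde{F}_{\bt+}(\bar{\bz}^{-1})}$; matching coefficients of $\bz^{-\bn}$ yields
\[
\lambda(\bt+\bn)\,f(\bt+\bn) \;=\; e^{i\theta}\,\overline{\lambda(\bt+\bk-\bn)\,f(\bt+\bk-\bn)}, \qquad \forall\,\bn.
\]
This already forces $|f(\bt+\bn)|=|f(\bt+\bk-\bn)|$ as a purely deterministic condition on $f$ (whose failure finishes the proof for that $\bk$) and, writing $\lambda=e^{i\phi}$, an identity of the form $\phi(\bt+\bn)+\phi(\bt+\bk-\bn)\equiv \theta+c(\bn,\bk)\pmod{2\pi}$ for a deterministic constant $c(\bn,\bk)$.

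Next I would invoke the rank $\geq 2$ condition carried over from the main theorem to guarantee that the shifted support contains at least three non-collinear pixels, hence under the reflection $\bn\mapsto \bk-\bn$ either two distinct off-center pairs occur, or one off-center pair occurs together with a self-paired point at $\bn=\bk/2$. In the two-pair case, subtracting the two corresponding phase identities eliminates $\theta$ and leaves a single linear constraint on four independent, uniformly distributed random phases, which defines a three-torus inside a four-torus and therefore has measure zero. In the pair-plus-fixed-point case, the self-paired identity forces $e^{i\theta}$ to equal $\lambda(\bt+\bk/2)f(\bt+\bk/2)\big/\overline{\lambda(\bt+\bk/2)f(\bt+\bk/2)}$, so $\theta$ is a deterministic function of $\lambda(\bt+\bk/2)$; substituting back into the pair identity yields a nontrivial linear relation among three independent phases, again of measure zero.

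Since the admissible shift vectors $\bk$ are bounded in size by the bi-degree of $\tilde{F}_{\bt+}$, only finitely many candidates need be considered, and a finite union of measure-zero events remains measure zero. The assertion for $\overline{(\lambda f)_{\bt-}}$ follows by an entirely parallel argument: its $z$-transform is $\sum_\bn \overline{\lambda(\bt-\bn)f(\bt-\bn)}\bz^{-\bn}$, and coefficient matching in the corresponding conjugate-symmetry identity produces phase relations of identical form, with independence and continuity of the underlying $\lambda$'s preserved under the index reflection $\bn\mapsto -\bn$ and complex conjugation.

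The main obstacle is the bookkeeping around the free parameter $\theta$: naive coefficient comparison alone does not immediately give a $\theta$-free constraint on the $\phi(\bn)$'s, so the argument must explicitly split into the "two distinct pairs" and "pair plus self-paired point" subcases and check that each subcase forces a positive-codimension condition on the independent random phases. This is where the rank $\geq 2$ hypothesis inherited from the main theorem is essential, since it guarantees enough support points to produce a $\theta$-eliminating relation.
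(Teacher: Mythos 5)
Your proposal is correct and follows essentially the same route as the paper: write the conjugate-symmetry identity, match coefficients of $\bz^{-\bn}$ to obtain $\lambda(\bt+\bn)f(\bt+\bn) = e^{i\theta}\overline{\lambda(\bt+\bk-\bn)f(\bt+\bk-\bn)}$, argue this holds with probability zero, and take a finite union over the admissible $\bk$. In fact you are more careful than the paper, which simply asserts that this identity ``fails with probability one'' without eliminating the free parameter $\theta$ or noting that the rank $\geq 2$ hypothesis (equivalently, at least three support points) is what rules out the degenerate single-pair and monomial cases where the statement would actually be false; your two-subcase analysis supplies exactly the missing justification.
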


\begin{proof}
Let 
$$\tilde{f}_{\bt+} (\bn) = \lambda(\bt+\bn)f(\bt+\bn)$$
whose $z$-transform is
\begin{equation}
\tilde{F}_{\bt+}(\bz) = \sum_{\bn} \lambda(\bt+\bn)f(\bt+\bn)\bz^{-\bn}.
\label{temeq21}
\end{equation}
$\tilde{F}_{\bt+} (\bz)$ is conjugate symmetric if 
\begin{equation}
\tilde{F}_{\bt+}(\bz)  = e^{i \theta} \bz^{-\bk} \overline{\tilde{F}_{\bt+}(\bar{\bz}^{-1})}
\label{temeq22}
\end{equation}
for some vector $\bk$ of positive integers and some $\theta \in [0,2\pi)$. Plugging \eqref{temeq21} in \eqref{temeq22} yields
$$\sum_{\bn} \lambda(\bt+\bn)f(\bt+\bn)\bz^{-\bn} = e^{i\theta}\bz^{-\bk} \sum_{\bn'}\overline{\lambda(\bt+\bn')f(\bt+\bn')}\bz^{\bn'},$$
which implies
\begin{equation}
\lambda(\bt+\bn)f(\bt+\bn) = e^{i\theta} \overline{\lambda(\bt+\bk-\bn)f(\bt+\bk-\bn)}, \ \forall \bn.
\label{temeq23}
\end{equation}
However, $f$ is deterministic, and $\lambda(\bn)$ are independent and continuous random variables, so \eqref{temeq23} fails with probability one for any $\bk$. There are finitely many choices of $\bk$, so the $z$-transform of $(\lambda f)_{\mathbf{t}+}$ is almost surely not conjugate symmetric.

Similarly, the $z$-transform of $\overline{(\lambda f)_{\mathbf{t}-}}$ is also almost surely {\em not} conjugate symmetric.
\end{proof}

\begin{lemma}
Let $f\in \mathcal{C}(\cN )$ be a complex-valued array of rank $\ge 2$. Let $\{\lambda(\bn) \}$ be independent and continuous random variables on $\mathbb{S}^1$. Then, the $z$-transform of $\{\lambda(\bn)f(\bn)\}$ is irreducible up to a power of $\bz^{-1}$ with probability one.
\label{NonIR}
\end{lemma}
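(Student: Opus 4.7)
My plan is to establish the lemma by a standard algebraic-geometry plus measure-theory argument: the reducibility locus in coefficient space is a proper algebraic subvariety, which is almost surely avoided by the absolutely continuous random perturbation. Let $T = \{\bn : f(\bn) \neq 0\}$; the rank-$\geq 2$ hypothesis gives $\mathrm{conv}(T)$ of affine dimension $\geq 2$ in $\mathbb{R}^d$. Almost surely, the support of $\tilde{F}(\bz) = \sum_{\bn\in T} \lambda(\bn) f(\bn) \bz^{-\bn}$ equals $T$. Suppose $\tilde{F}$ is reducible up to a power of $\bz^{-1}$, so $\bz^{\bn_0}\tilde{F}(\bz) = P(\bz) Q(\bz)$ for some $\bn_0$ and polynomials $P, Q$ in $\bz^{-1}$, neither a monomial. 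Then the Newton polytopes decompose as $N(\bz^{\bn_0}\tilde{F}) = N(P) + N(Q)$, and this sum has dimension $\geq 2$.

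For each pair of lattice polytopes $(A, B) \subset \mathbb{Z}_{\geq 0}^d$ with $A + B \supseteq \mathrm{supp}(\bz^{\bn_0}\tilde F)$ — only finitely many arise, given the bounds on $T$ — consider the polynomial-multiplication map $\mu_{A,B}\colon \mathbb{C}^{|A|} \times \mathbb{C}^{|B|} \to \mathbb{C}^{|A+B|}$, $(p,q) \mapsto pq$, where $|A|$ denotes the number of lattice points in $A$. Because of the one-parameter scaling symmetry $(p,q) \mapsto (t p, t^{-1} q)$, the image $\mathrm{Im}(\mu_{A,B})$ is contained in an irreducible algebraic subvariety $V_{A,B} \subset \mathbb{C}^{|A+B|}$ of complex dimension at most $|A| + |B| - 1$.

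The key dimension count is a Freiman-type inequality: for finite sets $A, B \subset \mathbb{Z}^d$ whose Minkowski sum $A+B$ is not contained in an affine line, $|A+B| \geq |A| + |B|$; this strengthens the general bound $|A+B| \geq |A|+|B|-1$ precisely when the two-dimensional structure is present. Since $N(\bz^{\bn_0}\tilde F)$ has dimension $\geq 2$, the inequality applies and $V_{A,B}$ is a proper subvariety of $\mathbb{C}^{|A+B|}$. Taking the union over the finitely many admissible pairs $(A, B)$, the full reducibility locus is a proper algebraic subvariety $V \subset \mathbb{C}^T$.

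Finally, since $\{\lambda(\bn)\}_{\bn \in T}$ are independent continuous random variables on $\mathbb{S}^1$, the joint distribution of $\{\lambda(\bn) f(\bn)\}_{\bn \in T}$ is absolutely continuous on the product torus $\prod_{\bn \in T}\mathbb{S}^1_{|f(\bn)|}$, which is a maximal totally real submanifold of $\mathbb{C}^T$. A proper complex-algebraic subvariety of $\mathbb{C}^T$ cannot contain this torus — otherwise its defining polynomials, vanishing on a totally real submanifold of full real dimension, would vanish on all of $\mathbb{C}^T$ by analytic continuation — so $V$ meets the torus in a proper real-analytic subset of measure zero. Hence $\tilde F$ is almost surely irreducible up to a power of $\bz^{-1}$. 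The main obstacle I expect is verifying the Freiman-type inequality $|A+B| \geq |A|+|B|$ when $A+B$ has affine dimension $\geq 2$, particularly the boundary case in which both $A$ and $B$ sit on affine lines but in transverse directions; this requires a short case analysis distinguishing whether one of $A$, $B$ is itself two-dimensional or whether both are one-dimensional with non-parallel spans.
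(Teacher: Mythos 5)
You should first be aware that the paper does not actually prove this lemma: it defers entirely to Theorem 2 of \cite{UniqueRI}, so any self-contained argument is necessarily a different route from the paper's. Your overall strategy --- show the reducibility locus is a proper algebraic subvariety of the coefficient space $\CC^T$, then observe that the random phases place the coefficient vector on the torus $\prod_{\bn\in T}\{|c_\bn|=|f(\bn)|\}$, a maximal totally real submanifold that a proper subvariety can only meet in a set of measure zero --- is sound in outline, and your final analytic-continuation step is correct; it is exactly what is needed to cope with the fact that only the phases, and not the moduli, of the coefficients are randomized.

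There is, however, a genuine gap in the dimension count. Writing $\bz^{\bn_0}\tilde{F}=PQ$ with $A=\supp(P)$ and $B=\supp(Q)$, the coefficient vector of $\bz^{\bn_0}\tilde{F}$ lies in the linear subspace of $\CC^{A+B}$ consisting of vectors that vanish on $(A+B)\setminus T'$ (where $T'$ is the translate of $T$), because cancellation can make $\supp(PQ)$ a proper subset of $A+B$. Your Freiman-type inequality yields $\dim V_{A,B}\le |A|+|B|-1<|A+B|$, i.e.\ $V_{A,B}$ is proper in $\CC^{A+B}$; but what the argument requires is that $V_{A,B}$ not contain the $|T|$-dimensional subspace $\CC^{T'}$, and a variety of dimension $|A|+|B|-1$ can perfectly well contain a linear subspace of dimension $|T|\le |A|+|B|-1$. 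Concretely, with $u=z_1^{-1}$, $v=z_2^{-1}$, take $P=(1+u)(1+v)$ and $Q=(1-u)(1-v)$: then $A=B=\{0,1\}^2$, $PQ$ is supported on $T'=\{(0,0),(2,0),(0,2),(2,2)\}$, and $|T'|=4\le 7=|A|+|B|-1$, so your count says nothing about whether $\CC^{T'}\subset V_{A,B}$. (It is not contained --- a direct computation shows generic polynomials with this support are irreducible --- but that fact needs an argument your proof does not supply.) A secondary, easily repaired point: as stated, ``$|A+B|\ge |A|+|B|$ whenever $A+B$ is not collinear'' is false when one of $A$, $B$ is a singleton, so the no-monomial-factor hypothesis must be invoked before the inequality, not merely noted elsewhere. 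To close the main gap you must control the support of the \emph{product} rather than the sumset of the factor supports --- for instance by exploiting that coefficients at the vertices of $\mathrm{conv}(A)+\mathrm{conv}(B)$ cannot cancel --- or adopt a different irreducibility criterion altogether, as is done in \cite{UniqueRI}.
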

For the proof of Lemma \ref{NonIR} see Theorem 2 of \cite{UniqueRI}. 

\begin{lemma}
Let $f$ and $ h$ be two complex-valued arrays. Let $\Phi$ be the discrete Fourier  operator such that $\Phi f(\bom) = \sum_{\mathbf{k}} e^{-2\pi i \bom \cdot \mathbf{k}} f(\mathbf{k})$. Then $\measuredangle{\Phi f_{\mathbf{t}+}} = \measuredangle{\Phi h}$ implies that $\measuredangle{\Phi f} = \measuredangle{\Phi h_{(-\mathbf{t})+}}$.
\label{angle}
\end{lemma}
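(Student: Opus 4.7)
The plan is to reduce the claim to the standard shift property of the discrete Fourier transform. By the definition of $\Phi$ and $f_{\mathbf{t}+}$, I compute
\[
\Phi f_{\mathbf{t}+}(\bom) = \sum_{\mathbf{k}} e^{-2\pi i \bom \cdot \mathbf{k}} f(\mathbf{t}+\mathbf{k}) = e^{2\pi i \bom \cdot \mathbf{t}} \sum_{\mathbf{j}} e^{-2\pi i \bom \cdot \mathbf{j}} f(\mathbf{j}) = e^{2\pi i \bom \cdot \mathbf{t}}\, \Phi f(\bom),
\]
after the reindexing $\mathbf{j}=\mathbf{t}+\mathbf{k}$. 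The pure-phase factor $e^{2\pi i\bom\cdot\mathbf{t}}$ shifts the angle by $2\pi\bom\cdot\mathbf{t}$ modulo $2\pi$.

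Next I would take the phase of both sides, yielding
\[
\measuredangle{\Phi f_{\mathbf{t}+}(\bom)} \;=\; \bigl[\measuredangle{\Phi f(\bom)} + 2\pi\bom\cdot\mathbf{t}\bigr],
\]
with $[\,\cdot\,]$ denoting reduction mod $2\pi$; on the zero set of $\Phi f$ the identity holds trivially since both sides are defined to be $0$. Applying the same shift identity with $\mathbf{t}$ replaced by $-\mathbf{t}$ and $f$ replaced by $h$ gives
\[
\measuredangle{\Phi h_{(-\mathbf{t})+}(\bom)} \;=\; \bigl[\measuredangle{\Phi h(\bom)} - 2\pi\bom\cdot\mathbf{t}\bigr].
\]

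Now I combine the two formulas. The hypothesis $\measuredangle{\Phi f_{\mathbf{t}+}(\bom)} = \measuredangle{\Phi h(\bom)}$ rewrites, via the first identity, as $\measuredangle{\Phi h(\bom)} = [\measuredangle{\Phi f(\bom)} + 2\pi\bom\cdot\mathbf{t}]$; subtracting $2\pi\bom\cdot\mathbf{t}$ modulo $2\pi$ and applying the second identity yields exactly $\measuredangle{\Phi f(\bom)} = \measuredangle{\Phi h_{(-\mathbf{t})+}(\bom)}$, which is the desired conclusion. There is no serious obstacle here; the only minor bookkeeping point is handling frequencies $\bom$ where $\Phi f$ or $\Phi h$ vanishes, but the convention $\measuredangle 0 = 0$ makes these cases consistent on both sides because multiplication by a unit-modulus scalar preserves the zero set.
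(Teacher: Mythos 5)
Your proof is correct and follows essentially the same route as the paper's: both rest on the shift identity $\Phi f_{\mathbf{t}+}(\bom) = e^{2\pi i \mathbf{t}\cdot\bom}\,\Phi f(\bom)$ and then track the phase shift by $2\pi\bom\cdot\mathbf{t}$ modulo $2\pi$ in both directions. You are if anything slightly more explicit than the paper about the $\measuredangle 0 = 0$ convention on the zero set, which the paper's proof does not discuss at all.
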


\begin{proof}
Note that \[
 \Phi f_{\mathbf{t}+}  (\bom) = e^{2\pi i \mathbf{t} \cdot \bom} \Phi f(\bom)
 \]
 which implies
 \[
 2\pi \mathbf{t} \cdot \bom + \measuredangle{\Phi f (\bom)}  \ (\text{mod } 2\pi) = \measuredangle{\Phi h(\bom)}
 \]
 by the assumption $\measuredangle{\Phi f_{\mathbf{t}+}} = \measuredangle{\Phi h}$. Thus
\[ \measuredangle{\Phi f(\bom)} = \measuredangle{\Phi h(\bom)} - 2\pi \mathbf{t}\cdot \bom \ (\text{mod } 2\pi)
\]
which is equivalent to
\[
\measuredangle{\Phi f} = \measuredangle{\Phi h_{(-\mathbf{t})+}}.
\]
\end{proof}

Let us now turn to the proof of Theorem \ref{FixedPoint1}.

 \begin{proof}
 
  Let $f$ be the true image and $h$ a fixed point of the ER iteration. Suppose that $h' =  \PF^{\theta} h$ satisfies the zero-padding condition. Then the following three equations hold:
\beq
 \PO h ' &= &h \\
 |\Phi \Lambda h'| &=& |\Phi \Lambda f| \\
 \measuredangle{\Phi \Lambda h'} &= &\measuredangle{\Phi \Lambda h}
 \label{temeq11}
 \eeq
According to Lemma \ref{NonIR}, the $z$-transform of $\Lambda f$ is irreducible up to a power of $\bz^{-1}$ with probability one, so there exists some integer-valued vector $\mathbf{m}$ with $-\bN \le \mathbf{m} \le \mathbf{0}$ and some $\nu \in [0,2\pi)$ such that
$$h' = e^{i \nu} \Lambda^{-1} \Lambda_{\mathbf{m}+} f_{\mathbf{m}+}$$
or 
$$h' = e^{i \nu} \Lambda^{-1} \overline{\Lambda_{\mathbf{m}-} f_{\mathbf{m}-}}.$$

In the case of $h' = e^{i \nu} \Lambda^{-1} \Lambda_{\mathbf{m}+} f_{\mathbf{m}+}$,
the third equation in \eqref{temeq11} becomes
$$\measuredangle{e^{i\nu}\Phi \Lambda_{\mathbf{m}+} f_{\mathbf{m}+}} = \measuredangle{\Phi \Lambda h}.$$

By Lemma \ref{angle},
\begin{equation}
\measuredangle{e^{i\nu} \Phi \Lambda f} = \measuredangle{\Phi \Lambda_{(-\mathbf{m})+}h_{(-\mathbf{m})+}}.
\label{temeq111}
\end{equation}
Lemma \ref{NonCS} and \ref{NonIR}, together with the assumption that $f(\mathbf{0}) \neq 0$ imply that the $\bz$-transform of $\Lambda f$ is an irreducible,  nontrivial and non-conjugate symmetric polynomial of $\bz^{-1}$ with probability one. 

Next, we apply Proposition \ref{HayesPhase} to \eqref{temeq111}. Both $\Lambda f$ and $\Lambda_{(-\mathbf{m})+}h_{(-\mathbf{m})+}$ are supported on a subset of $\{\bn \ | \ -\bN \le \bn \le \bN \}$. By Proposition \ref{HayesPhase}, we obtain
%with oversampling rate $\sigma \ge 4^d$, equivalent to oversampling on $\cL$, we obtain
$$\gamma e^{i\nu} \Lambda f = \Lambda_{(-\mathbf{m})+}h_{(-\mathbf{m}-)+}$$
or equivalently 
$$h(\bn) = \gamma e^{i\nu} \frac{\lambda(\bn+\mathbf{m})}{\lambda(\bn)} f(\bn+\mathbf{m})$$
for some positive number $\beta$.
\commentout{
 When applying Proposition   \ref{HayesPhase} to \eqref{temeq111}, we need oversampling rate $\sigma \ge 4^d$. However, thanks to the Shannon Sampling Theorem, Fourier magnitude measurement with oversampling rate $\sigma \ge 2^d$ renders the correlation function \eqref{autof} and furthermore all the Fourier magnitude measurement. 
 }

\begin{description}
\item[\rm (a)] If the true image $f(\cdot)$ is real-valued, then  $h = \PO h'$  is real-valued, which by the proof of Theorem \ref{UniqueReal} (see Corrolary 1 of \cite{UniqueRI}) implies that 
 $\nu = 0 , \pi$ and $\mathbf{m} = \mathbf{0}$ 
 or equivalently
\beq
h = \pm \gamma f
\label{temeq112}
\eeq
with probability one. 
Plugging \eqref{temeq112} in $\PO \PF h = h$ yields
$\gamma =  1$
and thus  $h = \pm f$ with probability one. 

\item[\rm (b)] If $f$ satisfies the sector condition of Theorem \ref{UniqueComplexPositive}, then   $h = \PO h'$ satisfies the same sector condition which by the proof of Theorem \ref{UniqueComplexPositive} (see Theorem 4 (i) of \cite{UniqueRI})
implies that $\mathbf{m} = \mathbf{0}$  and 
\begin{equation}
h = \gamma e^{i \nu}f
\label{temeq113}
\end{equation}
with probability at least $1- |\mathcal{N}| (\beta-\alpha)^{\llfloor S/2 \rrfloor} (2\pi)^{-\llfloor S/2\rrfloor}$. 
Plugging \eqref{temeq113} in $\PO \PF h = h$ yields $\gamma = 1$
and thus $h = e^{i \nu} f$.

\end{description}

By the similar argument one reaches the same conclusion in the case of $h'=e^{i \nu}\Lambda^{-1}\Lambda_{\mathbf{m}-}f_{\mathbf{m}-}$.

 \end{proof}

\bibliographystyle{unsrt}

\end{document}